\renewcommand*\to[1][]{\ensuremath{\xrightarrow{#1}}}
\newcommand*\Int{\mathbbm{Z}}
\newcommand*\Nat{\mathbbm{N}}
\newcommand*\Rat{\mathbbm{Q}}
\newcommand*\Natp{\Nat_+}
\newcommand*\Real{\mathbbm{R}}
\newcommand*\Realnn{\Real_{ \ge 0}}
\newcommand*\id{\textup{\textsf{id}}}
\newcommand*\bigmid{\mathrel{\big|}}
\newcommand*\ie{\textit{i.e.}\xspace}
\newcommand*\cf{\textit{cf.}\xspace}
\newcommand*\eg{\textit{e.g.}\xspace}
\newcommand*\IMPL{\Rightarrow}
\newcommand*\TRUE{\textup{\textbf{t\!t}}}
\newcommand*\FALSE{\textup{\textbf{ff}}}
\newcommand*\rest[1]{_{\upharpoonleft #1}}
\newcommand{\state}{\textup{\textsf{state}}}
\newcommand{\val}{\textup{\textsf{val}}}
\newcommand*\E{\mathcal E}
\newcommand*\Eint{\E_\textup{\textsf{int}}}
\newcommand*\Eflat{\bar\E}
\newcommand*\Epw{\E_{\textup{\textsf{pw}}}}
\newcommand*\Epwi{\E_{\textup{\textsf{pwi}}}}
\newcommand*\Eflatpw{\Eflat_{\textup{\textsf{pw}}}}
\newcommand*\Eflatpwi{\Eflat_{\textup{\textsf{pwi}}}}
\newcommand*\pbuchi{\textup{\textsf{B\"uchi}}}
\newcommand*\preach{\textup{\textsf{Reach}}}
\newcommand*\transpose[1]{{\vphantom{#1}}_{\textup{\textsf{t}}}{#1}}
\newcommand*\bbot{%
  \hbox{%
    \hbox to -.475pt{%
      \raisebox{1.5pt}[0pt][0pt]{%
        \rule{6.4pt}{.4pt}%
      }%
      \hss}%
    $\bot$%
  }}
\newcommand*\ttop{%
  \hbox{%
    \hbox to -.475pt{%
      \raisebox{5pt}[0pt][0pt]{%
        \rule{6.4pt}{.4pt}%
      }%
      \hss}%
    $\top$%
  }}
\newcommand*\V{\mathcal V}
\newcommand*\B{\mathcal B}
\begin{document}

\title{Kleene Algebras and Semimodules \\ for Energy Problems}
\titlerunning{Kleene Algebras and Semimodules for Energy Problems}

\author{%
  Zolt{\'a}n {\'E}sik\inst1\thanks{The reserach of this author is
    supported by the European Union and co-funded by the European Social
    Fund. Project title: `Telemedicine-focused research activities on
    the field of mathematics, informatics and medical sciences'.
    Project number: T\'AMOP-4.2.2.A-11/1/KONV-2012-0073} \and Uli
  Fahrenberg \inst2 \and Axel Legay \inst2 \and Karin Quaas \inst3 }
\authorrunning{{\'E}sik, Fahrenberg, Legay, Quaas} \institute{
  University of Szeged, Hungary \and Irisa / INRIA Rennes, France \and
  Universit\"at Leipzig, Germany }
\maketitle

\begin{abstract}
  With the purpose of unifying a number of approaches to energy problems
  found in the literature, we introduce generalized energy automata.
  These are finite automata whose edges are labeled with energy
  functions that define how energy levels evolve during transitions.
  Uncovering a close connection between energy problems and reachability
  and B{\"u}chi acceptance for semiring-weighted automata, we show that
  these generalized energy problems are decidable.  We also provide
  complexity results for important special cases.
\end{abstract}

\section{Introduction}

\emph{Energy} and \emph{resource management} problems are important in
areas such as embedded systems or autonomous systems.  They are
concerned with the question whether a given system admits infinite
schedules during which (1) certain tasks can be repeatedly accomplished
and (2) the system never runs out of energy (or other specified
resources).  Starting with~\cite{DBLP:conf/formats/BouyerFLMS08}, formal
modeling and analysis of such problems has recently attracted some
attention~\cite{conf/ictac/FahrenbergJLS11, DBLP:conf/lata/Quaas11,
  DBLP:conf/icalp/ChatterjeeD10, DBLP:conf/hybrid/BouyerFLM10,
  DBLP:conf/qest/BouyerLM12, DBLP:conf/csl/DegorreDGRT10}.

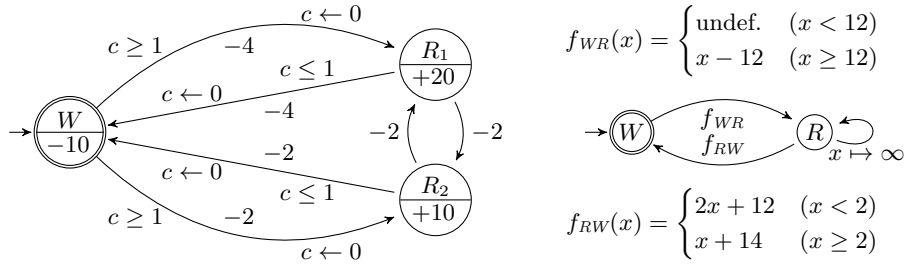
\begin{figure}[tbp]
  \centering
  \begin{tikzpicture}[->,>=stealth',shorten >=1pt,auto,node
    distance=2.0cm,initial text=,scale=.9]
    \tikzstyle{every node}=[font=\small]
    \tikzstyle{every state}=[inner sep=.5mm,minimum size=2.2mm,outer
    sep=.8mm]
    \begin{scope}[xscale=.9]
      \node[state with output,initial,accepting] (W) at (0,0) {$W$
        \nodepart{lower} $\!-10$};
      \node[state with output] (R1) at (6,1) {$R_1$ \nodepart{lower}
        $\!+20$};
      \node[state with output] (R2) at (6,-1) {$R_2$ \nodepart{lower}
        $\!+10$};
      \path (W) edge [bend left] node [below] {$-4$} node
      [above,pos=.2] {$c\ge 1\,\,\,\,\,\,\,\,$} node [above,pos=.8]
      {$c\leftarrow 0$} (R1);
      \path (R1) edge node [below,pos=.4] {$-4$} node
      [above,pos=.3] {$c\le 1$} node [above,pos=.7] {$c\leftarrow 0$}
      (W);
      \path (W) edge [bend right] node [above] {$-2$} node
      [below,pos=.2] {$c\ge 1\,\,\,\,\,\,\,\,$} node [below,pos=.8]
      {$c\leftarrow 0$} (R2);
      \path (R2) edge node [above,pos=.4] {$-2$} node
      [below,pos=.3] {$c\le 1$} node [below,pos=.7] {$c\leftarrow 0$}
      (W);
      \path (R2) edge [bend left] node [left] {$-2$} (R1);
      \path (R1) edge [bend left] node [right] {$-2$} (R2);
    \end{scope}
    \begin{scope}[xshift=8.3cm,scale=.9]
      \node[state,initial,accepting] (W) at (0,0) {$W$};
      \node[state] (R) at (3,0) {$R$};
      \path (W) edge [bend left] node [below] {$f_\textit{WR}$} (R);
      \path (R) edge [bend left] node [above] {$f_\textit{RW}$} (W);
      \path (R) edge [out=-20,in=20,loop] node [below,pos=.3] {$x\mapsto
        \infty$} (R);
      \node at (1.5,1.5) {%
        $f_\textit{WR}( x) =
        \begin{cases}
          \text{undef.} & ( x< 12) \\
          x- 12 & ( x\ge 12)
        \end{cases}$};
      \node at (1.5,-1.5) {%
        $f_\textit{RW}( x) =
        \begin{cases}
          2 x+ 12 & ( x< 2) \\
          x+ 14 & ( x\ge 2)
        \end{cases}$};
    \end{scope}
  \end{tikzpicture}
  \caption{%
    \label{fi:example}
    Simple model of an electric car as a weighted timed automaton (left);
    the corresponding energy automaton (right)}
\end{figure}

As an example, the left part of Fig.~\ref{fi:example} shows a simple
model of an electric car, modeled as a weighted timed
automaton~\cite{DBLP:conf/hybrid/AlurTP01,
  DBLP:conf/hybrid/BehrmannFHLPRV01}.  In the \emph{working} state $W$,
energy is consumed at a rate of $10$ energy units per time unit; in the
two \emph{recharging} states $R_1$, $R_2$, the battery is charged at a
rate of $20$, respectively $10$, energy units per time unit.  As the
clock $c$ is reset ($c\leftarrow 0$) when entering state $W$ and has
guard $c\ge 1$ on outgoing transitions, we ensure that the car always
has to be in state $W$ for at least one time unit.  Similarly, the
system can only transition back from states $R_1$, $R_2$ to $W$ if it
has spent at most one time unit in these states.

Passing between states $W$ and $R_1$ requires $4$ energy units, while
transitioning between $W$ and $R_2$, and between $R_2$ to $R_1$,
requires $2$ energy units.  Altogether, this is intended to model the
fact that there are two recharge stations available, one close to work
but less powerful, and a more powerful one further away.  Now assume
that the initial state $W$ is entered with a given \emph{initial energy}
$x_0$, then the energy problem of this model is as follows: Does there
exist an infinite trace which (1) visits $W$ infinitely often and (2)
never has an energy level below $0$?

This type of energy problems for weighted timed automata is treated
in~\cite{DBLP:conf/hybrid/BouyerFLM10}, and using a reduction like
in~\cite{DBLP:conf/hybrid/BouyerFLM10}, our model can be transformed to
the \emph{energy automaton} in the right part of Fig.~\ref{fi:example}.
(The reduction is quite complicated and only works for one-clock timed
automata; see~\cite{DBLP:conf/hybrid/BouyerFLM10} for details.)  It can
be shown that the energy problem for the original automaton is
equivalent to the following problem in the energy automaton: Given an
initial energy $x_0$, and updating the energy according to the
transition label whenever taking a transition, does there exist an
infinite run which visits $W$ infinitely often?  Remark that the energy
update on the transition from $R$ to $W$ is rather complex (in the
general case of $n$ recharge stations, the definition of $f_\textit{RW}$
can have up to $n$ branches), and that we need to impose a B{\"u}chi
condition to enforce visiting $W$ infinitely often.

In this paper we propose a generalization of the energy automata
of~\cite{DBLP:conf/hybrid/BouyerFLM10} which also encompasses most other
approaches to energy problems.  Abstracting the properties of the
transition update functions in our example, we define a general notion
of \emph{energy functions} which specify how weights change from one
system state to another.  Noticing that our functional energy automata
are semiring-weighted automata in the sense of~\cite{book/DrosteKV09},
we uncover a close connection between energy problems and reachability
and B{\"u}chi problems for weighted automata.  More precisely, we show
that one-dimensional energy problems can be naturally solved using
matrix operations in semirings and
semimodules~\cite{book/DrosteKV09,BEbook,EK2,DBLP:conf/mfcs/EsikK04}.

For reachability, we use only standard results~\cite{book/DrosteKV09},
but for B{\"u}chi acceptance we have to extend previous
work~\cite{EK2,DBLP:conf/mfcs/EsikK04} as our semiring is not complete.
We thus show that reachability and B{\"u}chi acceptance are decidable
for energy automata.  For the class of \emph{piecewise affine} energy
functions, which generalize the functions of Fig.~\ref{fi:example} and
are important in applications, they are decidable in exponential time.
% , and we also provide a PSPACE lower bound.

\smallskip\noindent\textit{Structure of the Paper.}\quad %
We introduce our general model of energy automata in
Section~\ref{se:energyaut}.  In Section~\ref{se:esemiring} we show that
the set of energy functions forms a star-continuous Kleene algebra, a
fact which allows us to give an elegant characterization of reachability
in energy automata.  We also expose a structure of Conway
semiring-semimodule pair over energy functions which permits to
characterize B{\"u}chi acceptance.  In Section~\ref{se:reach} we use
these characterizations to prove that reachability and B{\"u}chi
acceptance are decidable for energy automata.  We also show that this
result is applicable to most of the above-mentioned examples and give
complexity bounds.  To put our results in perspective, we generalize
energy automata along several axes in Section~\ref{se:multi} and analyze
these generalized reachability and B{\"u}chi acceptance problems.
% Owing to space limitations, most of the proofs had to be omitted from
% this paper; these can be found in the extended
% version~\cite{arxiv/EsikFLQ13}.

\smallskip\noindent\textit{Related Work.}\quad %
A simple class of energy automata is the one of \emph{integer-weighted
  automata}, where all energy functions are updates of the form
$x\mapsto x+ k$ for some (positive or negative) integer $k$.  Energy
problems on these automata, and their extensions to multiple weights
(also called \emph{vector addition systems with states} (VASS)) and
games, have been considered \eg~in~\cite{conf/ictac/FahrenbergJLS11,
  DBLP:conf/fsttcs/ChatterjeeDHR10, DBLP:conf/icalp/ChatterjeeD10,
  DBLP:conf/icalp/BrazdilJK10, DBLP:conf/rp/Chaloupka10,
  DBLP:journals/ipl/Chan88, DBLP:conf/formats/BouyerFLMS08}.  Our energy
automata may hence be considered as a generalization of one-dimensional
VASS to arbitrary updates; in the final section of this paper we will
also be concerned with multi-dimensional energy automata and games.

Energy problems on \emph{timed
  automata}~\cite{DBLP:journals/tcs/AlurD94} have been considered
in~\cite{DBLP:conf/lata/Quaas11, DBLP:conf/formats/BouyerFLMS08,
  DBLP:conf/hybrid/BouyerFLM10, DBLP:conf/qest/BouyerLM12}.  Here timed
automata are enriched with integer weights in locations and on
transitions (the \emph{weighted timed automata}
of~\cite{DBLP:conf/hybrid/AlurTP01, DBLP:conf/hybrid/BehrmannFHLPRV01},
\cf~Fig.~\ref{fi:example}), with the semantics that the weight of a
delay in a location is computed by multiplying the length of the delay
by the location weight.  In~\cite{DBLP:conf/formats/BouyerFLMS08} it is
shown that energy problems for one-clock weighted timed automata without
updates on transitions (hence only with weights in locations) can be
reduced to energy problems on integer-weighted automata with additive
updates.

For one-clock weighted timed automata \emph{with} transition updates,
energy problems are shown decidable
in~\cite{DBLP:conf/hybrid/BouyerFLM10}, using a reduction to energy
automata as we use them here.  More precisely, each path in the timed
automaton in which the clock is not reset is converted to an edge in an
energy automaton, labeled with a \emph{piecewise affine} energy function
(\cf~Definition~\ref{de:fpwint}).  Decidability of the energy problem is
then shown using ad-hoc arguments, but can easily be inferred from our
general results in the present paper.

Also another class of energy problems on weighted timed automata is
considered in~\cite{DBLP:conf/hybrid/BouyerFLM10}, in which weights
during delays are increasing \emph{exponentially} rather than linearly.
These are shown decidable using a reduction to energy automata with
\emph{piecewise polynomial} energy functions; again our present
framework applies.

We also remark that semigroups acting on a set, or more generally,
semiring-semimodule pairs, have been used to describe the infinitary
behavior of automata for a long time, see~\cite{book/PerrinP04,
  DBLP:conf/icalp/Wilke91, BEbook}. In this framework, the infinitary
product or omega operation is defined on the semiring and takes its
values in the semimodule. Another approach is studied
\eg~in~\cite{DBLP:conf/RelMiCS/MathieuD05}, where the omega operation
maps the semiring into itself.  It seems to the authors that there is no
reasonable definition of an infinitary product or omega operation on
energy functions that would again result in an energy function, hence we
chose to use the framework of semiring-semimodule pairs.

\section{Energy Automata}
\label{se:energyaut}

The transition labels on the energy automata which we consider in the
paper, will be functions which model transformations of energy levels
between system states.  Such transformations have the (natural)
properties that below a certain energy level, the transition might be
disabled (not enough energy is available to perform the transition), and
an increase in input energy always yields at least the same increase in
output energy.  Thus the following definition.

\begin{definition}
  An \emph{energy function} is a partial function $f: \Realnn\to
  \Realnn$ which is defined on a closed interval $[ l_f,
  \infty\mathclose[$ or on an open interval $\mathopen] l_f,
  \infty\mathclose[$, for some lower bound $l_f\ge 0$, and such that for
  all $x_1\le x_2$ for which $f$ is defined,
  \begin{equation}
    \label{eq:deriv1}\tag{$\ast$}
    f( x_2)\ge f( x_1)+ x_2- x_1\,.
  \end{equation}
  The class of all energy functions is denoted by $\mathcal F$.
\end{definition}

Thus energy functions are strictly increasing, and in points where they
are differentiable, the derivative is at least $1$.\footnote{Remark
  that, in relation to the example in the introduction, the derivative
  is taken with respect to \emph{energy input}, not \emph{time}.  Hence
  the mapping from input to output energy in state $W$ is indeed an
  energy function in our sense.}  The inverse functions to energy
functions exist, but are generally not energy functions.  Energy
functions can be \emph{composed}, where it is understood that for a
composition $g\circ f$ (to be read from right to left), the interval of
definition is $\{ x\in \Realnn\mid f( x)\text{ and } g( f( x))\text{
  defined}\}$.  We will generally omit the symbol $\circ$ and write
composition simply as $gf$.

\begin{definition}
  An \emph{energy automaton} $( S, T)$ consists of finite sets $S$ of
  states and $T\subseteq S\times \mathcal F\times S$ of transitions
  labeled with energy functions.
\end{definition}

\begin{figure}[tbp]
  \centering
  \begin{tikzpicture}[->,>=stealth',shorten >=1pt,auto,node
    distance=2.0cm,initial text=,scale=.8,transform shape]
    \tikzstyle{every node}=[font=\small]
    \tikzstyle{every state}=[fill=white,shape=circle,inner
    sep=.5mm,minimum size=2.2mm,outer sep=.8mm]
    \node[state,initial] (1) at (0,0) {};
    \node[state] (2) at (4,0) {};
    \node[state] (3) at (8,0) {};
    \path (1) edge[out=10,in=170] node[above] {$x\mapsto x+ 2; x\ge 2$}
    (2);
    \path (1) edge[out=-10,in=-170] node[below] {$x\mapsto x+ 3; x> 1$}
    (2);
    \path (2) edge[out=120,in=60,loop] node[above] {$x\mapsto 2x- 2; x\ge
      1$} (2);
    \path (2) edge[out=10,in=170] node[above] {$x\mapsto x- 1; x> 1$}
    (3);
    \path (3) edge[out=190,in=-10] node[below] {$x\mapsto x+ 1;
      x\ge 0$}
    (2);
  \end{tikzpicture}
  \caption{%
    \label{fi:eauto}
    A simple energy automaton.}
\end{figure}
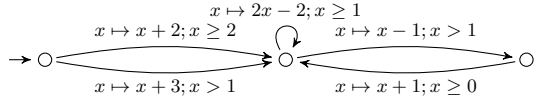

We show an example of a simple energy automaton in Fig.~\ref{fi:eauto}.
Here we use inequalities to give the definition intervals of energy
functions.

A finite \emph{path} in an energy automaton is a finite sequence of
transitions $\pi= (s_0,f_1,s_1), (s_1,f_2,s_2),\dots,
(s_{n-1},f_n,s_n)$.  We use $f_\pi$ to denote the combined energy
function $f_n\cdots f_2 f_1$ of such a finite path.  We will also use
infinite paths, but note that these generally do not allow for combined
energy functions.

A \emph{global state} of an energy automaton is a pair $q=( s, x)$ with
$s\in S$ and $x\in \Realnn$.  A transition between global states is of
the form $((s,x), f, (s',x'))$ such that $(s,f,s')\in T$ and $x'=f(x)$.
A (finite or infinite) \emph{run} of $( S, T)$ is a path in the graph of
global states and transitions.

We are ready to state the decision problems with which our main concern
will lie.  As the input to a decision problem must be in some way
finitely representable, we will state them for subclasses $\mathcal
F'\subseteq \mathcal F$ of \emph{computable} energy functions; an
$\mathcal F'$-automaton is an energy automaton $( S, T)$ with
$T\subseteq S\times \mathcal F'\times S$.

\begin{problem}[Reachability]
  \label{pb:reach}
  Given a subset $\mathcal F'$ of computable functions, an $\mathcal
  F'$-automaton $( S, T)$, an initial state $s_0\in S$, a set of
  accepting states $F\subseteq S$, and a computable initial energy $x_0\in
  \Realnn$: does there exist a finite run of $( S, T)$ from $(s_0,x_0)$
  which ends in a state in $F$?
\end{problem}

\begin{problem}[B{\"u}chi acceptance]
  \label{pb:buchi}
  Given a subset $\mathcal F'$ of computable functions, an $\mathcal
  F'$-automaton $( S, T)$, an initial state $s_0\in S$, a set of
  accepting states $F\subseteq S$, and a computable initial energy
  $x_0\in \Realnn$: does there exist an infinite run of $( S, T)$ from
  $(s_0,x_0)$ which visits $F$ infinitely often?
\end{problem}

As customary, a run such as in the statements above is said to be
accepting.  We let $\preach_{ \mathcal F'}$ denote the function which
maps an $\mathcal F'$-automaton together with an initial state, a set of
final states, and an initial energy to the Boolean values $\FALSE$ or
$\TRUE$ depending on whether the answer to the concrete reachability
problem is negative or positive.  $\pbuchi_{ \mathcal F'}$ denotes the
similar mapping for B{\"u}chi problems.

The special case of Problem~\ref{pb:buchi} with $F= S$ is the question
whether there \emph{exists an infinite run} in the given energy
automaton.  This is what is usually referred to as \emph{energy
  problems} in the literature; our extension to general B{\"u}chi
conditions has not been treated before.

\section{The Algebra of Energy Functions}
\label{se:esemiring}

In this section we develop an algebraic framework of
\emph{star-continuous Kleene algebra} around energy functions which will
allow us to solve reachability and B{\"u}chi acceptance problems in a
generic way.  Let $[ 0, \infty]_\bot=\{ \bot\}\cup[ 0, \infty]$ denote
the non-negative real numbers together with extra elements $\bot$,
$\infty$, with the standard order on $\Realnn$ extended by $\bot< x<
\infty$ for all $x\in \Realnn$.  Also, $\bot+ x= \bot- x= \bot$ for all
$x\in \Realnn\cup\{ \infty\}$ and $\infty+ x= \infty- x$ for all $x\in
\Realnn$.

\begin{definition}
  An \emph{extended energy function} is a mapping $f:[ 0,
  \infty]_\bot\to[][ 0, \infty]_\bot$, for which $f( \bot)= \bot$ and
  $f( x_2)\ge f( x_1)+ x_2- x_1$ for all $x_1\le x_2$, as
  in~\eqref{eq:deriv1}.  Moreover, $f( \infty)= \infty$, unless $f( x)=
  \bot$ for all $x\in[ 0, \infty]_\bot$.  The class of all extended
  energy functions is denoted $\E$.
\end{definition}

This means, in particular, that $f( x)= \bot$ implies $f( x')= \bot$ for
all $x'\le x$, and $f( x)= \infty$ implies $f( x')= \infty$ for all
$x'\ge x$.  Hence, except for the extension to $\infty$, these functions
are indeed the same as our energy functions from the previous section.
Composition of extended energy functions is defined as before, but needs
no more special consideration about its definition interval.

We also define an ordering on $\E$, by $f\le g$ iff $f( x)\le g( x)$ for
all $x\in[ 0, \infty]_\bot$.  We will need three special energy
functions, $\bbot$, $\id$ and $\ttop$; these are given by $\bbot( x)=
\bot$, $\id( x)= x$ for $x\in[ 0, \infty]_\bot$, and $\ttop( \bot)=
\bot$, $\ttop( x)= \infty$ for $x\in[ 0, \infty]$.

\begin{lemma}
  \label{le:distr}
  With the ordering $\le$, $\E$ is a complete lattice with bottom
  element $\bbot$ and top element $\ttop$.  The supremum on $\E$ is
  pointwise, \ie~$( \sup_{ i\in I} f_i)( x)= \sup_{ i\in I} f_i( x)$ for
  any set $I$, all $f_i\in \E$ and $x\in[ 0, \infty]_\bot$.  Also,
  $(\sup_{i \in I}f_i)h = \sup_{i \in I}( f_i h)$ for all $h\in \E$.
\end{lemma}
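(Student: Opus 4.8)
The plan is to show that $\E$ is closed under pointwise suprema and that a pointwise supremum is automatically the least upper bound for $\le$; completeness, the extremal elements, and the distributivity law will then all follow. Given a family $(f_i)_{i\in I}$ in $\E$, I define $g$ by $g(x)=\sup_{i\in I}f_i(x)$, which makes sense because $[0,\infty]_\bot$ is itself a complete lattice, so the pointwise supremum always exists. After checking that $g\in\E$ (the one nontrivial point), $g$ is manifestly an upper bound of the family, and any $h\in\E$ with $h\ge f_i$ for all $i$ satisfies $h(x)\ge\sup_i f_i(x)=g(x)$ for every $x$, so $g$ is the least upper bound — which simultaneously establishes that suprema in $\E$ are computed pointwise. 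Since then every subset of $\E$ has a supremum, $\E$ is a complete lattice (a poset with all joins has all meets). The empty supremum is the constant function $\bbot$, giving the bottom element, and to identify the top I note that $\ttop\in\E$ and $\bbot\le f\le\ttop$ for every $f\in\E$.

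The step needing care is that $g$ really is an extended energy function, in particular that it satisfies~\eqref{eq:deriv1}. The side conditions are direct: $g(\bot)=\sup_i\bot=\bot$; and $g=\bbot$ exactly when every $f_i=\bbot$, whereas otherwise some $f_i$ is non-$\bot$ at some point, forcing $f_i(\infty)=\infty$ and hence $g(\infty)=\infty$. For~\eqref{eq:deriv1} with $x_1\le x_2$, the case $x_1=\bot$ is trivial since the right-hand side is $\bot$, and the case $x_2=\infty$ follows from the $\infty$-clause just established. In the remaining case $x_1,x_2\in[0,\infty)$, set $d=x_2-x_1\ge 0$; each $f_i$ gives $f_i(x_2)\ge f_i(x_1)+d$. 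The crucial observation is that the translation $y\mapsto y+d$ preserves suprema on $[0,\infty]_\bot$ — it fixes $\bot$ and $\infty$ and is ordinary continuous addition on the reals — so that $\sup_i(f_i(x_1)+d)=(\sup_i f_i(x_1))+d=g(x_1)+d$. Hence $g(x_2)=\sup_i f_i(x_2)\ge\sup_i(f_i(x_1)+d)=g(x_1)+d$, which is exactly~\eqref{eq:deriv1}. I expect this commutation of translation with suprema, together with the bookkeeping for $\bot$ and $\infty$, to be the main (though modest) obstacle.

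Finally, the distributivity law is immediate once suprema are known to be pointwise. Reading composition from right to left and evaluating at an arbitrary $x$, the left-hand side is $(\sup_i f_i)(h(x))=\sup_i f_i(h(x))$ by pointwiseness applied at the point $h(x)$, while the right-hand side is $\sup_i(f_i h)(x)=\sup_i f_i(h(x))$ by pointwiseness again. As these agree for all $x$, the functions $(\sup_i f_i)h$ and $\sup_i(f_i h)$ coincide.
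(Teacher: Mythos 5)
Your proof is correct and follows essentially the same route as the paper's: verify that the pointwise supremum is again an extended energy function (the only nontrivial point being condition~\eqref{eq:deriv1}, which you handle by commuting the translation $y\mapsto y+d$ with suprema), and then read off the distributivity law by evaluating both sides at an arbitrary point. You are somewhat more explicit than the paper about the bookkeeping for $\bot$ and $\infty$ and about why the pointwise supremum is the least upper bound, but these are refinements of the same argument rather than a different approach.
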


We denote binary suprema using the symbol $\vee$; hence $f\vee g$, for
$f, g\in \E$, is the function $( f\vee g)( x)= \max( f( x), g( x))$.

\begin{lemma}
  \label{le:idmsring}
  $( \E, \vee, \circ, \bbot, \id)$ is an idempotent semiring with natural
  order $\le$.
\end{lemma}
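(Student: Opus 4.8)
The plan is to check the idempotent-semiring axioms one at a time, leaning on Lemma~\ref{le:distr} for the additive structure and for one half of distributivity, so that only a couple of genuinely new verifications remain. First I would establish that $(\E, \vee, \bbot)$ is a commutative idempotent monoid. Lemma~\ref{le:distr} already tells us that $\E$ is a complete lattice with pointwise suprema, so the binary supremum $f\vee g$ again lies in $\E$ and satisfies $(f\vee g)(x)=\max(f(x),g(x))$. Commutativity, associativity and idempotency of $\vee$ are then inherited from the corresponding properties of $\max$ on $[0,\infty]_\bot$, and $\bbot$ is neutral because it is the bottom of the lattice, so $\bbot\vee f=f$.

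Next I would verify that $(\E,\circ,\id)$ is a monoid. Associativity of $\circ$ is just associativity of function composition, and $\id$ is plainly a two-sided unit. The only point needing care is closure: one must confirm that $g\circ f\in\E$ whenever $f,g\in\E$. This is checked directly from the definition, namely $(gf)(\bot)=g(f(\bot))=g(\bot)=\bot$, and for $x_1\le x_2$, writing $y_i=f(x_i)$ one has $y_1\le y_2$ and $y_2-y_1\ge x_2-x_1$ by~\eqref{eq:deriv1} for $f$, whence $g(y_2)\ge g(y_1)+y_2-y_1\ge g(y_1)+x_2-x_1$ by~\eqref{eq:deriv1} for $g$; the conditions at $\infty$ are immediate.

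For distributivity, the right-hand law $(f\vee g)\circ h=(f\circ h)\vee(g\circ h)$ is exactly the last assertion of Lemma~\ref{le:distr}, instantiated with the two-element index set $\{f,g\}$. The left-hand law $f\circ(g\vee h)=(f\circ g)\vee(f\circ h)$ does not follow formally from the lemma and is where the argument carries real content: it relies on every energy function being monotone. Indeed, since $f$ is (strictly) increasing on all of $[0,\infty]_\bot$, it preserves binary suprema, so $f(\max(g(x),h(x)))=\max(f(g(x)),f(h(x)))$ for every $x$, which is precisely the desired pointwise identity.

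Finally, $\bbot$ is a two-sided annihilator, since $(\bbot\circ f)(x)=\bbot(f(x))=\bot$ and $(f\circ\bbot)(x)=f(\bot)=\bot$, using $f(\bot)=\bot$ from the definition of $\E$. It then remains to see that the given order $\le$ really is the natural order of the semiring: $f\vee g=g$ means $\max(f(x),g(x))=g(x)$ for all $x$, \ie $f(x)\le g(x)$ for all $x$, \ie $f\le g$, so the two orders coincide. I expect the closure-under-composition check and the monotonicity argument for left distributivity to be the only steps requiring more than routine symbol-pushing.
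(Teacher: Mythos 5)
Your proof is correct and follows essentially the same route as the paper's: the additive and multiplicative monoid axioms are routine, right distributivity is the pointwise identity from Lemma~\ref{le:distr}, left distributivity is the one step with content and is derived from monotonicity of energy functions, and the natural-order claim follows because $\vee$ is pointwise. The only addition is your explicit closure-under-composition check, which the paper handles earlier in the text rather than inside this proof; it is verified correctly.
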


Recall~\cite{book/DrosteKV09} that $\le$ being natural refers to the
fact that $f\le g$ iff $f\vee g= g$.

For iterating energy functions, we define a unary star operation on $\E$
by
\begin{equation*}
  f^*( x)= \left\{
  \begin{array}{cl}
    x &\quad\text{if } f( x)\le x\,, \\
    \infty &\quad\text{if } f( x)> x\,.
  \end{array} \right.
\end{equation*}

\begin{lemma}
  \label{le:prop-star}
  For any $f\in \E$, we have $f^*\in \E$.  Also, for any
  $g\in \E$, there exists $f\in \E$ such that $g= f^*$
  if, and only if, there is $k\in[ 0, \infty]_\bot$ such that $g( x)= x$
  for all $x< k$, $g( x)= \infty$ for all $x> k$, and $g( k)= k$ or $g(
  k)= \infty$.
\end{lemma}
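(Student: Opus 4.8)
The whole argument rests on reading the slope condition~\eqref{eq:deriv1} as the statement that the map $x\mapsto f( x)- x$ is non-decreasing on $\Realnn$ (wherever $f$ is finite). Hence the set $D_f=\{ x\in \Realnn\mid f( x)\le x\}$, on which $f^*$ acts as $\id$, is downward closed, and its complement, on which $f^*$ takes the value $\infty$, is upward closed; this single threshold picture drives both claims. For $f^*\in \E$ I would verify the three defining properties of an extended energy function. Since $f( \bot)= \bot\le \bot$ we get $f^*( \bot)= \bot$, and since $f( \infty)= \infty\le \infty$ (or $f= \bbot$, in which case $f^*= \id$) we get $f^*( \infty)= \infty$; in particular $f^*\neq \bbot$, so the clause ``$f( \infty)= \infty$ unless $f= \bbot$'' is automatic. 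The only substantial point is the inequality $f^*( x_2)\ge f^*( x_1)+ x_2- x_1$ for $x_1\le x_2$, which I would settle by cases: if $f^*( x_1)= \infty$ then monotonicity of $x\mapsto f( x)- x$ forces $f^*( x_2)= \infty$; if $f^*( x_1)= x_1$ the right-hand side reduces to $x_2$ and $f^*( x_2)\in\{ x_2, \infty\}$ is $\ge x_2$; and $x_1= \bot$ is immediate.

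For the ``only if'' half of the characterization I would simply extract the threshold $k= \sup D_f$, read as $k= \bot$ when $D_f= \emptyset$ and as $k= \infty$ when $D_f= \Realnn$. Downward-closedness of $D_f$ gives $f^*( x)= x$ for $x< k$ and $f^*( x)= \infty$ for $x> k$, while the two possibilities $k\in D_f$ or $k\notin D_f$ yield exactly $f^*( k)= k$ or $f^*( k)= \infty$; together with $f^*( \bot)= \bot$ and $f^*( \infty)= \infty$ this is precisely the stated form.

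For the converse I would produce an explicit step function. Given $g$ of the stated form, let $f$ be $\bbot$ on the downward closed set $\{ \bot\}\cup\{ x\in \Realnn\mid g( x)= x\}$ and let $f$ take the value $\infty$ everywhere else, with $f( \infty)= \infty$. Such an $f$ is monotone, has $f( \bot)= \bot$ and $f( \infty)= \infty$, and satisfies~\eqref{eq:deriv1} trivially because it only ever jumps from $\bot$ up to $\infty$; hence $f\in \E$. Computing $f^*$ then returns $x$ exactly on the $\bbot$-region and $\infty$ on the finite part of the $\infty$-region, so $f^*= g$; the degenerate instances $g= \id$ (take $f= \bbot$) and $g= \ttop$ (take $f= \ttop$) are special cases of this construction.

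I expect the only real difficulty to be bookkeeping at the three boundary situations---the element $\bot$, the value $\infty$, and the threshold point $k$ itself. One must check that the trichotomy $x< k$ / $x= k$ / $x> k$ together with the clause ``$g( k)= k$ or $g( k)= \infty$'' consistently absorbs the degenerate thresholds $k= \bot$ (forcing $g= \ttop$) and $k= \infty$ (forcing $g= \id$), and that the constructed step function genuinely lies in $\E$ in each of these corner cases.
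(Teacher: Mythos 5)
Your proof is correct and rests on the same key observation as the paper's: condition~\eqref{eq:deriv1} makes the set $D_f=\{ x\mid f( x)\le x\}$ downward closed, so $k= \sup D_f$ is the single threshold governing $f^*$, and the ``only if'' direction is exactly the paper's argument. The one point where you diverge is the converse: the paper simply notes that any $g$ of the stated form satisfies $g^*= g$, so $g$ is its own witness, whereas you build a $\{ \bot, \infty\}$-valued step function $f$ with $f^*= g$. Both witnesses work --- your $f$ does lie in $\E$, since \eqref{eq:deriv1} is vacuous for functions taking only the values $\bot$ and $\infty$, and your computation of its star returns $g$ --- and your bookkeeping at the boundary cases $\bot$, $\infty$, and $k$ (including the degenerate thresholds $k= \bot$ giving $\ttop$ and $k= \infty$ giving $\id$) is handled correctly. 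The paper's witness is marginally slicker, since it avoids having to verify membership of a new function in $\E$, but nothing of substance is gained or lost either way.
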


By Lemma~\ref{le:distr}, composition right-distributes over arbitrary
suprema in $\E$.  The following example shows that a similar left
distributivity does \emph{not} hold in general, hence $\E$ is \emph{not}
a complete semiring the sense of~\cite{book/DrosteKV09}.  Let $f_n,
g\in \E$ be defined by $f_n( x)= x+ 1- \frac1n$ for $x\ge 0$, $n\in
\Natp$ and $g( x)= x$ for $x\ge 1$.  Then $g( \sup_n f_n)( 0)= g( \sup_n
f_n( 0))= g( 1)= 1$, whereas $( \sup_n g f_n)( 0)= \sup_n g( f_n( 0))=
\sup_n g( 1- \frac1n)= \bot$.

The next lemma shows a restricted form of left distributivity which
holds only for function powers $f^n$.  Note that it implies that $f^*=
\sup_n f^n$ for all $f\in \E$, which justifies the definition of $f^*$
above.

\begin{lemma}
  \label{le:distl}
  For any $f, g\in \E$, $g f^*= \sup_{ n\in \Nat}( g f^n)$.
\end{lemma}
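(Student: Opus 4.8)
The plan is to verify the identity pointwise, fixing $x\in[0,\infty]_\bot$ and comparing $g(f^*(x))$ with $\sup_{n} g(f^n(x))$, where by Lemma~\ref{le:distr} the supremum on the right is computed pointwise. The key preliminary observation is a description of the orbit $(f^n(x))_{n\in\Nat}$ that distinguishes exactly the two branches in the definition of $f^*$. When $f(x)\le x$, monotonicity of $f$ (which follows from~\eqref{eq:deriv1}) gives $f^{n+1}(x)\le f^n(x)$ by induction, so the sequence is non-increasing and bounded above by $f^0(x)=\id(x)=x=f^*(x)$; in particular $f^n(x)\le x$ for all $n$. When $f(x)>x$, writing $\delta=f(x)-x>0$ and applying~\eqref{eq:deriv1} repeatedly yields $f^n(x)\ge x+n\delta$, so $f^n(x)$ increases to $\infty$, matching $f^*(x)=\infty$.

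The inequality $\sup_n gf^n\le gf^*$ is the easy direction and holds in general. The orbit analysis shows $f^n\le f^*$ for every $n$ (in the first case $f^n(x)\le x=f^*(x)$, in the second $f^n(x)\le\infty=f^*(x)$), and since $g$ is monotone this gives $gf^n\le gf^*$, whence the supremum is dominated by $gf^*$.

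For the reverse inequality I would argue by the same case distinction. If $f(x)\le x$ then $f^*(x)=x=f^0(x)$, so $g(f^*(x))=g(f^0(x))$ is literally one of the terms of the supremum and the bound $g(f^*(x))\le\sup_n g(f^n(x))$ is immediate. This is exactly the point where the supremum is \emph{attained} rather than approached from below, which is what lets left-composition go through here even though it fails in general (as the counterexample preceding the lemma shows). If $f(x)>x$ then $f^*(x)=\infty$; when $g=\bbot$ both sides equal $\bot$ and there is nothing to prove, so assume $g\ne\bbot$, in which case $g(\infty)=\infty$ and it remains to show $\sup_n g(f^n(x))=\infty$.

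This last step is the main obstacle. Since $g\ne\bbot$ there is a point $y_0$ with $g(y_0)\ne\bot$; because $f^n(x)$ grows without bound we have $f^n(x)\ge y_0$ for all large $n$, and then~\eqref{eq:deriv1} forces $g(f^n(x))\ge g(y_0)+f^n(x)-y_0$, which itself grows without bound (and if $g(y_0)=\infty$ already, then $g(f^n(x))=\infty$ for such $n$). Hence the supremum on the right equals $\infty=g(f^*(x))$, completing this case and the proof. Taking $g=\id$ recovers $f^*=\sup_n f^n$ as promised.
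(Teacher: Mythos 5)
Your proof is correct and follows essentially the same route as the paper's: both verify the identity pointwise (via Lemma~\ref{le:distr}), split on whether $f(x)\le x$ --- where the supremum is attained at the $n=0$ term --- or $f(x)>x$ --- where~\eqref{eq:deriv1} forces $f^n(x)$, and hence $g(f^n(x))$ once it leaves $\bot$, to increase without bound. The only cosmetic difference is that you bound $g(f^n(x))$ from below using a fixed witness $y_0$ with $g(y_0)\ne\bot$, whereas the paper derives $gf^{n+1}(x)\ge gf^n(x)+M$ directly; both rest on the same (implicit) fact that $g\ne\bbot$ takes a non-$\bot$ value at some finite point.
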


\begin{proposition}
  \label{pr:EcontKA}
  For any $f, g, h\in \E$, $g f^* h= \sup_{ n\in \Nat}( g f^n h)$.
  Hence $\E$ is a star-continuous Kleene
  algebra~\cite{DBLP:conf/mfcs/Kozen90}.
\end{proposition}

We call a subsemiring $\E'\subseteq \E$ a \emph{subalgebra} if $f^*\in
\E'$ for all $f\in \E'$.

It is known~\cite{book/DrosteKV09, book/Conway71, BEbook,
  DBLP:journals/iandc/Kozen94} that when $S$ is a star-continuous Kleene
algebra, then so is any matrix semiring $S^{n \times n}$, for all $n\ge
1$, with the usual sum and product operations. The natural order on
$S^{n \times n}$ is pointwise, so that for all $n\times n$ matrices
$A,B$ over $S$, $A\le B$ iff $A_{i,j}\le B_{i,j}$ for all $i,j$.  Now a
star-continuous Kleene algebra is also a Conway semiring, hence the
Conway identities
\begin{equation}
  \label{eq:conway1}
  (g \vee f)^*= (g^*f)^*g^* \quad\text{and}\quad (gf)^*= g(fg)^*f\vee \id
\end{equation}
are satisfied for all $f, g\in \E$.  Also, this implies that the matrix
semiring $\E^{ n\times n}$ is again a Conway semiring, for any $n\ge 1$,
with the star operation defined inductively for a matrix
\begin{equation}
  \label{eq:matrix}
  M=
  \begin{bmatrix}
    a & b\\
    c & d
  \end{bmatrix}
  \in \E^{n \times n}\,,
\end{equation}
where $a$ is $k \times k $ and $d$ is $m \times m$ with $k + m  = n$, by
\begin{equation}
  \label{eq:starM}
  M^*= 
  \begin{bmatrix}
    ( a\vee b d^* c)^* \;\;&\;\; ( a\vee b d^* c)^* b d^* \\
    ( d\vee c a^* b)^* c a^* \;\;&\;\; ( d\vee c a^* b)^*
  \end{bmatrix}
  \in \E^{n \times n}\,.
\end{equation}
The definition of $M^*$ does not depend on how $M$ is split into parts,
and star-continuity implies that for all matrices $M, N, O$,
\begin{equation}
  \label{eq:starcontM}
  N M^* O= \sup_{ n\in \Nat}( N M^n O)\,.
\end{equation}

Note again that this implies that $M^*= \sup_n M^n$ for all matrices
$M$.  In a sense, this gives another, inductive definition of the star
operation on the matrix semiring; the important property of
star-continuous Kleene algebras is, then, that this inductive definition
and the one in~\eqref{eq:starM} give rise to the same operation.

We introduce a semimodule $\V$ over $\E$.  Let $\B=\{ \FALSE, \TRUE\}$
be the Boolean algebra, with order $\FALSE< \TRUE$, and $\V=\{ u:[ 0,
\infty]_\bot\to \B\mid u( \bot)= \FALSE, x_1\le x_2\IMPL u( x_1)\le u(
x_2)\}$.  Identifying $\FALSE$ with $\bot$ and $\TRUE$ with $\infty$, we
have an embedding of $\V$ into $\E$; note that $\bbot, \ttop\in \V$.

\begin{lemma}
  \label{le:semimod}
  With action $( u, f)\mapsto u f: \V\times \E\to \V$, $\V$ is a right
  $\E$-semimodule~\cite{EK2}.  Moreover, $( \sup_{ i\in I} u_i) f=
  \sup_{ i\in I}( u_i f)$ for any set $I$, all $u_i\in \V$ and $f\in
  \E$, and $u f^*= \sup_{ n\in \Nat} u f^n$ for all $u\in \V$.
\end{lemma}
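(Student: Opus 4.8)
The plan is to realise the action as ordinary composition of extended energy functions, $uf = u\circ f$, so that $(uf)(x) = u(f(x))$, exploiting the embedding $\V\hookrightarrow\E$. First I would check that this is well defined, \ie~that $u\circ f\in\V$ whenever $u\in\V$ and $f\in\E$: the composite is monotone (a composite of monotone maps, $f$ being monotone by~\eqref{eq:deriv1}) and sends $\bot$ to $u(f(\bot)) = u(\bot) = \FALSE$, so it lies in $\V$. Next I would record that $\V$ is closed under arbitrary pointwise suprema (a supremum of monotone maps is monotone and still sends $\bot$ to $\FALSE$), so that $(\V,\vee,\bbot)$ is a complete lattice, in particular a commutative idempotent monoid with zero $\bbot$; this supplies the additive structure required of a semimodule and mirrors Lemma~\ref{le:distr}.

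With the action fixed, the semimodule axioms of~\cite{EK2} are routine. Associativity $u(fg) = (uf)g$ is associativity of $\circ$; $u\,\id = u$ and the zero laws $u\bbot = \bbot = \bbot f$ are immediate from $\bbot(x) = \bot$ and $u(\bot) = \FALSE$. Right distributivity $(u\vee v)f = uf\vee vf$ holds because suprema in both $\E$ and $\V$ are pointwise, and left distributivity $u(f\vee g) = uf\vee ug$ holds because $u$ is monotone, whence $u(\max(f(x),g(x))) = \max(u(f(x)),u(g(x)))$. The same pointwise argument yields the stronger claim $(\sup_{i\in I}u_i)f = \sup_{i\in I}(u_i f)$: evaluating at any $x$ and using that $\sup$ in $\V$ is computed pointwise, both sides equal $\sup_{i\in I}u_i(f(x))$.

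The real content is the identity $uf^* = \sup_{n\in\Nat}uf^n$. Because the action $u\circ(\cdot)$ and all suprema are evaluated pointwise, it suffices to prove $u(f^*(x)) = \sup_n u(f^n(x))$ for each fixed $x$, and I would argue by cases according to the definition of $f^*$. If $f(x)\le x$, then monotonicity of $f$ gives $f^{n+1}(x)\le f^n(x)$, so the orbit is non-increasing; here $f^*(x) = x = f^0(x)$, and since $f^n(x)\le x$ forces $u(f^n(x))\le u(x)$ while the term $n=0$ already equals $u(x)$, the supremum is $u(x) = u(f^*(x))$. If $f(x) > x$, then $f^*(x) = \infty$; applying~\eqref{eq:deriv1} to consecutive orbit points $f^{n-1}(x)\le f^n(x)$ shows the increments $f^{n+1}(x) - f^n(x)$ are positive and non-decreasing, so $f^n(x)\to\infty$, and I must show $\sup_n u(f^n(x)) = u(\infty)$.

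This divergent case is where I expect the difficulty to lie, and it is precisely where composition fails to left-distribute over suprema in general (\cf~the counterexample preceding Lemma~\ref{le:distl}). Since each $f^n(x)$ is finite, $\sup_n u(f^n(x)) = u(\infty)$ can hold only if $u$ already attains the value $\TRUE$ at some finite energy whenever $u(\infty) = \TRUE$; that is, one must use that a non-$\bbot$ element of $\V$ has a \emph{finite} threshold, equivalently that $u$ is continuous from below at $\infty$. Granting this, when $u(\infty) = \FALSE$ monotonicity forces $u\equiv\bbot$ and both sides vanish, and when $u(\infty) = \TRUE$ one picks a finite $M$ with $u(M) = \TRUE$ and uses $f^n(x)\ge M$ for large $n$ to conclude $u(f^n(x)) = \TRUE$. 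Isolating and justifying this threshold property of the elements of $\V$ is therefore the crux of the whole lemma; the remaining cases and all the semimodule axioms are then mechanical.
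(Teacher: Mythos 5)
Your overall route is the paper's: the semimodule axioms and $(\sup_{i\in I}u_i)f=\sup_{i\in I}(u_if)$ are verified by the same pointwise computations, and the paper disposes of $uf^*=\sup_{n}uf^n$ simply by invoking the embedding of $\V$ into $\E$ together with Lemma~\ref{le:distl}, whose proof is exactly your case analysis on $f(x)\le x$ versus $f(x)>x$. Up to the last step you are therefore reconstructing the intended argument, and everything you actually prove is correct.

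The problem is that last step: you have located the crux but not closed it. You ``grant'' that every $u\in\V$ with $u\ne\bbot$ attains $\TRUE$ at some finite argument, and you correctly identify this threshold property as what the divergent case needs; but you neither prove it nor notice that it is \emph{not} a consequence of the definition of $\V$ as given. The indicator of $\{\infty\}$ (that is, $u(x)=\FALSE$ for $x<\infty$ and $u(\infty)=\TRUE$) is monotone and sends $\bot$ to $\FALSE$, hence lies in $\V$, yet has no finite threshold; with $f(x)=x+1$ one gets $uf^*(0)=u(\infty)=\TRUE$ while $\sup_n uf^n(0)=\sup_n u(n)=\FALSE$. So the property cannot simply be granted: one must either exclude this single pathological element --- which is what the paper tacitly does, since in the proof of Lemma~\ref{le:distl} the step ``there must be $N$ with $gf^N(x)\ne\bot$'' silently assumes that any $g\ne\bbot$ is non-$\bot$ somewhere on $\Realnn$, i.e.\ is the extension of a genuine partial energy function with finite lower bound $l_g$ --- or make that finiteness assumption part of the definition of $\E$ and $\V$. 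With that restriction in place, the divergent case finishes exactly as you sketch (pick finite $M$ with $u(M)=\TRUE$ and use $f^n(x)\to\infty$). As written, however, the decisive step of your proof rests on an unproved, and literally read false, assumption, so the proof is not complete.
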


So like the situation for $\E$ (\cf~Lemmas~\ref{le:distr}
and~\ref{le:distl}), the action of $\E$ on $\V$ right-distributes over
arbitrary suprema and left-distributes over function powers.

We define an infinitary product operation $\E^\omega\to \V$.  Let $f_0,
f_1,\dotsc$ be an infinite sequence of energy functions and $x_0\in[ 0,
\infty]_\bot$, and put $x_{ n+ 1}= f_n( x_n)$ for $n\in \Nat$.  Then we
define
\begin{equation*}
  \big( \prod_{ i= 0}^\infty f_i)( x_0)=
  \begin{cases}
    \FALSE &\text{if } \exists n\in \Nat: x_n= \bot\,, \\
    \TRUE &\text{if } \forall n\in \Nat: x_n\ne \bot\,.
  \end{cases}
\end{equation*}
Note that this product is order-preserving.  By the next lemma, it is a
conservative extension of the finite product.  As $\E$ is not a complete
semiring, it follows that $( \E, \V)$ is not a complete
semiring-semimodule pair in the sense of~\cite{EK2}.

\begin{lemma}
  \label{le:prod_is_prod}
  For all $f_0, f_1,\dotsc\in \E$, $( \prod_{ i= 1}^\infty f_i) f_0=
  \prod_{ i= 0}^\infty f_i$.  For all indices $0= n_0\le n_1\le\dotsc$,
  $\prod_{ i= 0}^\infty f_i= \prod_{ i= 0}^\infty( f_{ n_{ i+ 1}-
    1}\dotsm f_{ n_i})$.
\end{lemma}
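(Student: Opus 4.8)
The plan is to prove both identities pointwise, by unfolding the definition of the infinitary product on an arbitrary argument $x_0\in[0,\infty]_\bot$ and comparing the two resulting Boolean values. Throughout, the one structural fact I lean on is that $\bot$ is \emph{absorbing}: every extended energy function satisfies $f(\bot)=\bot$, so if the iterated sequence $x_0,x_1,\dotsc$ (with $x_{n+1}=f_n(x_n)$) ever reaches $\bot$, it remains $\bot$ at every later index. Consequently $(\prod_{i=0}^\infty f_i)(x_0)=\TRUE$ holds precisely when $x_n\neq\bot$ for \emph{all} $n$, and equals $\FALSE$ as soon as a single $x_n=\bot$, with no intermediate behaviour possible.

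For the first identity I would fix $x_0$ and recall that the action $uf$ of $f\in\E$ on $u\in\V$ is composition, $(uf)(x)=u(f(x))$, under the embedding $\V\hookrightarrow\E$. Hence $\bigl((\prod_{i=1}^\infty f_i)\,f_0\bigr)(x_0)=(\prod_{i=1}^\infty f_i)(f_0(x_0))=(\prod_{i=1}^\infty f_i)(x_1)$. The sequence generated from $x_1$ by $f_1,f_2,\dotsc$ is exactly the tail $x_1,x_2,\dotsc$, so this value is $\TRUE$ iff $x_n\neq\bot$ for all $n\ge 1$, whereas $(\prod_{i=0}^\infty f_i)(x_0)=\TRUE$ iff $x_n\neq\bot$ for all $n\ge 0$. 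These conditions coincide: by absorption $x_0=\bot$ forces $x_1=f_0(\bot)=\bot$, so ``$x_n\neq\bot$ for all $n\ge 1$'' already entails $x_0\neq\bot$. This settles the first equation.

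For the second identity, write $g_i=f_{n_{i+1}-1}\dotsm f_{n_i}$ for the $i$-th block and set $y_0=x_0$, $y_{i+1}=g_i(y_i)$. A short induction, using $n_0=0$ and the fact that $g_i$ is the composition carrying $x_{n_i}$ through the intermediate values to $x_{n_{i+1}}$, shows $y_i=x_{n_i}$ for every $i$; empty blocks, where $n_{i+1}=n_i$, contribute $\id$ and leave the value unchanged, consistently with $x_{n_{i+1}}=x_{n_i}$. Thus $(\prod_{i=0}^\infty g_i)(x_0)=\TRUE$ iff $x_{n_i}\neq\bot$ for all $i$, while $(\prod_{i=0}^\infty f_i)(x_0)=\TRUE$ iff $x_m\neq\bot$ for all $m$. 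The implication from the latter to the former is immediate, as the $x_{n_i}$ form a subsequence. For the converse I argue contrapositively: if $x_m=\bot$ for some $m$, then by absorption $x_{m'}=\bot$ for all $m'\ge m$, and since the block boundaries satisfy $n_i\to\infty$ there is an index $i$ with $n_i\ge m$, whence $x_{n_i}=\bot$.

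I expect the only genuine subtlety to lie in this last point of the second part: the regrouping is meaningful only when the blocks $[n_i,n_{i+1}\mathclose[$ exhaust $\Nat$, \ie~when $n_i\to\infty$, and it is exactly this cofinality---together with $\bot$-absorption---that lets a $\bot$ occurring at an arbitrary position $m$ be detected at some block boundary $n_i$. Everything else is bookkeeping in the unfolded definitions.
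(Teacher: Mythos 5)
Your proof is correct and follows essentially the same route as the paper's: unfold the infinitary product pointwise on an arbitrary $x_0$ and use that $\bot$ is absorbing, with the paper treating only the first identity in detail and dismissing the second as provable ``by similarly easy arguments''. Your explicit treatment of the second identity---in particular the remark that the regrouping is only sound when the block boundaries $n_i$ are cofinal in $\Nat$, so that a $\bot$ at an arbitrary position is detected at some $n_i$---supplies exactly the detail the paper omits.
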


To deal with infinite iterations of energy functions, we define a unary
omega operation $\E\to \V$ by
\begin{equation*}
  f^\omega( x)= \left\{
  \begin{array}{cl}
    \FALSE &\quad\text{if } x= \bot\text{ or } f( x)< x\,, \\
    \TRUE &\quad\text{if } x\ne \bot\text{ and } f( x)\ge x\,.
  \end{array} \right.
\end{equation*}

Note that $f^\omega= \prod_{ i= 0}^\infty f$ for all $f\in \E$.

\begin{proposition}
  \label{pr:smodpair}
  $( \E, \V)$ is a Conway semiring-semimodule pair.
\end{proposition}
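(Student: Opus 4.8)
The plan is to verify the two defining identities of a Conway semiring--semimodule pair beyond what is already in place. Since $\E$ is a star-continuous Kleene algebra (Proposition~\ref{pr:EcontKA}) and hence a Conway semiring, and $\V$ is a right $\E$-semimodule (Lemma~\ref{le:semimod}), all that remains is to check that the omega operation satisfies the \emph{product-omega} identity $( fg)^\omega= ( gf)^\omega g$ and the \emph{sum-omega} identity $( f\vee g)^\omega= f^\omega( gf^*)^*\vee( gf^*)^\omega$ for all $f, g\in \E$. Throughout I would work with the infinite product $\prod_{ i= 0}^\infty f_i\in \V$ and its two regrouping properties from Lemma~\ref{le:prod_is_prod}, together with $f^\omega= \prod_{ i= 0}^\infty f$ and the identity $gf^*= \sup_n gf^n$ coming from Lemma~\ref{le:distl}.

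The product-omega identity is the easy half. Writing $( fg)^\omega= \prod_{ i= 0}^\infty( fg)$ and refining each factor $fg$ into the two factors $g, f$ via the second part of Lemma~\ref{le:prod_is_prod}, I obtain the alternating product $\prod( g, f, g, f,\dotsc)$. Pulling the leading factor $g$ out to the right by the first part of Lemma~\ref{le:prod_is_prod} leaves $\prod( f, g, f, g,\dotsc)$, which regroups back into $( gf)^\omega$; hence $( fg)^\omega= ( gf)^\omega g$. This is a short, purely combinatorial manipulation of indices.

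For the sum-omega identity I would argue pointwise, exploiting that both sides lie in $\V$ and that the infinite product is order-preserving. The key observation is that, since $f\vee g= \max( f, g)$ dominates both $f$ and $g$ and all energy functions are monotone, the orbit of any $x$ under $f\vee g$ stays above $\bot$ exactly when some choice word $w\in\{ f, g\}^\omega$ keeps its orbit above $\bot$; this yields $( f\vee g)^\omega= \sup_{ w}\prod w$. I would then split each word according to whether it contains infinitely many or only finitely many occurrences of $g$, grouping each maximal block ``$f^k$ then $g$'' into a single factor $gf^k$. By Lemma~\ref{le:distl} the supremum over $k$ of such a block is $gf^*$, so the words with infinitely many $g$'s should contribute $( gf^*)^\omega$ and those with finitely many should contribute $f^\omega( gf^*)^*$, matching the two terms on the right.

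The main obstacle is precisely this last step: because left distributivity fails in $\E$, and correspondingly the action of $\E$ on $\V$ does not distribute over arbitrary suprema from the left, I cannot simply pull the suprema defining $gf^*$ and $( gf^*)^*$ through the infinite product and the semimodule action. I expect to resolve this by reasoning directly on orbits rather than by abstract sup-interchange: the supremum of a block equals $gf^*$ pointwise, and since $\V$ records only whether an orbit ever reaches $\bot$, a sufficiently large finite truncation of each block can be chosen that preserves non-$\bot$-ness of the whole orbit, using that $f^n( x)\to \infty$ whenever $f( x)> x$, and that the restricted continuity facts $uf^*= \sup_n uf^n$ (Lemma~\ref{le:semimod}) and $gf^*h= \sup_n gf^n h$ (Proposition~\ref{pr:EcontKA}) control exactly the suprema that arise. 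Establishing both inequalities so that the two contributions recombine to $( f\vee g)^\omega$ is the technical heart of the proof.
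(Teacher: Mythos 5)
Your overall strategy coincides with the paper's: verify the two Conway identities for $\omega$, establish $( f\vee g)^\omega= \sup_{ h_0, h_1,\dots\in\{ f, g\}} \prod_i h_i$, and split the choice words according to whether $g$ occurs finitely or infinitely often, so that the two classes contribute $f^\omega( g f^*)^*$ and $( g f^*)^\omega$ respectively. Your derivation of $( fg)^\omega=( gf)^\omega f$ by regrouping the infinite product via Lemma~\ref{le:prod_is_prod} is a valid alternative to the paper's route, which instead checks the identity pointwise using that $h^\omega( x)= \TRUE$ iff $x\ne \bot$ and $h( x)\ge x$. Your greedy argument for $( f\vee g)^\omega\le \sup_w \prod w$ (pick whichever of $f, g$ attains the maximum at each step) is also fine, and the interchange of finitely many suprema with composition needed for the finitely-many-$g$ words is exactly the paper's Lemma~\ref{le:smod-supseq}, an induction from Lemmas~\ref{le:distr} and~\ref{le:distl}; that part of your plan goes through.

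The gap is in the step you yourself flag as the technical heart: showing $( g f^*)^\omega= \sup_{ k_0, k_1,\dots}\prod_i( g f^{ k_i})$. Your proposed remedy --- choosing, along the orbit $x_0, x_1,\dots$ of $x$ under $g f^*$, truncations $k_i$ large enough to preserve non-$\bot$-ness --- is more delicate than you acknowledge: at the first index $i$ with $f( x_i)> x_i$ the $g f^*$-orbit jumps to $\infty$, a finite truncation only yields some large finite value $y$, and you must still exhibit an infinite tail word that survives from $y$; ``large'' alone does not suffice, since $g$ may lose energy at every point. The missing idea that makes this clean is that for a single $h\in \E$ the value $h^\omega( x)$ is determined by the one-step comparison $h( x)\ge x$, so a \emph{periodic} witness always suffices: if $( g f^*)^\omega( x)= \TRUE$ then $\sup_n g f^n( x)\ge x$, and the dichotomy $f( x)\le x$ (whence $( g f^n( x))_n$ is non-increasing, the supremum is $g( x)\ge x$, and one takes all $k_i= 0$) versus $f( x)> x$ (whence $f^n( x)\to \infty$, some $g f^n( x)\ge x$, and one takes all $k_i= n$) produces a constant sequence realizing the right-hand side. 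This is the paper's Lemma~\ref{le:omega_sup}. Without it, or a fully worked-out tail for your truncation argument, the proof is incomplete.
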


Recall~\cite{BEbook} that this means that additionally to the
identities~\eqref{eq:conway1},
\begin{equation*}
  ( g f)^\omega=( fg)^\omega f \quad\text{and}\quad ( f\vee g)^\omega=
  f^\omega( g f^*)^*\vee( g f^*)^\omega
\end{equation*}
for all $f, g\in \E$.  Like for Conway semirings, it implies that the
pair $( \E^{ n\times n}, \V^n)$ is again a Conway semiring-semimodule
pair, for any $n\ge 1$, with the action of $\E^{ n\times n}$ on $\V^n$
similar to matrix-vector multiplication using the action of $\E$ on
$\V$, and the omega operation $\E^{ n\times n}\to \V^n$ given
inductively as follows: for $M\in \E^{ n\times n}$ with blocks as
in~\eqref{eq:matrix}, define
\begin{gather}
  \label{eq:omegaM}
  M^\omega =
  \begin{bmatrix}
    ( a\vee b d^* c)^\omega\vee d^\omega c( a\vee b d^* c)^* \;\;&\;\; (
    d\vee c a^* b)^\omega\vee a^\omega b( d\vee c a^* b)^*
  \end{bmatrix}, \\
  M^{ \omega_k} =
  \begin{bmatrix}
    ( a\vee b d^* c)^\omega \;\;&\;\; ( a\vee b d^* c)^\omega b d^*
  \end{bmatrix}. \notag
\end{gather}
The definition of $M^\omega$ does not depend on how $M$ is split into
parts, but the one of $M^{ \omega_k}$ does (recall that $a$ is a
$k\times k$ matrix).  It can be shown~\cite{BEbook}
that~\eqref{eq:omegaM}, and also~\eqref{eq:starM}, follow directly from
certain general properties of fixed point operations.

\section{Decidability}
\label{se:reach}

We are now ready to apply the Kleene algebra framework to reachability
and B{\"u}chi acceptance for energy automata.  We first show that it is
sufficient to consider energy automata $( S, T)$ with precisely one
transition $( s, f, s')\in T$ for each pair of states $s, s'\in S$.
This will allow us to consider $T$ as a matrix $S\times S\to \E$ (as is
standard in weighted-automata theory~\cite{book/DrosteKV09}).

\begin{lemma}
  \label{le:onetransition}
  Let $\E'\subseteq \E$ be a subalgebra and $( S, T)$ an
  $\E'$-automaton.  There exists an $\E'$-automaton $( S, T')$ for which
  $\preach_{ \E'}( S, T)= \preach_{ \E'}( S, T')$ and $\pbuchi_{ \E'}(
  S, T)= \pbuchi_{ \E'}( S, T')$, and in which there is precisely one
  transition $( s, f, s')\in T'$ for all $s, s'\in S$.
\end{lemma}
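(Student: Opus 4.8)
The plan is to merge parallel edges and pad missing ones, exploiting the fact that for energy functions ``more energy is always at least as good''. Concretely, for each ordered pair $s, s'\in S$ I would let the unique $T'$-label be the supremum
\[
  g_{s,s'} = \sup\{ f \mid (s,f,s')\in T\}
\]
of the finitely many $T$-labels from $s$ to $s'$, with the empty supremum read as $\bbot$. Since $\E'$ is a subsemiring of $(\E, \vee, \circ, \bbot, \id)$ it contains $\bbot$ and is closed under the binary supremum $\vee$, hence each $g_{s,s'}$ again lies in $\E'$, and $(S,T')$ is a genuine $\E'$-automaton with exactly one transition per ordered pair. It then remains to establish $\preach_{\E'}(S,T)=\preach_{\E'}(S,T')$ and $\pbuchi_{\E'}(S,T)=\pbuchi_{\E'}(S,T')$, which I would do by converting, for each accepting run on one side, into an accepting run through the \emph{same} sequence of states on the other.

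For the direction from $(S,T)$ to $(S,T')$ I would fix the visited state sequence $s_0, s_1,\dotsc$ and replace the traversed labels by the combined labels $g_{s_j,s_{j+1}}$. Writing $x_0, x_1,\dotsc$ for the energies of the original run and $y_0=x_0$, $y_{j+1}=g_{s_j,s_{j+1}}(y_j)$ for the new ones, the key invariant is $y_j\ge x_j$ for all $j$, proved by induction: if $(s_j,f,s_{j+1})$ is the edge used originally, then $y_{j+1}=g_{s_j,s_{j+1}}(y_j)\ge f(y_j)\ge f(x_j)=x_{j+1}$, the first inequality because $g_{s_j,s_{j+1}}$ dominates $f$ and the second by monotonicity of $f$ together with $y_j\ge x_j$. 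As the original run never takes the value $\bot$, neither does the new one, so it is a well-defined run on the same states, and it is accepting for the same reason (ending in $F$ for reachability, visiting $F$ infinitely often for B\"uchi).

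The converse direction, from $(S,T')$ to $(S,T)$, is where I expect the main obstacle to lie, since at each step a combined transition must be turned back into a genuine one. Here I would use that a supremum of \emph{finitely many} energy functions is attained pointwise: for each $j$ the value $g_{s_j,s_{j+1}}(y_j)=\max\{ f(y_j)\mid (s_j,f,s_{j+1})\in T\}$ equals $f^{(j)}(y_j)$ for some specific $T$-edge $f^{(j)}$ from $s_j$ to $s_{j+1}$, namely one realizing the maximum at that particular $y_j$, so the same sequence of global states is already a run of $(S,T)$. Because $g_{s_j,s_{j+1}}(y_j)\neq\bot$ along an accepting $T'$-run, the chosen $f^{(j)}$ is defined at $y_j$, the energies are left unchanged, and acceptance is preserved verbatim. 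The delicate point demanding attention is the treatment of $\infty$ and of the domains of definition: one has to observe that the up-closedness of the domains together with the monotonicity property~\eqref{eq:deriv1} ensures that raising an energy level can neither disable a later transition nor reach $\bot$, so that passing to $\infty$ only ever helps, which is exactly what makes the supremum construction sound in both directions.
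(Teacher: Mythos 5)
Your proposal is correct and follows essentially the same route as the paper: the same supremum construction $T'=\{(s,\sup\{f\mid(s,f,s')\in T\},s')\}$ with $\sup\emptyset=\bbot$, one direction by the domination invariant $y_j\ge x_j$ via monotonicity, and the other by observing that a finite pointwise supremum is attained by an actual $T$-edge at each energy value. The paper phrases this as the semantic graph of $(S,T')$ lying ``above'' that of $(S,T)$, but the content is identical.
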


Hence we may, without loss of generality, view the transitions $T$ of an
energy automaton as a matrix $T: S\times S\to \E$.  We can also let
$S=\{ 1,\dots, n\}$ and assume that the set of accepting states is $F=\{
1,\dots, k\}$ for $k\le n$.  Further, we can represent an initial state
$s_0\in S$ by the $s_0$th unit (column) vector $I^{ s_0}\in\{ \bbot,
\id\}^n$, defined by $I^{ s_0}_i= \id$ iff $i= s_0$, and $F$ by the
(column) vector $F^{ \le k}\in\{ \bbot, \id\}^n$ given by $F^{ \le k}_i=
\id$ iff $i\le k$.  Note that $T\in \E^{ n\times n}$ is an $n\times
n$-matrix of energy functions; as composition of energy functions is
written right-to-left, $T_{ ij}\in \E$ is the function on the transition
from $s_j$ to $s_i$.

\begin{theorem}
  \label{th:reach}
  Let $\E'\subseteq \E$ be a subalgebra.  For any $\E'$-automaton $(S,
  T)$ with $S=\{ 1,\dots, n\}$, $F=\{ 1,\dots, k\}$, $k\le n$, $s_0\le
  n$, and $x_0\in \Realnn$, we have $\preach_{ \E'}( S, T)( F, s_0,
  x_0)= \TRUE$ if, and only if, $\transpose F^{ \le k} T^* I^{ s_0}(
  x_0)\ne \bot$.
\end{theorem}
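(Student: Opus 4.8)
The statement connects a combinatorial notion (existence of an accepting finite run in the energy automaton) with an algebraic expression $\transpose F^{\le k} T^* I^{s_0}(x_0)$. The natural strategy is to unfold both sides and match them. The key bridging fact is star-continuity: by~\eqref{eq:starcontM}, $T^* = \sup_{m\in\Nat} T^m$, so $\transpose F^{\le k} T^* I^{s_0} = \sup_{m\in\Nat} \transpose F^{\le k} T^m I^{s_0}$, with the supremum taken pointwise (Lemma~\ref{le:distr} and Lemma~\ref{le:semimod}). Evaluating at $x_0$ then gives a supremum over $m$ of the values $\bigl(\transpose F^{\le k} T^m I^{s_0}\bigr)(x_0)$. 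So the whole argument reduces to understanding the finite powers $T^m$.

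**The core combinatorial identity.** First I would establish, for each fixed $m$, that
\begin{equation*}
  \bigl(T^m\bigr)_{ij} = \bigvee_{\pi} f_\pi\,,
\end{equation*}
where $\pi$ ranges over all paths of length $m$ from $s_j$ to $s_i$ and $f_\pi$ is the combined energy function of $\pi$. This is the standard expansion of matrix powers over an idempotent semiring (Lemma~\ref{le:idmsring}): matrix multiplication sums (i.e.\ takes $\vee$) over intermediate states and composes the entries, which exactly enumerates all length-$m$ paths while composing their labels right-to-left --- matching the convention that $T_{ij}$ is the label from $s_j$ to $s_i$. This is a routine induction on $m$ using the pointwise supremum of Lemma~\ref{le:distr}. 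Given this, $\transpose F^{\le k} T^m I^{s_0}$ selects, by definition of the unit vectors $I^{s_0}$ and $F^{\le k}$, exactly those paths that start in $s_0$ and end in some accepting state $i\le k$, so
\begin{equation*}
  \bigl(\transpose F^{\le k} T^m I^{s_0}\bigr)(x_0) = \sup_{\pi} f_\pi(x_0)\,,
\end{equation*}
the supremum over all length-$m$ paths from $s_0$ to $F$.

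**Matching the two sides.** Taking the supremum over all $m$, the right-hand side $\transpose F^{\le k} T^* I^{s_0}(x_0)$ equals $\sup_{\pi} f_\pi(x_0)$ over \emph{all} finite paths $\pi$ (of any length) from $s_0$ to an accepting state. Now I translate the meaning of "$\ne\bot$": a single path $\pi$ contributes a value $f_\pi(x_0)\ne\bot$ precisely when the combined function $f_\pi$ is defined at $x_0$, which --- by the definition of composition of energy functions and the definition of a run via global states $x_{\ell+1}=f_{\ell+1}(x_\ell)$ --- is exactly the condition that $\pi$ induces a genuine run from $(s_0,x_0)$ that never drops to $\bot$. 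Hence the supremum is $\ne\bot$ iff \emph{some} such path $\pi$ has $f_\pi(x_0)\ne\bot$, iff there exists an accepting finite run from $(s_0,x_0)$, which is precisely $\preach_{\E'}(S,T)(F,s_0,x_0)=\TRUE$. The direction from runs to a nonbottom value is immediate; the converse uses that $\bbot$ is the bottom element, so the pointwise supremum exceeds $\bot$ only if one of the summands does.

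**The main obstacle.** The only genuinely delicate point is the passage from a nonbottom \emph{supremum} to the existence of a \emph{single} witnessing path. Because $\E$ is not a complete semiring (as the example before Lemma~\ref{le:distl} shows, left-distributivity fails), one must be careful that the supremum $\sup_m \transpose F^{\le k} T^m I^{s_0}$ is computed correctly; this is exactly what star-continuity~\eqref{eq:starcontM} guarantees, and it is the reason the theorem is stated over a subalgebra $\E'$ closed under star. Once star-continuity is in hand, the argument is clean: since the codomain of the product is $[0,\infty]_\bot$ and $\bot$ is the minimum, a supremum is $\ne\bot$ iff at least one term is $\ne\bot$, and each nonbottom term corresponds to an actual accepting run. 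I therefore expect the write-up to be short, resting almost entirely on the finite-power expansion and on~\eqref{eq:starcontM}.
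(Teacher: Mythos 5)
Your proposal is correct and follows essentially the same route as the paper: both rest on star-continuity~\eqref{eq:starcontM} to write $\transpose F^{\le k} T^* I^{s_0}$ as the pointwise supremum of $\transpose F^{\le k} T^n I^{s_0}$, and then observe that this supremum differs from $\bot$ iff some finite power does, \ie~iff some finite path from $s_0$ to $F$ yields a genuine run. The paper's proof is just a terser version of yours, leaving the path-expansion of $T^n$ and the ``supremum is $\ne\bot$ iff some term is'' step implicit.
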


\begin{proof}
  Here $\transpose F^{ \le k}$ denotes the transpose of $F^{ \le k}$.
  By~\eqref{eq:starcontM}, we have $\transpose F^{ \le k} T^* I^{ s_0}=
  \sup_n( \transpose F^{ \le k} T^n I^{ s_0})$, so that $\transpose F^{
    \le k} T^* I^{ s_0}( x_0)\ne \bot$ iff $\transpose F^{ \le k} T^n
  I^{ s_0}( x_0)\ne \bot$ for some $n\in \Nat$, \ie~iff there is a
  finite run from $( s_0, x_0)$ which ends in a state in $F$. \qed
\end{proof}

Referring back to the example automaton $( S, T)$ from
Fig.~\ref{fi:eauto}, we display in Fig.~\ref{fi:eautoclos} the automaton
with transition matrix $T^*$.

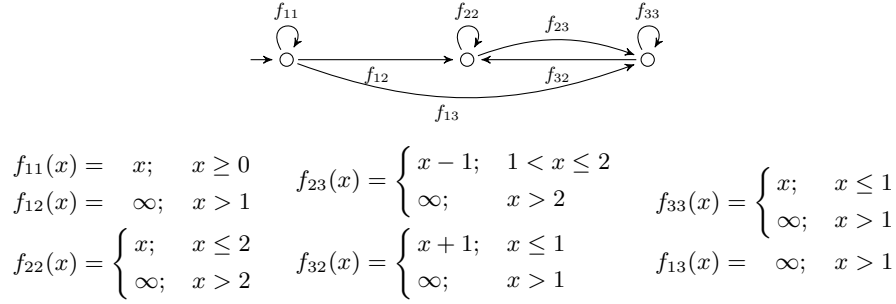
\begin{figure}[tbp]
  \centering
  \begin{tikzpicture}[->,>=stealth',shorten >=1pt,auto,node
    distance=2.0cm,initial text=,scale=.8,transform shape]
    \tikzstyle{every node}=[font=\small]
    \tikzstyle{every state}=[fill=white,shape=circle,inner
    sep=.5mm,minimum size=2.2mm,outer sep=.8mm]
    \node[state,initial] (1) at (0,0) {};
    \node[state] (2) at (3,0) {};
    \node[state] (3) at (6,0) {};
    \path (1) edge[out=120,in=60,loop] node[above] {$f_{ 11}$} (1);
    \path (1) edge node[below] {$f_{ 12}$} (2);
    \path (1) edge[out=-20,in=200] node[below left] {$f_{ 13}$} (3);
    \path (2) edge[out=120,in=60,loop] node[above] {$f_{ 22}$} (2);
    \path (2) edge[out=20,in=160] node[above] {$f_{ 23}$} (3);
    \path (3) edge node[below] {$f_{ 32}$} (2);
    \path (3) edge[out=120,in=60,loop] node[above] {$f_{ 33}$} (3);
  \end{tikzpicture}

  \vspace{-1ex}
  {\small
    \begin{minipage}{.3\linewidth}
      \begin{align*}
        f_{ 11}( x) &= \hspace*{.9em} x; \hspace*{1.4em} x\ge 0 \\
        f_{ 12}( x) &= \hspace*{.9em} \infty; \hspace*{1.0em} x> 1 \\
        f_{ 22}( x) &= \left\{
          \begin{aligned}
            &x;\hspace*{.4em} && x\le 2 \\
            &\infty; && x> 2
          \end{aligned}
        \right.
      \end{align*}
    \end{minipage}
    \hfill
    \begin{minipage}{.3\linewidth}
      \begin{align*}
        f_{ 23}( x) &= \left\{
          \begin{aligned}
            &x- 1; && 1< x\le 2 \\
            &\infty; && x> 2
          \end{aligned}
        \right. \\
        f_{ 32}( x) &= \left\{
          \begin{aligned}
            &x+ 1; && x\le 1 \\
            &\infty; && x> 1
          \end{aligned}
        \right.
      \end{align*}
    \end{minipage}
    \hfill
    \begin{minipage}{.3\linewidth}
      \begin{align*}
        f_{ 33}( x) &= \left\{
          \begin{aligned}
            &x;\hspace*{.4em} && x\le 1 \\
            &\infty; && x> 1
          \end{aligned}
        \right. \\
        f_{ 13}( x) &= \hspace*{.9em} \infty; \hspace*{1.0em} x> 1
      \end{align*}
    \end{minipage}}
  \caption{%
    \label{fi:eautoclos}
    The closure of the automaton from Fig.~\ref{fi:eauto}.}
\end{figure}

\begin{theorem}
  \label{th:buchi2}
  Let $\E'\subseteq \E$ be a subalgebra.  For any $\E'$-automaton $(S,
  T)$ with $S=\{ 1,\dots, n\}$, $F=\{ 1,\dots, k\}$, $k\le n$, $s_0\le
  n$, and $x_0\in \Realnn$, we have $\pbuchi_{ \E'}( S, T)( F, s_0,
  x_0)= T^{ \omega_k} I^{ s_0}( x_0)$.
\end{theorem}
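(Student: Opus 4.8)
The plan is to read $T^{\omega_k}I^{s_0}(x_0)$ as a single Boolean. Since $I^{s_0}$ carries $\id$ in row $s_0$ and $\bbot$ elsewhere, and the action satisfies $u\,\id=u$ and $u\,\bbot=\FALSE$ for $u\in\V$, the product $T^{\omega_k}I^{s_0}$ equals the $s_0$th component $(T^{\omega_k})_{s_0}\in\V$, so the claim is $\pbuchi_{\E'}(S,T)(F,s_0,x_0)=(T^{\omega_k})_{s_0}(x_0)$. Splitting $T$ as in~\eqref{eq:matrix} with $a$ the $k\times k$ block on $F=\{1,\dots,k\}$ and $d$ the block on $\bar F=\{k+1,\dots,n\}$, definition~\eqref{eq:omegaM} gives $T^{\omega_k}=\begin{bmatrix}N^\omega & N^\omega b d^*\end{bmatrix}$ with $N=a\vee bd^*c\in\E^{k\times k}$. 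I would therefore reduce the statement to a semantic reading of $N^\omega$ together with the reachability reading of $bd^*$ already supplied by star-continuity as in Theorem~\ref{th:reach}.

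The core is a lemma I would prove by induction on the matrix size: for any $M\in\E^{p\times p}$, state $i$ and $x\in[0,\infty]_\bot$, we have $(M^\omega)_i(x)=\TRUE$ if and only if there is an infinite run of the $M$-automaton from $(i,x)$, taking at each step the supremum energy $x_{t+1}=M_{i_{t+1},i_t}(x_t)$ (which is optimal by monotonicity of energy functions). The base case $p=1$ is the identity $f^\omega=\prod_{i=0}^\infty f$ together with the observation that, by the super-unit slope~\eqref{eq:deriv1}, $f(x)<x$ forces the iterates to decrease by a fixed amount and hence reach $\bot$, whereas $f(x)\ge x$ keeps them above $x$. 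For the induction step I split off one state to put $M$ in block form~\eqref{eq:matrix} with $d$ of size $1$; the two summands of the first block of~\eqref{eq:omegaM}, namely $(a\vee bd^*c)^\omega$ and $d^\omega c(a\vee bd^*c)^*$, correspond exactly to the two kinds of infinite $M$-run from a first-block state — those revisiting the first $p-1$ states infinitely often and those eventually trapped in the last state — and every infinite run is of one kind or the other. The inductive hypothesis reads the submatrix $\omega$'s, star-continuity~\eqref{eq:starcontM} reads the $*$'s, and Lemma~\ref{le:prod_is_prod} justifies regrouping the infinite product into macro-steps; the second block is symmetric.

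Applying the lemma to $N$ settles the accepting block: an infinite $N$-run is precisely a $T$-run visiting $F$ infinitely often, since each entry $N_{i'i}=a_{i'i}\vee(bd^*c)_{i'i}$ collects all $T$-paths from accepting $i$ to accepting $i'$ whose interior lies in $\bar F$, and concatenating such macro-steps (again via Lemma~\ref{le:prod_is_prod} and star-continuity for the interior $d^*$) reconstitutes the run between consecutive visits to $F$. For $s_0\in F$ this yields $(T^{\omega_k})_{s_0}(x_0)=(N^\omega)_{s_0}(x_0)$, which by the lemma is the Büchi value. For $s_0\in\bar F$ the second block $(N^\omega bd^*)_{s_0}(x_0)=\bigvee_{i\in F,\,q\in\bar F}(N^\omega)_i\bigl(b_{iq}((d^*)_{q s_0}(x_0))\bigr)$ reads: run inside $\bar F$ (the $d^*$), take the first step into $F$ (the $b$), then loop through $F$ forever (the $N^\omega$) — exactly a Büchi-accepting run whose first $F$-visit is delayed.

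The delicate point, and the reason the completeness-based arguments of~\cite{EK2} do not apply directly, is to match the \emph{algebraic} computation — which freely produces the value $\infty$ through $*$ and uses suprema — with the existence of a \emph{concrete} run whose energies stay in $\Realnn$. The key fact I would exploit is that the blocks $a,b,c,d$ are genuine energy functions, hence finite on finite inputs, so an $\infty$ produced by $d^*$ never comes from a single path but always signals that \emph{arbitrarily large finite} energy is reachable by pumping a positive cycle in $\bar F$; by monotonicity and~\eqref{eq:deriv1} such unboundedly large finite energy is as good as $\infty$ for prolonging the run, and finiteness of $S$ lets me extract, by a König-type argument, a single concrete infinite finite-energy run realizing the supremum behaviour. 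Making this extraction precise — in particular ruling out the pathology of a supremum that is finite but attained only in the limit — is the step I expect to be the main obstacle, and it is exactly where star-continuity and the specific shape of energy functions, rather than mere completeness, must be invoked.
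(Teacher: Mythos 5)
Your proposal is correct and follows essentially the same route as the paper: the paper's own proof establishes exactly your core lemma, in the form $(M^\omega)_i=\sup\{\cdots M_{k_2,k_1}M_{k_1,i}\}$, by induction on the block decomposition~\eqref{eq:omegaM}, with the two summands of the first block matching your two kinds of infinite runs (those returning to the first block infinitely often versus those eventually trapped in the second). The ``delicate point'' you flag --- distributing the infinite product over the suprema hidden in $d^*$ and in $a\vee bd^*c$ without completeness of $(\E,\V)$ --- is not actually left open, since it is precisely what Lemmas~\ref{le:smod-supseq} and~\ref{le:omega_sup} (proved for Proposition~\ref{pr:smodpair} using the slope condition~\eqref{eq:deriv1}) supply, so your instinct about where star-continuity and the specific shape of energy functions must enter is exactly right.
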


\begin{proof}%[of Theorem~\ref{th:buchi2}]
  This is a standard result for \emph{complete} semiring-semimodule
  pairs, \cf~\cite{EK2}.  Now $( \E, \V)$ is not complete, but the
  properties developed in the previous section allow us to show the
  result nevertheless.  We need to see that for all $M \in \E^{n \times
    n}$ and $1 \leq i \leq n$,
  \begin{equation*} 
    (M^\omega)_{i}=  \sup\{\cdots M_{k_3,k_2}M_{k_2,k_1}M_{k_1,i} : 1
    \leq k_1,k_2,\ldots \leq n \}\,,
  \end{equation*}
  which we shall deduce inductively from~\eqref{eq:omegaM}.

  Let $a\in \E^{ \ell\times \ell}$, $d\in \E^{ m\times m}$, for $\ell+
  m= n$, and let $i\in\{ 1,\dots, \ell\}$.  Then the $i$th component of
  $M^\omega$ is the $i$th component of $(a \vee bd^*c)^\omega \vee
  d^\omega c (a \vee bd^*c)^*$.  By induction hypothesis, the $i$th
  component of $(a \vee bd^*c)^\omega$ is the supremum of all infinite
  products $(\cdots M_{k_2,k_1}M_{k_1,i})$ such that $1 \leq k_j \leq m$
  for an infinite number of indices $j$, and similarly, the $i$th
  component of $d^\omega c (a \vee bd^*c)^*$ is the supremum of all
  infinite products $(\cdots M_{k_2,k_1}M_{k_1,i})$ such that $1 \leq
  k_j \leq m$ for a finite number of indices $j$.  Thus, the $i$th
  component of $(a \vee bd^*c)^\omega \vee d^\omega c (a \vee bd^*c)^*$
  is the supremum of all infinite products $(\cdots
  M_{k_2,k_1}M_{k_1,i})$.  \qed
\end{proof}

We remark that our decision algorithms are \emph{static} in the sense
that the matrix expressions can be pre-computed and then re-used to
decide reachability and B{\"u}chi acceptance for different values $x_0$
of initial energies.

Using elementary reasoning on infinite paths, we can provide an
alternative characterization of B{\"u}chi acceptance which does not use
the omega operations:

\begin{theorem}
  \label{th:buchi}
  Let $\E'\subseteq \E$ be a subalgebra.  For any $\E'$-automaton $(S,
  T)$ with $S=\{ 1,\dots, n\}$, $F=\{ 1,\dots, k\}$, $k\le n$, $s_0\le
  n$, and $x_0\in \Realnn$, we have $\pbuchi_{ \E'}( S, T)( F, s_0,
  x_0)= \TRUE$ if, and only if, there exists $j\le k$ for which
  \begin{equation*}
    \transpose I^j T T^* I^j \, \transpose I^j T^* I^{ s_0}( x_0)\ge
    \transpose I^j T^* I^{ s_0}( x_0)\ne \bot.
  \end{equation*}
\end{theorem}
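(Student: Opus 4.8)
The plan is to read the two ingredients of the condition operationally and then argue necessity and sufficiency separately. Write $g= \transpose I^j T T^* I^j$ and $p= \transpose I^j T^* I^{s_0}$, and set $y= p(x_0)$. By star-continuity~\eqref{eq:starcontM} together with the fact that suprema in $\E$ are pointwise (Lemma~\ref{le:distr}), $p= \sup_n \transpose I^j T^n I^{s_0}$, so $y$ is the supremum of the energy levels reachable at state $j$ along finite runs from $(s_0,x_0)$; likewise $g= \sup_{m\ge 1} \transpose I^j T^m I^j$ is the supremum, over all cyclic paths at $j$ of length at least one, of their combined energy functions. Thus the condition ``$g(y)\ge y\ne \bbot$'' reads: $j$ is reachable from $(s_0,x_0)$, and at the highest attainable energy level at $j$ there is a cycle that does not strictly lose energy. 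Note that $g\in\E$, being a supremum of energy functions, again satisfies the slope property~\eqref{eq:deriv1}.

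For necessity I would argue as follows. If there is an accepting run, then since $F$ is finite some $j\le k$ is visited infinitely often, and the prefix up to its first visit witnesses $y\ne \bbot$. Let $z_1,z_2,\dotsc\ne \bbot$ be the energies at the successive visits to $j$, and let $f_m\in\E$ be the combined energy function of the cyclic subpath between visits $m$ and $m+1$, so that $z_{m+1}=f_m(z_m)$ and $f_m\le g$. Since $z_m\le y$, the slope property~\eqref{eq:deriv1} applied to $f_m$ gives $g(y)\ge f_m(y)\ge f_m(z_m)+(y-z_m)=y+(z_{m+1}-z_m)$, and hence $g(y)\ge y+\sup_m(z_{m+1}-z_m)$. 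As the $z_m$ lie in $\Realnn$ and are bounded below by $0$, the differences $z_{m+1}-z_m$ cannot be bounded above by a fixed negative constant (else $z_m\to-\infty$), so $\sup_m(z_{m+1}-z_m)\ge 0$ and therefore $g(y)\ge y$. The case where some $z_m=\infty$ forces $y=\infty$, and since a cycle is taken at $j$ we have $g\ne\bbot$, so $g(\infty)=\infty\ge y$.

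For sufficiency, assume $y\ne \bbot$ and $g(y)\ge y$. The central observation, again from~\eqref{eq:deriv1}, is that $g(y)\ge y$ propagates upward: for every $y'\ge y$ one has $g(y')\ge g(y)+(y'-y)\ge y'$, so at every level at least $y$ the best cycle at $j$ is non-losing. I would then construct an accepting run by first choosing a finite path from $s_0$ to $j$ that reaches an energy level as close to $y$ as needed, and afterwards appending an infinite sequence of cyclic subpaths at $j$. When the relevant suprema are realized by concrete paths, one reaches $j$ at level $y$ and repeats a single non-losing cycle, keeping the energy $\ge y$ forever while visiting $j\in F$ infinitely often; when they are not realized, I would instead select cycles whose individual energy losses form a convergent series, so that the running energy stays above a fixed positive threshold while $j$ is revisited after each cycle.

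The hard part will be precisely this last point: the suprema defining $y$ and $g$ need not be attained by any single finite path or cycle, so the algebraic inequality $g(y)\ge y$ must be turned into an \emph{actual} infinite, energy-feasible run rather than a mere limit. The delicate step is to choose the successive cycles so that their accumulated loss converges \emph{and} each chosen cycle is enabled at the current energy, i.e.\ its lower bound of definition lies below that energy. The situation $y=\infty$, where arbitrarily high energies are reachable at $j$ but are possibly never attained, requires separate treatment along the same lines, and I expect it to be the most technically demanding case to settle.
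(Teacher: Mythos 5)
Your necessity argument is correct and is genuinely different from the paper's. The paper proves this direction by surgery on the accepting run (removing non-producing loops, pumping producing ones, and deriving a contradiction from $f_\pi^\omega(\cdot)=\bot$ via an auxiliary lemma that $f(x)<x$ forces $f^n(x)\to\bot$); your two-line computation --- $g(y)\ge f_m(y)\ge f_m(z_m)+(y-z_m)=y+(z_{m+1}-z_m)$, followed by the observation that the increments $z_{m+1}-z_m$ of a sequence bounded below cannot have negative supremum --- reaches the same conclusion directly from \eqref{eq:deriv1} and is, to my mind, cleaner. The edge cases ($z_m=\infty$ forcing $y=\infty$, and $g\ne\bbot$ because some cycle is actually traversed) are handled correctly.

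The sufficiency direction, however, has a genuine gap, and it is exactly where you flag it --- but your proposed repair would not work. The difficulty is that \eqref{eq:deriv1} only controls energy functions \emph{from above} when the argument decreases: from $f_\sigma(y)\ge y-\epsilon$ you get $f_\sigma(y')\le f_\sigma(y)-(y-y')$ for $y'<y$, and no lower bound at all; the cycle $\sigma$ may be disabled (value $\bbot$) just below $y$. So choosing cycles ``whose losses form a convergent series'' cannot be made to work: after the first slightly-losing cycle you have no control over the next one. The correct resolution --- which the paper itself elides by declaring the direction ``clear'' --- is to show that the suprema are effectively attained. Since $\transpose I^j T^* I^{s_0}$ and $\transpose I^j TT^*I^j$ are finite $\vee,\circ,{}^*$ expressions in the entries of $T$ (formula \eqref{eq:starM}), and by Lemma~\ref{le:prop-star} each star either acts as the identity at the point where it is evaluated or jumps to $\infty$, a finite non-$\bbot$ value of $y=\transpose I^j T^*I^{s_0}(x_0)$ is realized by an actual finite path, and a finite value of $g(y)$ is realized by an actual cycle; one then iterates a single non-losing cycle, the energy staying $\ge y$ forever by the upward propagation you already noted. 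The remaining case $y=\infty$ is not handled by iterating a cycle at $j$ at all: one must locate an energy-gaining cycle (which must exist for the attainable energies at $j$ to be unbounded, by the argument in the proof of Lemma~\ref{le:distl}) and interleave pumping excursions with returns to $j$, i.e.\ exhibit a longer non-losing cycle through $j$. As it stands your proof establishes only one implication of the theorem.
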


\begin{corollary}
  \label{co:decidereach}
  \label{co:decidebuchi}
  For subalgebras $\E'\subseteq \E$ of computable functions in which it
  is decidable for each $f\in \E'$ whether $f( x)\le x$,
  Problems~\ref{pb:reach} and~\ref{pb:buchi} are decidable.  For an
  energy automaton with $n$ states and $m$ transitions, the decision
  procedures use $O( m+ n^3)$, respectively $O( m+ n^4)$, algebra
  operations.
\end{corollary}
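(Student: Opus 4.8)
The plan is to separate the corollary into a decidability statement and a complexity statement, both driven by the closed-form characterizations of Theorems~\ref{th:reach} and~\ref{th:buchi2} (with Theorem~\ref{th:buchi} available as an alternative): these reduce each problem to evaluating a fixed matrix expression over the pair $(\E',\V)$ and reading off a Boolean value at the computable initial energy $x_0$.

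For decidability I would argue that every operation occurring in these expressions is effective on $\E'$. Since $\E'$ is a subalgebra it is closed under $\vee$, $\circ$ and $*$, and $\omega$ sends $\E'$ into $\V$; the matrix operations on $\E'^{n\times n}$ and $\V^n$ (sum, product, action, $*$ via~\eqref{eq:starM}, $\omega$ via~\eqref{eq:omegaM}) are finite combinations of these scalar operations. Computability of $\vee$ and $\circ$ is immediate. For $*$, the definition of $f^*$ together with the threshold shape established in Lemma~\ref{le:prop-star} shows that $f^*$ is completely determined by the predicate $f(x)\le x$; likewise the $\omega$-operation and the comparisons ``$\ne\bot$'' and ``$\ge$'' appearing in Theorems~\ref{th:reach}, \ref{th:buchi2} and~\ref{th:buchi} reduce to deciding predicates of this same form. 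As $x_0$ is computable, evaluating the resulting functions at $x_0$ and forming the final Boolean combination is effective, which yields decidability of both problems.

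For the complexity bounds I would proceed algorithmically. First, Lemma~\ref{le:onetransition} lets me assemble $T\in\E'^{n\times n}$ from the $m$ given transitions by taking one supremum per state pair, in $O(m)$ operations (the $O(n^2)$ initialization with $\bbot$ is absorbed into the later cost). For reachability, Theorem~\ref{th:reach} reduces the question to $\transpose F^{\le k}T^*I^{s_0}(x_0)\ne\bot$; I would compute $T^*$ by the standard Kleene/Floyd--Warshall matrix-star procedure, which is correct here because $\E$ is a star-continuous Kleene algebra (Proposition~\ref{pr:EcontKA}), using $O(n^3)$ scalar operations ($n$ stars, each triggering $O(n^2)$ products and suprema). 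Extracting and evaluating the scalar $\transpose F^{\le k}T^*I^{s_0}$ at $x_0$ is then negligible, for a total of $O(m+n^3)$. For B\"uchi acceptance I would use Theorem~\ref{th:buchi2} and compute $T^{\omega_k}$ from~\eqref{eq:omegaM}: peeling off one state per recursion step forces a matrix-star computation ($O(n^3)$) at each of the $n$ levels, giving $O(n^4)$, hence $O(m+n^4)$ overall.

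The main obstacle I anticipate is the effectivity argument rather than the operation counting: showing that the single decidable predicate $f(x)\le x$ suffices to carry out $*$ and $\omega$ and, above all, the ``$\ge$'' comparison in the B\"uchi test, where the boundary case $f(x)=x$ must be separated from $f(x)<x$ (these produce opposite values of $f^\omega$). I would address this by using that $\{x : f(x)\ge x\}$ is upward closed, which follows directly from~\eqref{eq:deriv1}, so that $f^\omega$ is a threshold function, and by reducing each comparison arising in the computation to a query of the assumed oracle. Verifying that this reduction composes correctly through the matrix recursions~\eqref{eq:starM} and~\eqref{eq:omegaM} is, I expect, the technically delicate part.
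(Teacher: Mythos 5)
Your reachability half and your operation counts coincide with the paper's proof: Lemma~\ref{le:onetransition} costs $O(m)$ suprema, Floyd--Warshall over the star-continuous Kleene algebra gives $T^*$ in $O(n^3)$ operations, and evaluating $\transpose F^{\le k}T^*I^{s_0}$ at $x_0$ adds only a constant number of further operations. The divergence is in the B\"uchi half. The paper deliberately does \emph{not} go through Theorem~\ref{th:buchi2} and the matrix omega operation; it invokes Theorem~\ref{th:buchi}, whose test involves only $\vee$, $\circ$, $*$ and evaluation --- all of which stay inside the subalgebra $\E'$ and are computable from the stated hypotheses, since by the threshold shape established in Lemma~\ref{le:prop-star} decidability of $f(x)\le x$ is exactly what is needed to compute $f^*$ --- followed by a single pointwise comparison per accepting state.

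Your route through $T^{\omega_k}$ has a gap that you correctly sense but do not close. Computing $f^\omega(x)$ requires deciding the dichotomy $f(x)\ge x$ versus $f(x)<x$, whereas the hypothesis gives you $f(x)\le x$ versus $f(x)>x$; these disagree precisely at the boundary $f(x)=x$, where $f^\omega(x)=\TRUE$ while $f^*(x)=x$, and for general computable functions the equality $f(x)=x$ is not decidable, so there is no ``reduction of each comparison to a query of the assumed oracle'' as you propose. Moreover, the recursion~\eqref{eq:omegaM} forces you to represent and combine elements of $\V$ symbolically at every level of the induction, which is not covered by the assumption that $\E'$ is a subalgebra of $\E$ (closure under $\vee$, $\circ$, $*$ only). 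Both difficulties disappear if you replace Theorem~\ref{th:buchi2} by Theorem~\ref{th:buchi}, as the paper does: the only $\omega$-flavoured content left is the concrete comparison $\transpose I^j T T^* I^j(y)\ge y$ at the computable value $y=\transpose I^j T^* I^{s_0}(x_0)$, performed once for each of the $k\le n$ accepting states, which together with the computation of $T^*$ and $TT^*$ accounts for the $O(m+n^4)$ bound. (Admittedly even that final test is a $\ge$-comparison, so the hypothesis as literally stated is being stretched there too; but the paper confines the stretch to finitely many pointwise tests rather than building it into the entire symbolic matrix computation.)
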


\begin{proof}
  Maxima and compositions of computable functions are again computable,
  and if it is decidable for each $f\in \E'$ whether $f( x)\le x$, then
  also $f^*$ is computable for each $f\in \E'$.  Hence all matrix
  operations used in Lemma~\ref{le:onetransition} and
  Theorems~\ref{th:reach} and~\ref{th:buchi} are computable.  The number
  of operations necessary in the construction in the proof of
  Lemma~\ref{le:onetransition} is $O( m)$, and, using \eg~the
  Floyd-Warshall algorithm to compute $T^*$, $O( n^3)$ operations are
  necessary to compute $\transpose I^{ \le k} T^* I^{ s_0}$.  \qed
\end{proof}

We proceed to identify two important subclasses of computable energy
functions, which cover most of the related work mentioned in the
introduction, and to give complexity results on their reachability and
B{\"u}chi acceptance problems.

The \emph{integer update functions} in $\E$ are the functions $f_k$,
for $k\in \Int$, given by
\begin{equation*}
  f_k( x)= \left\{
  \begin{array}{cl}
    x+ k &\quad\text{if } x\ge \max( 0, -k)\,, \\
    \bot &\quad\text{otherwise}\,,
  \end{array} \right.
\end{equation*}
together with $f_\infty:= \ttop$.  These are the update functions
usually considered in integer-weighted automata and
VASS~\cite{conf/ictac/FahrenbergJLS11, DBLP:conf/fsttcs/ChatterjeeDHR10,
  DBLP:conf/icalp/ChatterjeeD10, DBLP:conf/icalp/BrazdilJK10,
  DBLP:conf/rp/Chaloupka10, DBLP:journals/ipl/Chan88,
  DBLP:conf/formats/BouyerFLMS08}.  We have $f_\ell f_k= f_{ k+ \ell}$
and $f_k\vee f_\ell= f_{ \max( k, \ell)}$, and $f_k^*= f_0$ for $k\le 0$
and $f_k^*= f_\infty$ for $k> 0$, whence the class $\Eint$ of integer
update functions forms a subalgebra of $\E$.  A function $f_k\in \Eint$
can be represented by the integer $k$, and algebra operations can then
be performed in constant time.  Hence Corollary~\ref{co:decidereach}
implies the following result.

\begin{theorem}
  For $\Eint$-automata, Problems~\ref{pb:reach} and~\ref{pb:buchi} are
  decidable in PTIME.
\end{theorem}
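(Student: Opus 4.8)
The plan is to instantiate Corollary~\ref{co:decidereach} with $\E'=\Eint$, using the integer representation of the functions $f_k$, and then to argue that the unit-cost bound on the number of algebra operations given there in fact translates into a genuine polynomial-time bound in the bit model.

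First I would verify that $\Eint$ meets the hypotheses of the corollary. The discussion preceding the statement already establishes that $\Eint$ is a subalgebra of $\E$, that each $f_k$ is represented by the integer $k$ (with $f_\infty$ represented by a distinguished symbol), and that join, composition and star act on this representation by $f_k\vee f_\ell=f_{\max(k,\ell)}$, $f_\ell f_k=f_{k+\ell}$, and $f_k^*=f_0$ for $k\le 0$, $f_k^*=f_\infty$ for $k>0$. In particular the predicate ``$f(x)\le x$'' required to compute stars is decidable for $\Eint$: for $f_k$ with $k\in\Int$ it holds exactly when $k\le 0$, and it fails for $f_\infty$. Hence join (a maximum of two integers), composition (an addition) and star (a sign test) are all carried out by elementary operations on integers. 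By Corollary~\ref{co:decidereach}, Problems~\ref{pb:reach} and~\ref{pb:buchi} are then decided using $O(m+n^3)$, respectively $O(m+n^4)$, such operations.

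Second, to turn ``constant time per operation'' into PTIME I have to control the bit-size of the integers that arise, since a naive composition of intermediate results $f_a f_b=f_{a+b}$ could in principle double magnitudes and so blow up exponentially over $O(n^4)$ steps. The key observation I would use is that the star operation collapses every cycle: a cycle of positive total weight contributes $f_\infty$, while a cycle of non-positive total weight never improves a supremum, as deleting it does not decrease the path weight. Consequently every finite integer occurring in $T^*$, and in the matrix expressions of Theorems~\ref{th:reach} and~\ref{th:buchi}, is the weight of a \emph{simple} path, hence bounded in absolute value by $(n-1)W$ where $W=\max\{|k|:f_k\text{ labels a transition}\}$. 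Each such integer therefore has $O(\log n+\log W)$ bits, so every maximum, addition and sign test runs in time polynomial in the size of the input.

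I expect this bit-size bound to be the only genuine obstacle; once it is in place, combining it with the operation count from Corollary~\ref{co:decidereach} yields a running time polynomial in $n$, $m$ and $\log W$, and hence in the size of the input. Everything else, namely decidability of the star predicate and the subalgebra property of $\Eint$, is immediate from the remarks preceding the statement.
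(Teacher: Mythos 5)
Your proposal is correct and follows essentially the same route as the paper, which likewise obtains the result by representing each $f_k$ by the integer $k$ and invoking Corollary~\ref{co:decidereach} with its $O(m+n^3)$ and $O(m+n^4)$ operation counts. Your additional argument bounding the magnitudes of intermediate integers by $(n-1)W$ (via collapsing positive cycles to $f_\infty$ and discarding non-positive ones) is a welcome justification of the paper's terser claim that each algebra operation takes constant time.
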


Next we turn our attention to piecewise affine functions as used
in~Fig.~\ref{fi:example}.

\begin{definition}
  \label{de:fpwint}
  A function $f\in \E$ is said to be \emph{(rational) piecewise affine}
  if there exist $x_0< x_1<\dots< x_k\in \Rat$ such that $f( x)\ne \bot$
  iff $x\ge x_0$ or $x> x_0$, $f( x_j)\in \Rat\cup\{ \bot\}$ for all
  $j$, and all restrictions $f\rest{ \mathopen] x_j, x_{ j+
      1}\mathclose[}$ and $f\rest{ \mathopen] x_k, \infty\mathclose[}$
  are affine functions $x\mapsto a_jx+ b_j$ with $a_j, b_j\in \Rat$,
  $a_j\ge 1$.
\end{definition}

\begin{figure}[tbp]
  \centering
  \begin{tikzpicture}[radius=2pt]
    \path (-.5,0) edge[black!30] (5,0);
    \path (0,-.5) edge[black!30] (0,5);
    \foreach \x in {1,2,3,4,5} \node at (\x,-.4) {\small \color{gray} $\x$};
    \foreach \y in {1,2,3,4,5} \node at (-.4,\y) {\small \color{gray} $\y$};
    \fill (2,.5) circle;
    \draw (2,.5) -- (3,2) circle;
    \fill (3,2.3) circle;
    \draw (3,2.7) circle -- (4.5,4.2) circle;
    \fill (4.5,4.5) circle;
    \draw (4.5,4.5) -- (4.75,5);
    \draw (4.75,5) edge[dotted] (5,5.5);
    \node at (8,2.5) {$%
      f( x)=
      \begin{cases}
        .5 &( x= 2) \\
        1.5\, x- 2.5 &( 2< x< 3) \\
        2.3 &( x= 3) \\
        x- .3 &( 3< x< 4.5) \\
        4.5 &( x= 4.5) \\
        2\, x- 4.5 &( x> 4.5)
      \end{cases}
      $};
  \end{tikzpicture}
  \caption{%
    \label{fi:example-pw}
    A piecewise affine energy function}
\end{figure}
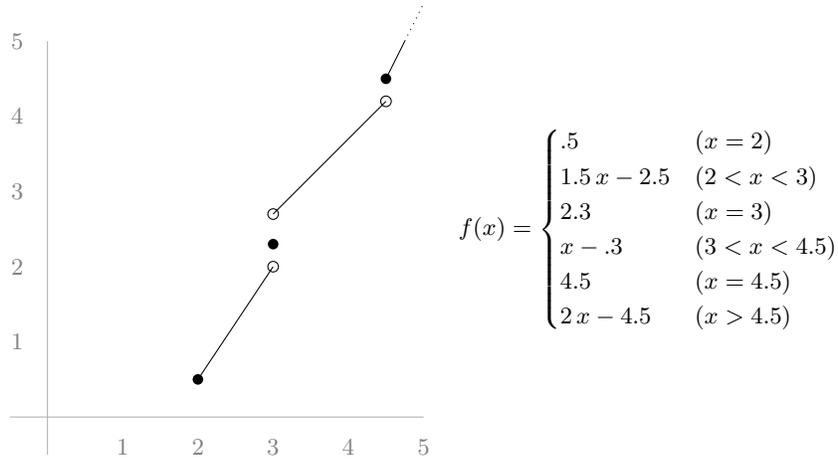

Note that the definition does not make any assertion about continuity at
the $x_j$, but~\eqref{eq:deriv1} implies that $\lim_{ x\nearrow x_j} f(
x)\le f( x_j)\le \lim_{ x\searrow x_j} f( x)$.  A piecewise affine
function as above can be represented by its break points $x_0,\dotsc,
x_k$, the values $f( x_0),\dotsc, f( x_k)$, and the numbers $a_0,
b_0,\dotsc, a_k, b_k$.  These functions arise in the reduction used
in~\cite{DBLP:conf/hybrid/BouyerFLM10} to show decidability of energy
problems for one-clock timed automata with transition updates.  The
notion of \emph{integer piecewise affine} functions is defined
similarly, with all occurrences of $\Rat$ above replaced by $\Int$.
Fig.~\ref{fi:example-pw} shows an example of a piecewise affine
function.

The class $\Epw$ of piecewise affine energy functions forms a subsemiring
of $\E$: if $f, g\in \Epw$ with break points $x_0,\dots, x_k$,
$y_0,\dots, y_\ell$, respectively, then $f\vee g$ is piecewise affine
with break points a subset of $\{ x_0,\dots, x_k, y_0,\dots, y_\ell\}$,
and $gf$ is piecewise affine with break points a subset of $\{
x_0,\dots, x_k,$ $f^{ -1}( y_0),\dots, f^{ -1}( y_\ell)\}$.

Let, for any $k\in \Rat$, $g_k^-, g_k^+:[ 0, \infty]_\bot\to[][ 0,
\infty]_\bot$ be the functions defined by
\begin{equation*}
  g_k^-( x)= \left\{
  \begin{array}{cl}
    x &\quad\text{for } x< k\,, \\
    \infty &\quad\text{for } x\ge k\,,
  \end{array} \right.
  \qquad
  g_k^+( x)= \left\{
  \begin{array}{cl}
    x &\quad\text{for } x\le k\,, \\
    \infty &\quad\text{for } x> k\,.
  \end{array} \right.
\end{equation*}
By Lemma~\ref{le:prop-star} (and noticing that for all $f\in \Epw$,
$\sup\{ x\mid f( x)\le x\}$ is rational), $\Epw$ completed with the
functions $g_k^-$, $g_k^+$ forms a subalgebra of $\E$.

Remark that, unlike $\Epw$, the class $\Epwi$ of integer piecewise affine
functions does \emph{not} form a subsemiring of $\E$, as composites of
$\Epwi$-functions are not necessarily integer piecewise affine.
As an example, for the functions $f, g\in \Epwi$ given by
\begin{equation*}
  f( x)= 2x\,, \qquad g( x)=
  \begin{cases}
    x+ 1; & x< 3\,, \\
    x+ 2; & x\ge 3\,,
  \end{cases}
\end{equation*}
we have
\begin{equation*}
  g( f( x))=
  \begin{cases}
    2x+ 1; & x< 1.5\,, \\
    2x+ 2; & x\ge 1.5\,.
  \end{cases}
\end{equation*}
which is not integer piecewise affine.  Similarly, the class of rational
\emph{affine} functions $x\mapsto ax+ b$ (without break points) is not
closed under maximum, and $\Epw$ is the semiring generated by rational
affine functions.

\begin{theorem}
  \label{th:exptime}
  For $\Epw$-automata, Problems~\ref{pb:reach} and~\ref{pb:buchi} are
  decidable in EXPTIME.
\end{theorem}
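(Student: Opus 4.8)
The plan is to invoke Corollary~\ref{co:decidereach}, which already guarantees decidability and bounds the \emph{number} of algebra operations by $O(m + n^3)$ for reachability and $O(m + n^4)$ for B\"uchi acceptance; what remains is to bound the \emph{cost} of each such operation on the piecewise affine representations. The subalgebra we work in is $\Epw$ completed with the functions $g_k^-, g_k^+$, all of which are piecewise affine and hence representable by a finite list of rational break points together with the rational coefficients $a_j, b_j$ of the affine pieces. I would measure the size of such a representation by two quantities: the number of break points, and the maximal bit-length of the rationals involved. The whole argument then reduces to controlling how these two quantities grow along the $O(n^3)$ (resp.\ $O(n^4)$) operations performed when computing $T^*$ (via, say, Floyd--Warshall) and evaluating the expressions of Theorems~\ref{th:reach} and~\ref{th:buchi}.

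The key structural observation is that the three algebra operations behave very differently with respect to break points. Composition $gf$ and maximum $f \vee g$ are \emph{additive}: as recorded in the discussion showing that $\Epw$ is a subsemiring of $\E$, the break points of $gf$ lie in $\{x_0,\dots,x_k, f^{-1}(y_0),\dots,f^{-1}(y_\ell)\}$ and those of $f \vee g$ in $\{x_0,\dots,x_k,y_0,\dots,y_\ell\}$, so each of these operations at most adds the two break-point counts, up to a constant. The star, on the other hand, \emph{collapses}: by Lemma~\ref{le:prop-star} every $f^*$ equals $\id$, $\ttop$, $g_k^-$ or $g_k^+$ for a single rational threshold $k = \sup\{x \mid f(x) \le x\}$, so $f^*$ has a bounded number of break points regardless of the size of $f$. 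Writing $B_\ell$ for the maximal break-point count among the matrix entries after $\ell$ elimination steps of Floyd--Warshall, the update $T_{ij} \vee T_{ik}(T_{kk})^* T_{kj}$ together with these facts yields a \emph{linear} recurrence of the form $B_\ell \le 3 B_{\ell-1} + O(1)$, whence $B_n = O(3^n)$. An analogous recurrence, using that composition multiplies affine coefficients while maximum and star only select or compute a single threshold, shows that the bit-lengths of the rationals also grow by at most a constant factor per elimination step, and hence stay singly exponential in $n$.

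Consequently, every intermediate function in the computation has a representation of size at most single-exponential in $n$, and each individual algebra operation---composition, maximum, star, as well as the final comparisons and evaluation at $x_0$ in Theorems~\ref{th:reach} and~\ref{th:buchi}---runs in time polynomial in these representations, i.e.\ single-exponential in $n$. Since only polynomially many such operations are performed, the total running time is single-exponential, proving membership in EXPTIME. The main obstacle, and the real content of the argument, is precisely this size analysis: one must rule out a super-exponential blow-up, which could a priori arise from iterated composition along the elimination, and it is exactly the collapsing effect of the star (Lemma~\ref{le:prop-star}) that keeps the break-point recurrence linear rather than multiplicative in $B_{\ell-1}$, thereby confining the growth to a single exponential.
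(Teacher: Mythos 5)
Your overall strategy coincides with the paper's: apply Corollary~\ref{co:decidereach} to bound the number of algebra operations, then control the size of the piecewise affine representations. But there is a genuine gap at the very first step. Corollary~\ref{co:decidereach} is conditional: it applies only to subalgebras of computable functions \emph{in which it is decidable, for each $f$, whether $f(x)\le x$}. You invoke the corollary as if it ``already guarantees decidability'' for $\Epw$, but verifying this hypothesis is exactly what cannot be skipped --- without it, neither the star operation nor your ``single rational threshold $k=\sup\{x\mid f(x)\le x\}$'' is known to be computable, and the collapsing effect of the star, on which your whole size recurrence rests, is not effective. The paper devotes the first and larger half of its proof to precisely this point, by a case analysis on the representation $(x_0,\dots,x_k,f(x_0),\dots,f(x_k),a_0,b_0,\dots,a_k,b_k)$: for $x<x_0$ or $x$ equal to a break point the comparison is immediate; on an open piece $\mathopen]x_j,x_{j+1}\mathclose[$ one compares the two endpoints, uses~\eqref{eq:deriv1} to exclude the pattern $a_jx_j+b_j>x_j$ with $a_jx_{j+1}+b_j\le x_{j+1}$, and in the remaining mixed case (which forces $a_j>1$) concludes that $f(x)\le x$ iff $x\le b_j/(1-a_j)$; the unbounded piece is handled analogously. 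This both decides $f(x)\le x$ and exhibits the threshold $k$ as an explicit rational, which is what makes $f^*$ effectively equal to one of $\id$, $\ttop$, $g_k^-$, $g_k^+$. You must supply this argument (or an equivalent one) before the rest of your proof stands.

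The complexity part of your proposal is sound and in fact sharper than the paper's. The paper simply notes that each maximum or composition may double the representation size and charges $O(2^{m+n^3}p)$ (resp.\ $O(2^{m+n^4}p)$) overall. By grouping the $O(n^3)$ operations into $n$ Floyd--Warshall elimination rounds and exploiting the star collapse, you get the linear recurrence $B_\ell\le 3B_{\ell-1}+O(1)$ and hence $O(3^n)$ break points, with a similarly singly exponential bound on coefficient bit-lengths (composition adds bit-lengths, so these also grow by a constant factor per round). Either accounting yields EXPTIME; yours gives the tighter bound.
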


\begin{proof}%[of Theorem~\ref{th:exptime}]
  We need to show that it is decidable for each $f\in \Epw$ whether $f(
  x)\le x$.  Let thus $f$ be a piecewise affine function, with
  representation $( x_0,\dots, x_k, f( x_0),$ $\ldots, f( x_k),
  a_0,\dots, a_k, b_0,\dots, b_k)$.  If $x< x_0$, then $f( x)= \bot\le
  x$.  If $x= x_j$ for some $j$, we can simply compare $x_j$ with $f(
  x_j)$.

  Assume now that $x\in \mathopen] x_j, x_{ j+ 1}\mathclose[$ for some
  $j$.  If $a_j x_j+ b_j\le x_j$ and $a_j x_{ j+ 1}+ b_j\le x_{ j+ 1}$,
  then also $f( x)\le x$ by~\eqref{eq:deriv1}.  Likewise, if $a_j x_j+
  b_j> x_j$ and $a_j x_{ j+ 1}+ b_j> x_{ j+ 1}$, then also $f( x)> x$.
  The case $a_j x_j+ b_j> x_j$, $a_j x_{ j+ 1}+ b_j\le x_{ j+ 1}$ cannot
  occur because of~\eqref{eq:deriv1}, and if $a_j x_j+ b_j\le x_j$ and
  $a_j x_{ j+ 1}+ b_j> x_{ j+ 1}$, then $a_j> 1$, and $f( x)\le x$ iff
  $x\le \frac{ b_j}{ 1- a_j}$.

  For the case $x\in \mathopen] x_k, \infty\mathclose[$, the arguments
  are similar: if $a_k x_k+ b_k> x_k$, then also $f( x)> x$; if $a_k
  x_k+ b_k\le x_k$ and $a_k= 1$, then also $f( x)\le x$, and if $a_k> 1$
  in this case, then $f( x)\le x$ iff $x\le \frac{ b_k}{ 1- a_k}$.

  Using Corollary~\ref{co:decidereach}, we have hence shown
  decidability.  For the complexity claim, we note that all algebra
  operations in $\Epw$ can be performed in time linear in the size of
  the representations of the involved functions.  However, the maximum
  and composition operations may double the size of the representations,
  hence our procedure may take time $O( 2^{ m+ n^3} p)$ for
  reachability, and $O( 2^{ m+ n^4} p)$ for B{\"u}chi acceptance, for an
  $\Epw$-automaton with $n$ states, $m$ transitions, and energy
  functions of representation length at most $p$.  \qed
\end{proof}

In the setting of $\Epw$-automata and their application to one-clock
weighted timed automata with transition updates, our
Theorem~\ref{th:buchi} is a generalization of~\cite[Lemmas~24,
25]{DBLP:conf/hybrid/BouyerFLM10}.  Complexity of the decision procedure
was left open in~\cite{DBLP:conf/hybrid/BouyerFLM10}; as the conversion
of a one-clock weighted timed automaton to an $\Epw$-automaton incurs an
exponential blowup, we now see that their procedure is
doubly-exponential.

Considerations similar to the above show that also the setting of
\emph{piecewise polynomial} energy functions allows an application of
Theorem~\ref{th:buchi} to show energy problems on the exponentially
weighted timed automata from~\cite{DBLP:conf/hybrid/BouyerFLM10}
decidable.
%
% On the other hand, we can show a PSPACE lower bound already for
% $\Epwi$-automata:

% \begin{theorem}
%   \label{th:pspacehard}
%   For $\Epwi$-automata, Problems~\ref{pb:reach} and~\ref{pb:buchi} are
%   PSPACE-hard.
% \end{theorem}

% \begin{proof}[sketch]
%   The idea for this proof comes from the proof of~\cite[Thm.~2]{CY92}.
%   Given that reachability can be reduced to B{\"u}chi acceptance, we
%   only have to show PSPACE-hardness of the reachability problem.  This
%   in turn is shown by a reduction from the acceptance problem for
%   deterministic linear bounded automata (LBA).  For a LBA $M$, which
%   uses space $n$ equal to the length of the input, and an input $x$, we
%   construct an energy automaton $(S,T)$ with integer piecewise affine
%   energy functions with two distinguished states $s_0$ and $s_F$ and
%   initial value $y=0$ such that $M$ accepts $x$ if, and only if, there
%   exists a finite run of $(S,T)$ from $(s_0,y)$ to $s_F$.  The details
%   of the proof are in~\cite{arxiv/EsikFLQ13}.  \qed
% \end{proof}

\section{Multi-dimensional Energy Automata and Games}
\label{se:multi}
\label{se:reachgamemulti}

Next we turn our attention to several variants of energy automata. 
We will generally stick to the set $\Epwi$ of integer piecewise affine
energy functions; the fact that $\Epwi$ is not a subsemiring of $\E$
will not bother us here.

An \emph{$n$-dimensional integer piecewise affine energy automaton}, or
\emph{$\Epwi^n$-auto\-maton} for short, $( S, T)$, for $n\in \Natp$,
consists of finite sets $S$ of states and $T\subseteq S\times
\Epwi^n\times S$ of transitions.  A \emph{global state} in such an
automaton is a pair $(s,\vec{x})\in S\times \Nat^n$, and
\emph{transitions} are of the form
$(s,\vec{x})\to[ \vec{f}](s',\vec{x}')$
such that $(s,\vec{f},s')\in T$ and $\vec{x}'( i)=\vec{f}( i)(\vec{x}(
i))$ for each $i\in\{1,\dots,n\}$.

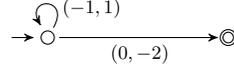
\begin{figure}[tbp]
  \centering
  \begin{tikzpicture}[->,>=stealth',shorten >=1pt,auto,node
    distance=2.0cm,initial text=,scale=.8,transform shape]
    \tikzstyle{every node}=[font=\small]
    \tikzstyle{every state}=[fill=white,shape=circle,inner
    sep=.5mm,minimum size=2.2mm,outer sep=.8mm]
    \node[state,initial] (1) at (0,0) {};
    \node[state,accepting] (2) at (3,0) {};
    \path (1) edge[out=60,in=120,loop] node[right,pos=.3] {$( -1, 1)$} (1);
    \path (1) edge node[below] {$( 0, -2)$} (2);
  \end{tikzpicture}
  \caption{%
    \label{fi:2dim-nokleene}
    A simple two-dimensional VASS}
\end{figure}

For reachability in $\Epwi^n$-automata (with $n\ge 2$), our algebraic
results do not apply.  To see this, we refer to the reachability problem
in Fig.~\ref{fi:2dim-nokleene}: with initial energy $( 1, 1)$, the loop
needs to be taken precisely once, but with initial energy $( 2, 0)$, one
needs to loop twice.  Hence there is no static algorithm which can
decide reachability for this VASS.

However, we remark that $\Epwi^n$-automata are \emph{well-structured
  transition systems}~\cite{DBLP:journals/tcs/FinkelS01}, with ordering
on global states defined by $(s,\vec{x})\preceq (s',\vec{x}')$ iff
$s=s'$ and $\vec{x}(i)\leq\vec{x}'(i)$ for each $i= 1,\dots, n$ (here we
also have to assume $x_0\in \Nat$).  Also, the reachability problem for
energy automata is a \emph{control state reachability problem} in the
sense of~\cite{DBLP:journals/iandc/AbdullaCJT00}.  Decidability of the
reachability problem for $\Epwi^n$-automata thus follows from the
decidability of the control state reachability problem for
well-structured transition
systems~\cite{DBLP:journals/iandc/AbdullaCJT00}.  Note that B{\"u}chi
acceptance is not generally decidable for well-structured transition
systems (it is undecidable for lossy counter
machines~\cite{DBLP:conf/rp/Schnoebelen10}), so our reduction proof does
not imply a similar result for B{\"u}chi acceptance.

\begin{theorem}
  \label{thm_multiple_var_reach}
  The reachability problem for $\Epwi^n$-automata with $x_0\in \Nat$ is
  decidable.
\end{theorem}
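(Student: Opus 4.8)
The plan is to reduce the reachability problem for $\Epwi^n$-automata to the decidability of the control state reachability problem for well-structured transition systems (WSTS), invoking the result of~\cite{DBLP:journals/iandc/AbdullaCJT00}. The bulk of the work is to verify that an $\Epwi^n$-automaton, viewed as a transition system on the set of global states $S\times \Nat^n$, satisfies the two hypotheses required of a WSTS: that the order $\preceq$ is a well-quasi-order, and that the transition relation is monotone (compatible) with respect to $\preceq$.

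First I would fix the order $\preceq$ on global states given in the text, namely $(s,\vec x)\preceq(s',\vec x')$ iff $s=s'$ and $\vec x(i)\le\vec x'(i)$ for all $i$. Since $S$ is finite and the componentwise order on $\Nat^n$ is a well-quasi-order by Dickson's lemma, $\preceq$ is a well-quasi-order on $S\times\Nat^n$; this is routine. The key step is monotonicity: I need to show that whenever $(s,\vec x)\to[\vec f](s',\vec x')$ is a transition and $(s,\vec x)\preceq(s,\vec y)$, then there is a global state $(s',\vec y')$ with $(s,\vec y)\to[\vec f](s',\vec y')$ and $(s',\vec x')\preceq(s',\vec y')$. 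This follows component-by-component from property~\eqref{eq:deriv1}: each coordinate function $\vec f(i)\in\Epwi$ is an energy function, so $x\le y$ (with $\vec f(i)$ defined at $x$) implies $\vec f(i)(y)\ge \vec f(i)(x)+y-x\ge \vec f(i)(x)$, and in particular $\vec f(i)$ is defined at $y$ as well. Hence the same transition $\vec f$ is enabled from the larger state and yields a componentwise-larger successor. This gives monotone compatibility, and indeed strict compatibility is not needed for control-state reachability.

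Next I would observe that the reachability problem of Problem~\ref{pb:reach}, asking for a finite run from $(s_0,\vec x_0)$ reaching some state in $F$, is exactly a control state reachability problem in the sense of~\cite{DBLP:journals/iandc/AbdullaCJT00}: the target is the upward-closed (in fact $\preceq$-closed, since it ignores the energy coordinates) set $\{(s,\vec x)\mid s\in F\}$. To apply the backward-reachability algorithm of~\cite{DBLP:journals/iandc/AbdullaCJT00}, it remains to check its effectiveness preconditions: that $\preceq$ is decidable, and that one can compute a finite basis for the set of $\preceq$-predecessors $\mathrm{pre}(\uparrow\!U)$ of any upward-closed set given by a finite basis. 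Decidability of $\preceq$ is immediate. For the predecessor basis, I would use that the coordinate functions lie in $\Epwi$ and are thus finitely represented by their rational break points and affine pieces, so that the preimage of an upward-closed threshold constraint under an affine, increasing map is again an effectively computable upward-closed set; taking the (finite) join over transitions and coordinates yields a computable basis.

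The main obstacle is the effective pred-basis computation, because the inverse of an energy function is generally not an energy function (as already noted after the first definition), so I must argue directly that the set $\{\vec y\mid \vec f(\vec y)\in\uparrow\!\vec c\}$ is upward-closed with a computable finite basis rather than reasoning through a function inverse. Upward-closure of this set follows again from monotonicity of each $\vec f(i)$, and computability follows from the rational piecewise-affine representation of $\Epwi$-functions, where each affine piece $x\mapsto a_jx+b_j$ with $a_j\ge 1$ can be inverted on its defined range to locate the least integer input meeting a given output threshold. Once these effectiveness conditions are established, decidability of reachability for $\Epwi^n$-automata with $x_0\in\Nat$ follows directly from the general WSTS result.
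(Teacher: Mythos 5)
Your proposal is correct and follows essentially the same route as the paper: the paper likewise observes that $\Epwi^n$-automata are well-structured transition systems under the ordering $\preceq$ and that reachability is a control state reachability problem, then invokes the decidability result of~\cite{DBLP:journals/iandc/AbdullaCJT00}. You additionally spell out the verification of the wqo, monotonicity via~\eqref{eq:deriv1}, and the effective pred-basis computation, all of which the paper leaves implicit and all of which you handle correctly.
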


Next we show that if the requirement~\eqref{eq:deriv1} on energy
functions, that $f( x_2)\ge f( x_1)+ x_2- x_1$ for each $x_1\le x_2$, is
lifted, then reachability becomes undecidable from dimension $4$.  We
call such functions \emph{flat} energy functions; remark that we still
require them to be strictly increasing, but the derivative, where it
exists, may be less than $1$.  The class of all flat energy functions is
denoted $\Eflat$ and its restrictions by $\Eflatpw$, $\Eflatpwi$.

\begin{theorem}
  \label{th:undec4dim}
  The reachability problem for $\Eflatpw^4$-automata is undecidable.
\end{theorem}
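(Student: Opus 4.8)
The plan is to reduce from the halting problem for two-counter (Minsky) machines, whose reachability is undecidable. A two-counter machine manipulates two non-negative counters $c_1, c_2$ through increments, decrements, and zero-tests, and halting is undecidable. I would encode a configuration of such a machine into the energy levels of a $\Eflatpw^4$-automaton and simulate each instruction by transitions labeled with flat piecewise affine functions.

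\smallskip\noindent\textit{Encoding.}\quad
The two available dimensions beyond the two counters are what make flatness essential: the reason one needs dimension $4$ rather than $2$ is that faithful zero-testing requires comparing a counter against $0$ while guaranteeing that both branches of the test are taken correctly, and without~\eqref{eq:deriv1} we are free to use functions with derivative strictly below $1$ (in particular genuine multiplications by constants less than $1$, and resets of the form $x\mapsto 0$). A standard technique is to represent each counter value $c$ by a power $2^c$, or more robustly to keep two dimensions per counter so that one dimension holds $c$ and a companion dimension can be used to test and restore it. Increment $c_i$ is simulated by $x\mapsto x+1$ on the appropriate coordinate (together with $\id$ on the others); decrement by $x\mapsto x-1$ with the guard disabling the transition below the relevant threshold; and the crucial zero-test is simulated by using functions that are only defined on the appropriate sub-interval, so that the ``$c_i=0$'' branch is enabled exactly when the coordinate sits at its reset value and the ``$c_i>0$'' branch is enabled otherwise. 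Halting of the machine corresponds to reaching a designated accepting state, so reachability in the $\Eflatpw^4$-automaton decides halting.

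\smallskip\noindent\textit{Key steps.}\quad
First I would fix the representation of a machine configuration $(q, c_1, c_2)$ as a global state $(s_q, \vec x)$ and specify, instruction by instruction, the gadget of states and transitions implementing each of the finitely many instruction types. Second I would verify faithfulness: every run of the automaton that leaves a correctly-encoded configuration must pass through exactly one instruction gadget and arrive at the correctly-encoded successor configuration, with no spurious runs that cheat the zero-test. This simulation invariant is proved by induction on the length of the run. Third I would check that all functions used are flat piecewise affine, \ie~strictly increasing rational piecewise affine maps with affine pieces of slope possibly below~$1$ and appropriate intervals of definition, so that the construction genuinely yields an $\Eflatpw^4$-automaton. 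Since computing the gadgets from a machine description is effective, reachability in $\Eflatpw^4$-automata would be undecidable.

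\smallskip\noindent\textit{Main obstacle.}\quad
The hard part will be implementing a \emph{faithful} zero-test using only strictly increasing functions. Because every $\Eflat$-function is injective and order-preserving, one cannot simply collapse distinct counter values, so the zero-test must be realized through carefully paired definition intervals on the two auxiliary coordinates: one transition is enabled precisely when a coordinate lies at its ``zero'' marker and another precisely when it lies strictly above, and these intervals must tile exactly so that neither branch can be taken dishonestly. The delicate point is ensuring that the test leaves the counter unchanged (restoring its value after inspection) while staying within the strictly-increasing, piecewise-affine regime; it is precisely to buy enough room for this restore-after-test manoeuvre on both counters simultaneously that four dimensions are needed. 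Getting the interval arithmetic of these guards to line up with no gap and no overlap is where the real care lies.
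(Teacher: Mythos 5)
Your overall strategy (reduction from the halting problem for two-counter machines) is the same as the paper's, and you correctly locate the difficulty in the zero test. But the proposal has a genuine gap there, and two of the ideas you float for closing it would fail. First, flat energy functions are still required to be \emph{strictly increasing}, so resets $x\mapsto 0$ are not available to you. Second, and more importantly, the domain of a (flat) energy function is always an upward-closed interval $[l_f,\infty\mathclose[$ or $\mathopen]l_f,\infty\mathclose[$, so the only guards you can express are lower bounds. A transition that is ``enabled precisely when a coordinate sits at its zero marker'' would need a domain that is a single point (or a bounded interval), which no energy function has; and if you store a counter value directly in a coordinate and use only additive updates with lower-bound guards, you have an ordinary multi-dimensional energy automaton, whose reachability the paper shows to be \emph{decidable} (it is a well-structured transition system). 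So the zero test cannot be realized by ``tiling'' definition intervals as you describe; some further idea is needed, and that idea is exactly where flatness must enter.

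The paper's construction supplies it as follows. Counters are encoded multiplicatively in a pair of reciprocal coordinates, $x=1/(2^{c_1}3^{c_2})$ and $y=2^{c_1}3^{c_2}$; incrementing $c_1$ is $x\mapsto\frac12 x$, $y\mapsto 2y$, and the slope-$\frac12$ (resp.\ $\frac13$) maps are precisely where~\eqref{eq:deriv1} must be dropped. The reciprocal coordinate turns the upper-bound condition ``$c_1\le 0$'' into the lower-bound condition ``$x\ge 1$'', which \emph{is} expressible as a definition interval: the zero-test gadget loops, multiplying $x$ by $3$ and dividing $y$ by $3$ to peel off the factor $3^{c_2}$, and the exit guard ``$x\ge 1$ and $y\ge 1$'' is satisfiable exactly when the loop was taken $c_2$ times and $c_1=0$. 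The remaining two coordinates $z,z'$ (updated by $\frac13$ and $3$ in opposite directions) force a second loop to be executed the same number of times so as to restore $x$ and $y$ — this is the ``restore-after-test manoeuvre'' you correctly anticipated needing the extra dimensions for. To turn your sketch into a proof you would need to supply an encoding with these properties; as written, the faithful zero test, which you yourself identify as the main obstacle, is left unresolved.
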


Next we extend our energy automata formalism to (turn based)
\emph{reachability games}.  Let $(S,T)$ be an $n$-dimensional energy
automaton such that $S= S_A\cup S_B$ forms a partition of $S$ and
$T\subseteq (S_A\times\Epwi^n\times S_B)\cup (S_B\times\Epwi^n\times
S_A)$.  Then $(S,S_A,S_B,T)$ induces an $n$-dimensional \emph{energy
  game} $G$.  The intuition of the reachability game is that the two
players $A$ and $B$ take turns to move along the game graph $( S, T)$,
updating energy values at each turn.  The goal of player~$A$ is to reach
a state in $F$, the goal of player~$B$ is to prevent this from
happening.

The reachability game is a \emph{coverability game} in the sense
of~\cite{DBLP:journals/entcs/RaskinSB05}.  In general, the reachability
game on well-structured transition systems is
undecidable~\cite{DBLP:journals/logcom/AbdullaBd08}.  Indeed, the games
on VASS considered in~\cite{DBLP:conf/icalp/BrazdilJK10} are a special
case of reachability games on energy automata with integer update
functions; their undecidability is shown
in~\cite{DBLP:journals/logcom/AbdullaBd08,
  DBLP:conf/icalp/BrazdilJK10}. It is hence clear that it is undecidable
whether player~$A$ wins the reachability game in $2$-dimensional
$\Eint$-automata.  As a corollary, we can show that for \emph{flat}
energy functions, already one-dimensional reachability games are
undecidable.

\begin{theorem}
  \label{th:undec2game}
  Whether player~$A$ wins the reachability game in $\Eint^2$-automata is
  undecidable.
\end{theorem}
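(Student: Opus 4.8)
The plan is to exhibit a direct correspondence between the two-counter VASS reachability games of Br\'azdil, Jan\v{c}ar, and Ku\v{c}era~\cite{DBLP:conf/icalp/BrazdilJK10} and reachability games on $\Eint^2$-automata, and then to invoke the undecidability of the former~\cite{DBLP:journals/logcom/AbdullaBd08, DBLP:conf/icalp/BrazdilJK10}. The preceding discussion already isolates the key fact, so the work here is to make the reduction precise.

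First I would recall the VASS game model: play proceeds on a finite graph whose vertices are partitioned between players $A$ and $B$, each edge is labelled by an integer vector in $\Int^2$, and a configuration consists of a vertex together with a pair of counter values in $\Nat$. A move along an edge adds its vector to the counters and is legal only if both resulting counter values stay non-negative; player~$A$ wins if a designated target control state is reached. This is literally a reachability game on an $\Eint^2$-automaton: the vertex partition gives $S= S_A\cup S_B$, an edge labelled $(k_1, k_2)$ becomes the transition labelled by the pair of integer update functions $(f_{k_1}, f_{k_2})$, and the legality constraint ``counters remain non-negative'' coincides with the definedness condition of the $f_{k_i}$, since $f_k( x)= x+ k$ exactly when $x\ge \max( 0, -k)$ and $f_k( x)= \bot$ otherwise. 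Thus a move is enabled in the energy game precisely when it is legal in the VASS game, and the encoding of configurations, moves, and the target set $F$ is an identity.

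The one point requiring care is the treatment of \emph{blocked} positions together with the exact reachability objective. In both games a position owned by a player who has no enabled move is a dead-end: player~$A$ cannot then force a visit to $F$, while player~$B$ being blocked helps $A$ only insofar as $F$ has already been reached. I would verify that the definedness semantics of $\Eint$ reproduces this dead-end behaviour move by move, so that a strategy is winning for $A$ in the VASS game if, and only if, the corresponding strategy is winning for $A$ in the $\Eint^2$-automaton. Because the two games are the same object under this identification, this equivalence of winning regions is immediate once the conventions for blocked moves are aligned.

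The main obstacle is therefore not the reduction, which is essentially the identity map, but the bookkeeping needed to match the precise winning conventions of~\cite{DBLP:conf/icalp/BrazdilJK10} with the reachability objective of our energy games---in particular the handling of positions where the active player cannot move. Once this alignment is settled, the undecidability of determining the winner of a two-counter VASS reachability game transfers verbatim, yielding undecidability of whether player~$A$ wins the reachability game in $\Eint^2$-automata.
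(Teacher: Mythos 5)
Your proposal is correct and matches the paper's argument: the paper likewise observes that the VASS reachability games of Br\'azdil, Jan\v{c}ar, and Ku\v{c}era are precisely reachability games on energy automata with integer update functions (the edge label $(k_1,k_2)$ becoming the pair $(f_{k_1},f_{k_2})$, with definedness of $f_k$ encoding non-negativity of the counters) and imports undecidability from there. The paper treats this identification as immediate and gives no further proof; your additional care about blocked positions is a reasonable elaboration of the same reduction.
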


\begin{theorem}
  \label{th:flatgame}
  It is undecidable for $\Eflatpw$-automata whether player~$A$ wins the
  reachability game.
\end{theorem}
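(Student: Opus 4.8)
The plan is to reduce from the reachability game on $\Eint^2$-automata, which is undecidable by Theorem~\ref{th:undec2game}. Given such a two-counter game $(S,S_A,S_B,T)$ I would build a one-dimensional $\Eflatpw$-game simulating it by packing the pair of counter values $(x,y)\in\Nat^2$ into a single energy value via the Gödel-style encoding $z=2^x3^y$. The point of using \emph{flat} functions is exactly that this encoding needs \emph{contracting} slopes: an integer update $(a,b)$ on the two counters is realised by the single linear map $z\mapsto 2^a3^bz$, which is piecewise affine with positive slope $2^a3^b$ but with slope $<1$ whenever $2^a3^b<1$, so it lies in $\Eflatpw$ but not in $\Epw$. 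Thus each edge $(s,(a,b),s')\in T$ becomes one edge of the one-dimensional game carrying the flat function $z\mapsto 2^a3^bz$, while the partition $S=S_A\cup S_B$ and the target set $F$ are kept unchanged.

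The key difficulty is that the two-dimensional game disables a move exactly when it would drive a component below $0$, i.e.\ $x+a\ge 0$ and $y+b\ge 0$ must hold \emph{separately}, whereas a flat energy function on a single value $z$ can only guard against one threshold $z\ge c$. I would recover the two separate non-negativity conditions from the \emph{game} rather than from a single guard: around each decrementing edge I insert a verification gadget in which player~$B$ may challenge the legality of the move. An honest play keeps $z$ of the form $2^x3^y$ with $x,y\ge 0$; a dishonest decrement of a zero component produces a value that $B$ can expose by contracting it further (again using the flat slopes $\tfrac12$, $\tfrac13$) down to a constant region where an upward-closed test $z\ge 1$ separates the valid from the invalid case, after which $B$ moves to a winning sink. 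This is the standard \emph{simulate-and-verify} pattern used for counter games on well-structured systems~\cite{DBLP:journals/logcom/AbdullaBd08,DBLP:conf/icalp/BrazdilJK10}, transported to the energy setting.

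With the gadgets in place the correspondence is the routine part: I would show that every winning strategy of player~$A$ in the $\Eint^2$-game induces a winning strategy in the constructed game (always play the encoded move and survive every challenge, since honest moves pass the threshold test), and conversely that any winning strategy of~$A$ in the one-dimensional game must play only legal encoded moves, for otherwise $B$ wins through the verification gadget, hence projects back to a winning strategy in the two-counter game. Therefore player~$A$ wins the one-dimensional $\Eflatpw$-game if, and only if, $A$ wins the original $\Eint^2$-game, and undecidability follows from Theorem~\ref{th:undec2game}. The main obstacle, and the place where the argument must be done carefully, is the design of the verification gadget: it has to make illegal moves \emph{losing} for~$A$ while never penalising a legal move, using only finitely many affine pieces and upward-closed guards $z\ge c$. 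This is precisely where the contraction power of flat functions is indispensable and where the single-threshold limitation of one-dimensional guards is compensated by the alternation of the game.
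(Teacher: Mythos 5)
Your proposal follows essentially the same route as the paper's proof: a reduction from reachability games on $2$-dimensional $\Eint$-automata (undecidable by Theorem~\ref{th:undec2game}), the encoding $z=2^{x_1}3^{x_2}$ realised by flat affine maps $z\mapsto 2^a3^bz$, and a simulate-and-verify gadget in which the opponent may challenge a move and legality is certified by contracting $z$ with the division loops $z\mapsto z/2$, $z\mapsto z/3$ down to a threshold test. The paper's gadget places the burden of proof on the mover (who must hit the threshold exactly, with the challenger then choosing between a strict and a non-strict guard), but this is a detail of the same construction, not a different approach.
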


\begin{proof}[sketch]
  The proof is by reduction from reachability games on $2$-dimen\-sional
  $\Eint$-automata to reachability games on $1$-dimensional
  $\Eflatpwi$-automata.  The intuition is that the new energy variable
  $x$ encodes the two old ones as $x= 2^{ x_1} 3^{ x_2}$, and then
  transitions in the $2$-dimensional game are encoded using gadgets in
  which the other player may interrupt to demand proof that the required
  inequalities for $x_1$ \emph{and} $x_2$ were satisfied.  The energy
  functions in the so-constructed $1$-dimensional automaton are
  piecewise affine because the original ones were integer updates.  The
  details of the proof are in appendix. \qed
\end{proof}

\section{Conclusion}

We have in this paper introduced a functional framework for modeling and
analyzing energy problems.  We have seen that our framework encompasses
most existing formal approaches to energy problems, and that it allows
an application of the theory of automata over semirings and semimodules
to solve reachability and B{\"u}chi acceptance problems in a generic
way.  For the important class of piecewise affine energy functions, we
have shown that reachability and B{\"u}chi acceptance are PSPACE-hard
and decidable in EXPTIME.  As our algorithm is static, computations do
not have to be repeated in case the initial energy changes.  Also,
decidability of B{\"u}chi acceptance implies that LTL model checking is
decidable for energy automata.

In the last part of this paper, we have seen that one quickly comes into
trouble with undecidability if the class of energy functions is extended
or if two-player games are considered.  This can be remedied by
considering \emph{approximate} solutions instead, using notions of
distances for energy automata akin to the ones
in~\cite{DBLP:conf/fsttcs/FahrenbergLT11} to provide quantitative
measures for similar energy behavior.
 
Another issue that remains to be investigated is reachability and
B{\"u}chi problems for one-dimensional energy automata with flat energy
functions; we plan to do this in future work.

%\bibliography{eabib}

\newpage
\appendix

\section*{Appendix: Proofs}

\newenvironment{proofapp}[1][]{%
  \par\medskip\noindent\textbf{Proof #1.}\quad}{}

\begin{proofapp}[of Lemma~\ref{le:distr}]
  The pointwise supremum of any set of extended energy functions is an
  extended energy function. Indeed, if $f_i$, $i \in I$ are extended
  energy functions and $x < y$ in $\Realnn$, then $f_i(y) \geq f_i(x) + y -
  x$ for all $i$. It follows that $\sup_{i\in I}f_i(y) \geq \sup_{i \in
    I} f_i(x) + y - x$.  Also, since $f_i(\bot) = \bot$ for all $i \in
  I$, $\sup_{i \in I}f_i(\bot) = \bot$.  Finally, if there is some $i$
  such that $f_i(\infty) = \infty$, then $\sup_{i \in I}f_i(\infty) =
  \infty$.  Otherwise each function $f_i$ is constant with value $\bot$.

  The fact that $(\sup_{i \in I}f_i)h = \sup_{i \in I} f_i h$ is now
  clear, since the supremum is taken pointwise: For all $x$, $((\sup_{i
    \in I}f_i)h)( x)=( \sup_{ i\in I}f_i)( h( x))= \sup_{ i\in I}( f_i(
  h( x)))$ and also $( \sup_{ i\in I} f_i h)( x)= \sup_{ i\in I}( f_i(
  h( x)))$. \qed
\end{proofapp}

\begin{proofapp}[of Lemma~\ref{le:idmsring}]
  Recall first~\cite{book/DrosteKV09} that an idempotent semiring
  is an algebraic structure $( S, \oplus, \otimes, \mathbb0, \mathbb1)$
  satisfying, for all $a, b, c\in S$, the following axioms:
  \begin{gather}
    a\oplus( b\oplus c)=( a\oplus b)\oplus c \quad a\oplus b= b\oplus a
    \quad a\oplus \mathbb0= a \label{eq:dioid.plus} \\
    a\otimes( b\otimes c)=( a\otimes b)\otimes c \qquad \mathbb1\otimes
    a=
    a\otimes \mathbb1= a \label{eq:dioid.times} \\
    a\otimes( b\oplus c)=( a\otimes b)\oplus( a\otimes c) \qquad
    \mathbb0\otimes a= a\otimes
    \mathbb0= \mathbb0 \label{eq:dioid.distl} \\
    ( a\oplus b)\otimes c=( a\otimes c)\oplus( b\otimes c) \qquad
    a\oplus a= a \label{eq:dioid.distr}
  \end{gather}

  Now the axioms for $\vee$ in~\eqref{eq:dioid.plus} are clear: maximum is
  associative and commutative, with neutral element $\bbot$.  Similarly,
  \eqref{eq:dioid.times}~states that composition is associative with
  neutral element the identity function $\id$.  As
  for~\eqref{eq:dioid.distl}, left distributivity of $\circ$ over $\vee$
  follows from monotonicity of the functions in $\E$: we have $(
  h( f\vee g))( x)= h( \max( f( x), g( x)))$ and $( hf\vee h g)( x)=
  \max( h( f( x)), h( g( x)))$.  The fact that $\bbot$ is absorbing is
  clear.  Right distributivity in~\eqref{eq:dioid.distr} can be
  similarly shown (but does not need monotonicity), and $\vee$ is
  idempotent by definition.  $\le$ is the natural order on $\E$
  because $\vee$ is given pointwise. \qed
\end{proofapp}

\begin{proofapp}[of Lemma~\ref{le:prop-star}]
  It is clear that $f^*$ is an energy function for any $f\in \E$.  For
  the other claims, we first note that if $g\in \E$ is such that there
  is $k$ for which $g( x)= x$ for $x< k$ and $g( x)= \infty$ for $x> k$,
  then $g^*( x)= g( x)$ for $x\ne k$, and if $g( k)= k$ or $g( k)=
  \infty$, then also $g^*( k)= g( k)$.
  
  Now let $g\in \E$.  If there is $f\in \E$ with $g=
  f^*$, then we set $k= \sup\{ x\mid f( x)\le x\}$.  Then $f( x)> x$ and
  hence $g( x)= \infty$ for all $x> k$, and whenever $x< k$, then there
  is $y$ with $x\le y\le k$ and $f( y)\le y$, hence
  by~\eqref{eq:deriv1}, $f( x)\le x$, so that $g( x)= x$.  If $f( k)\le
  k$, then $g( k)= k$, otherwise $g( k)= \infty$ as claimed. \qed
\end{proofapp}

\begin{proofapp}[of Lemma~\ref{le:distl}]
  We have $g f^*( \bot)= \bot= \sup_n g f^n( \bot)$, and also $g f^*(
  \infty)= \infty= \sup_n g f^n( \infty)$.  Also, if $g= \bbot$, then $g
  f^*( x)= \bot= \sup_n g f^n( x)$ for all $x\in[ 0, \infty]_\bot$.

  We are left with showing $g f^*( x)= \sup_n g f^n( x)$ for all $x\in
  \Realnn$ in case $g\ne \bbot$.  Let thus $x\in \Realnn$.  If $f( x)\le
  x$, then also $f^n( x)\le x$ for all $n\in \Nat$, hence $g f^n( x)\le
  g( x)$ for all $n$, so that $\sup_n g f^n( x)= g( x)= g f^*( x)$.

  Now assume instead that $f( x)> x$, so that $f( x)- x= M> 0$.
  By~\eqref{eq:deriv1} applied to $x_1= x$ and $x_2= f( x)$, we have
  $f^2( x)\ge f( x)+ f( x)- x= f( x)+ M$, hence by induction, $f^{ n+
    1}( x)\ge f^n( x)+ M$ for all $n\in \Nat$.  Hence the sequence $(
  f^n( x))_{ n\in \Nat}$ increases without bound, so that there must be
  $N\in \Nat$ for which $g f^N( x)\ne \bot$.  Again
  using~\eqref{eq:deriv1}, we see that $g f^{ n+ 1}( x)\ge g f^n( x)+ M$
  for all $n\ge N$, so that also the sequence $( g f^n( x))_{ n\in
    \Nat}$ increases without bound, whence $\sup_n g f^n( x)= \infty= g
  f^*( x)$.  \qed
\end{proofapp}

\begin{proofapp}[of Proposition~\ref{pr:EcontKA}]
  Let $f, g, h\in \E$.  By Lemmas~\ref{le:distl}
  and~\ref{le:distr}, $g f^* h=( \sup_n g f^n) h= \sup_n( g f^n h)$. \qed
\end{proofapp}

\begin{proofapp}[of Lemma~\ref{le:semimod}]
  It is easily verified that $u( f g)=( u f) g$ and $u\circ \id= u$ for
  all $u \in \V$ and $f,g \in \E$.  Moreover,
  \begin{align*}
    ((u \vee v)f)(x) &= (u\vee v)(f(x))\\
    &= u(f(x)) \vee v(f(x))\\
    &= (uf)(x) \vee (vf)(x)\\
    &= (uf \vee vf)(x)
  \end{align*}
  and 
  \begin{align*}
    (u(f \vee g))(x) &= u((f \vee g)(x))\\
    &= u(f(x) \vee g(x))\\
    &= u(f(x)) \vee u(g(x))
  \end{align*}
  for all $u,v\in V$, $f,g \in \E$ and $x \in [0,\infty]_\bot$, since
  $u$ preserves the order.  Finally, $( \bbot\circ f)( x)= \FALSE=
  \bbot( x)$ and $( u\circ \bbot)( x)= \FALSE= \bbot( x)$ for all $u \in
  \V$, $f \in \E$ and $x \in [0,\infty]_\bot$.  This shows that $\V$ is
  a right $\E$-semimodule.  The other claims follow from the embedding
  of $\V$ into $\E$. \qed
\end{proofapp}

\begin{proofapp}[of Lemma~\ref{le:prod_is_prod}]
  To show the first equality, let $x_0\in[ 0, \infty]_\bot$ and set $x_{
    n+ 1}= f_n( x_n)$ for $n\in \Nat$.  If there is $n\in \Nat$ for
  which $x_n= \bot$, then $( \prod_{ i= 0}^\infty f_i)( x_0)= \FALSE$ by
  definition, and either $f_0( x_0)= \bot$ and consequently $( \prod_{
    i= 1}^\infty f_i)( f_0( x_0))= \FALSE$, or $f_0( x_0)\ne \bot$, but
  then also $( \prod_{ i= 1}^\infty f_i)( f_0( x_0))= \FALSE$.  If, on
  the other hand, $x_n\ne \bot$ for all $n\in \Nat$, then $( \prod_{ i=
    0}^\infty f_i)( x_0)= \TRUE$ and also $( \prod_{ i= 1}^\infty f_i)(
  f_0( x_0))= \TRUE$.  The second equality can be shown by similarly
  easy arguments. \qed
\end{proofapp}

\begin{proofapp}[of Proposition~\ref{pr:smodpair}]
  We need two technical lemmas in this proof.

  \begin{lemma}
    \label{le:smod-supseq}
    For all $u\in \V$, $n\in \Nat$, and $f_1,\dotsc, f_{ n+ 1},
    g_1,\dotsc, g_n\in \E$, we have $u f_{ n+ 1} g_n^* f_n\dotsm g_1^*
    f_1= \sup_{ k_1,\dotsc, k_n\in \Nat} u f_{ n+ 1} g_n^{ k_n}
    f_n\dotsm g_1^{ k_1} f_1$.
  \end{lemma}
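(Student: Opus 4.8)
The plan is to prove the identity by induction on $n$. The base case $n= 0$ is immediate, since both sides then reduce to $u f_1$ (there are no star operations and no indices $k_i$ to range over). The real content is in the inductive step, where the strategy is to peel off the outermost star $g_n^*$, pull the resulting supremum to the outside, and then apply the induction hypothesis to the remaining chain, which has one fewer star.

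Concretely, assume the claim for $n-1$ and consider $u f_{ n+ 1} g_n^* f_n\dotsm g_1^* f_1$. First I would set $v= u f_{ n+ 1}\in \V$ and use Lemma~\ref{le:semimod} to write $v g_n^*= \sup_{ k_n\in \Nat} v g_n^{ k_n}$. Letting $h= f_n g_{ n-1}^* f_{ n-1}\dotsm g_1^* f_1$, which is a single element of $\E$ because each $g_i^*\in \E$ by Lemma~\ref{le:prop-star}, I would then invoke the fact (again Lemma~\ref{le:semimod}) that the action of $\E$ on $\V$ right-distributes over arbitrary suprema, obtaining
\begin{equation*}
  u f_{ n+ 1} g_n^* f_n\dotsm g_1^* f_1= \big( \sup_{ k_n} v g_n^{ k_n}\big) h= \sup_{ k_n}\big( v g_n^{ k_n} h\big)= \sup_{ k_n} u f_{ n+ 1} g_n^{ k_n} f_n g_{ n-1}^* f_{ n-1}\dotsm g_1^* f_1\,.
\end{equation*}
Fixing $k_n$ and setting $w= u f_{ n+ 1} g_n^{ k_n}\in \V$, the inner expression $w f_n g_{ n-1}^* f_{ n-1}\dotsm g_1^* f_1$ has exactly the shape of the lemma with $n-1$ stars. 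Applying the induction hypothesis to $w$ and the functions $f_1,\dotsc, f_n, g_1,\dotsc, g_{ n-1}$ rewrites it as $\sup_{ k_1,\dotsc, k_{ n-1}} w f_n g_{ n-1}^{ k_{ n-1}} f_{ n-1}\dotsm g_1^{ k_1} f_1$. Substituting back for $w$ and merging the iterated suprema into a single supremum over all of $k_1,\dotsc, k_n$ then yields the desired equality.

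The hard part, or rather the point that must be handled with care, is the choice of distributivity. The proof must only ever move a supremum that sits on the \emph{left}, i.e.~in $\V$, past a right multiplication by a single element of $\E$, since this is precisely the form in which Lemma~\ref{le:semimod} guarantees distribution over \emph{arbitrary} suprema. The opposite left-distributivity $g( \sup_i f_i)= \sup_i g f_i$ does \emph{not} hold in $\E$, as the counterexample stated just before Lemma~\ref{le:distl} shows; so it is essential that the induction always strips the outermost star (producing a supremum on the far left) and never attempts to distribute a composition into a star from the outside. Keeping the supremum on the left throughout, together with the closure property $g^*\in \E$ that lets the tail $h$ be treated as one $\E$-element, is what makes the argument go through despite $\E$ failing to be a complete semiring.
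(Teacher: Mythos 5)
Your proof is correct and follows essentially the same route as the paper's: induction on $n$, using right-distributivity of the action over arbitrary suprema together with the restricted left-distributivity $u f^* = \sup_{k} u f^{k}$ from Lemma~\ref{le:semimod}. The only (immaterial) difference is that you strip the outermost star $g_n^*$ and recurse on the inner tail, whereas the paper applies the induction hypothesis to the prefix $u f_{n+1} g_n^* \dotsm g_2^* f_2$ and then handles the remaining innermost $g_1^*$ via Lemmas~\ref{le:distr} and~\ref{le:distl}; both orderings respect the one-sided distributivity constraint that you correctly identify as the crux.
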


  \begin{proof}%[of Lemma~\ref{le:smod-supseq}]
    By induction on $n$. For $n = 0$ our claim is clear. Suppose now
    that $n >0$ and that the claim holds for all $m < n$. Then
    \begin{align*}
      vf_{n+1}g_n^*\cdots g_1^*f_1 &= (\sup\{vf_{n+1}g_n^{k_n}\cdots
      g_2^{k_2}f_2\mid k_2,\ldots,k_n\in \Nat\})g_1^*f_1\\
      &= \sup\{vf_{n+1}g_n^{k_n}\cdots g_2^{k_2}f_2g_1^*\mid
      k_2,\ldots,k_n\in \Nat\}f_1\\
      &= \sup\big\{\sup\{ vf_{n+1}g_n^{k_n}\cdots
      g_2^{k_2}f_2g_1^{k_1}\mid k_1\geq
      0\}\bigmid k_2,\ldots,k_n \in \Nat\big\}f_1\\
      &= \sup\{vf_{n+1}g_n^{k_n}\cdots g_2^{k_2}f_2g_1^{k_1}\mid
      k_1,\ldots,k_n
      \in \Nat\}f_1\\
      &= \sup \{vf_{n+1}g_n^{k_n}\cdots g_1^{k_1}f_1\mid k_1,\ldots,k_n
      \in \Nat\}.
    \end{align*}
    Here the first equality is by induction hypothesis, the second by
    Lemma~\ref{le:distr}, the third by Lemma~\ref{le:distl}, and the
    fifth again by Lemma~\ref{le:distr}.  \qed
  \end{proof}

  \begin{lemma}
    \label{le:omega_sup}
    For all $f, g\in \E$, $( f\vee g)^\omega= \sup_{ h_0,
      h_1,\dotsc\in\{ f, g\}} \prod_{ i= 0}^\infty h_i$ and $( g
    f^*)^\omega= \sup_{ k_0, k_1,\dotsc\in \Nat} \prod_{ i= 0}^\infty( g
    f^{ k_i})$.
  \end{lemma}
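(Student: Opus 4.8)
The plan is to prove each of the two identities by establishing the two inequalities separately. In both cases the inequality $\sup(\dots)\le(\dots)^\omega$ follows from monotonicity together with order-preservation of the infinite product, while the reverse inequality is the substantial one. Since suprema in $\V$ are taken pointwise (Lemma~\ref{le:distr}) and $\V$ is $\{\FALSE,\TRUE\}$-valued, for the reverse inequality it suffices to show: at every $x_0$ for which the left-hand $\omega$-value is $\TRUE$, there is \emph{one} sequence of factors whose infinite product is $\TRUE$ at $x_0$. Thus the whole proof reduces to two explicit constructions of realizing sequences.

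For the two easy inequalities I would argue as follows. In the first identity each $h_i\in\{f,g\}$ satisfies $h_i\le f\vee g$, and in the second $f^{k_i}\le\sup_n f^n=f^*$ by Lemma~\ref{le:distl}, whence $g f^{k_i}\le g f^*$ by monotonicity of composition. Because the infinite product is order-preserving in its factors (an immediate induction using monotonicity of the energy functions and $\bot$-preservation), we obtain $\prod_i h_i\le(f\vee g)^\omega$ and $\prod_i g f^{k_i}\le(g f^*)^\omega$, so the respective suprema are bounded above by the left-hand sides.

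For the reverse inequality of the first identity, fix $x_0$ with $(f\vee g)^\omega(x_0)=\TRUE$ and consider the trajectory $y_0=x_0$, $y_{n+1}=(f\vee g)(y_n)=\max(f(y_n),g(y_n))$, which by assumption never hits $\bot$. I would choose $h_n\in\{f,g\}$ attaining the maximum at $y_n$, so that $h_n(y_n)=y_{n+1}$. The trajectory of $(h_i)$ started at $x_0$ then coincides with $(y_n)$ and is $\bot$-free, giving $(\prod_i h_i)(x_0)=\TRUE$. This greedy argmax construction is exact and needs no limiting argument.

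The hard part will be the reverse inequality of the second identity, where the analogous greedy step fails: $g f^*(y)=g(\infty)$ whenever $f(y)>y$, and this cannot be reproduced by any finite power $g f^m(y)$. Here I would exploit the structure of $f^*$ from Lemma~\ref{le:prop-star}: since $f(x)-x$ is non-decreasing by~\eqref{eq:deriv1}, there is a threshold $k_f=\sup\{x\mid f(x)\le x\}$ with $f(x)\le x$ for $x<k_f$ and $f(x)>x$ for $x>k_f$; in particular the set $\{x\mid f(x)>x\}$ is upward closed. Let $(y_n)$ be the $(g f^*)$-trajectory from $x_0$, all $\ne\bot$ (which forces $g\ne\bbot$, since $y_1=g f^*(y_0)\ne\bot$). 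If $f(y_n)\le y_n$ for every $n$, then $y_{n+1}=g(y_n)=g f^0(y_n)$ and the constant choice $k_i=0$ realizes the whole trajectory. Otherwise let $n^*$ be least with $f(y_{n^*})>y_{n^*}$; taking $k_i=0$ for $i<n^*$ reproduces $y_0,\dots,y_{n^*}$ exactly, and I finish by the following self-sustaining sub-claim: whenever $w\ne\bot$ and $f(w)>w$, one can pick $m$ with $g f^m(w)$ as large as desired, because $f^m(w)\to\infty$ and $g$ (being $\ne\bbot$) sends large values to large values; choosing $m$ so that $w'=g f^m(w)>k_f$ restores $f(w')>w'$. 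Iterating preserves the invariant $f(w)>w$ and hence produces a $\bot$-free tail, so $(\prod_i g f^{k_i})(x_0)=\TRUE$. The point is that upward-closedness of $\{x\mid f(x)>x\}$ is exactly what lets the greedy construction perpetuate itself, so no König/compactness argument over the infinitely branching choices of the $k_i$ is required.
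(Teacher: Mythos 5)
Your proof is correct, but it takes a genuinely different route from the paper's, which is worth spelling out. The paper exploits the one-step characterization of the omega operation ($h^\omega(x)=\TRUE$ iff $x\ne\bbot(x)$... more precisely iff $x\ne\bot$ and $h(x)\ge x$) to make every realizing sequence \emph{constant}: for the first identity, $(f\vee g)(x)\ge x$ forces $f(x)\ge x$ or $g(x)\ge x$, and the constant sequence $f,f,\dotsc$ (or $g,g,\dotsc$) already gives $\TRUE$; for the second, when $f(x)\le x$ it takes $k_i\equiv 0$, and when $f(x)>x$ it finds a \emph{single} $n$ with $gf^n(x)\ge x$ (possible since $f^n(x)\to\infty$ and $g\ne\bbot$) and takes $k_i\equiv n$, so that $(gf^n)^\omega(x)=\TRUE$. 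You instead track the full $(f\vee g)$- resp.\ $(gf^*)$-trajectory and build adaptive sequences: a greedy argmax choice of $h_n$ for the first identity, and for the second an initial block of $k_i=0$ followed by an iterated boosting step that re-establishes the invariant $f(w)>w$ using the upward-closedness of $\{x\mid f(x)>x\}$. Both constructions are sound (your threshold $k_f$ is finite in the relevant case precisely because $f(y_{n^*})>y_{n^*}$ at a finite point, and the boosted value $w'$ either exceeds $k_f$ or equals $\infty$, both of which keep the tail $\bot$-free), and your observation that no K\H{o}nig-type argument is needed is exactly right --- though the paper's constant-sequence trick makes the point even more starkly, since it never needs to vary the exponents at all. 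One small remark: your step ``$g$ sends large values to large values'' silently assumes that $g\ne\bbot$ takes a non-$\bot$ value at some \emph{finite} argument; this is the same implicit assumption the paper makes at the corresponding point in its own proof (and already in the proof of Lemma~\ref{le:distl}), so you are on equal footing, but it is worth being aware that the formal definition of $\E$ does not literally exclude a function that is $\bot$ everywhere except at $\infty$.
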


  \begin{proof}%[of Lemma~\ref{le:omega_sup}]
    To show the first claim, note first that as infinite product
    preserves the order, we have $\sup_{h_i\in\{ f,g\}} \prod_{ i=
      0}^\infty h_i\le (f \vee g)^\omega$.  To complete the proof, we
    have to show that whenever $(f\vee g)^\omega (x) = \TRUE$ for some
    $x$, then $( \sup_{h_i\in\{ f,g\}} \prod_{ i= 0}^\infty h_i)(x) =
    \TRUE$.  So suppose that $(f \vee g)^\omega(x) = \TRUE$. Then we
    must have $(f \vee g)(x) \geq x$ and $x \neq \bot$, so that either
    $f(x) \geq x$ or $g(x) \geq x$.  Without loss of generality we can
    assume $f( x)\ge x$, but then $f^\omega(x) = \TRUE$ and $(
    \sup_{h_i\in\{ f,g\}} \prod_{ i= 0}^\infty h_i)(x)\ge f^\omega( x)=
    \TRUE$.

    For the second claim, it suffices to prove that if for some $x$,
    $(gf^*)^\omega(x) =\TRUE$, then there is a sequence $k_0,k_1,\ldots$
    such that $( \prod_{ i= 0}^\infty( g f^{ k_i}))(x) = \TRUE$.
    Suppose that $(gf^*)^\omega(x) =\TRUE$. Then $x \neq \bot$ and
    $gf^*(x) \geq x$.  We know that $gf^*(x) = \sup_n gf^n(x)$. Thus
    $\sup_n gf^n(x) \geq x$.  If $f(x) \leq x$, then the sequence
    $(gf^n(x))_n$ is decreasing, so that $\sup_n gf^n(x) = g(x) \geq
    x$. We thus have $g^\omega(x) = \TRUE$, and $( \prod_{ i= 0}^\infty(
    g f^{ k_i}))(x) = \TRUE$ holds when $k_i = 0$ for all $i$.  If, on
    the other hand, $f(x) > x$, then the sequence $(f^n(x))_n$ is
    strictly increasing with $\sup_n f^n(x) =\infty$. Since clearly $g
    \ne \bbot$, it follows that there is some $n$ with $gf^n(x) \geq x$.
    It follows now that $(gf^n)^\omega = ( \prod_{ i= 0}^\infty( g
    f^n))(x) \geq x$. \qed
  \end{proof}

  \bigskip\noindent Now for the proof of the proposition, we show first
  that $(gf)^\omega = (fg)^\omega f$ for all $f, g\in \E$.  Let $f, g\in
  \E$ and $x\in[ 0, \infty]_\bot$.  The result is clear for $x= \bot$,
  so let $x\ne \bot$.  Assume first that $( g f)( x)\ge x$, then $( g
  f)^\omega( x)= \TRUE$.  But by monotonicity, also $( f g)( f( x))= f(
  gf)( x)\ge f( x)$, hence $(( f g)^\omega f)( x)= \TRUE$.  Now assume
  that, instead, $( g f)( x)< x$, then $( g f)^\omega( x)= \FALSE$.
  By~\eqref{eq:deriv1}, also $( f g)( f( x))< f( x)$, so that $(( f
  g)^\omega f)( x)= \FALSE$.

  Next we show that $(f \vee g)^\omega = f^\omega(gf^*)^* \vee
  (gf^*)^\omega$.  We have
  \begin{align*}
    f^\omega(gf^*)^*&=
    \sup_n f^\omega(g f^*)^n \\
    &= \sup_n\sup_{k_n}\cdots \sup_{k_1} f^\omega gf^{k_n}\cdots gf^{k_1} \\
    &= \sup_{n,k_1,\ldots,k_n} ( \cdots ffgf^{k_n}\cdots gf^{k_1}) \\
    \intertext{and}
    (gf^*)^\omega &= 
    \sup_{k_1,k_2,\ldots} (\cdots  
    gf^{k_2}gf^{k_1})\,.
  \end{align*}
  Using these equalities, we conclude that 
  \begin{align*}
    f^\omega(gf^*)^* \vee (gf^*)^\omega
    &= \sup_{n,k_1,\ldots,k_n} (\cdots ffgf^{k_n}\cdots gf^{k_1}) \vee
    \sup_{k_1,k_2,\ldots} (\cdots gf^{k_2}gf^{k_1})\\
    &= \sup_{h_i\in\{ f,g\}}(\cdots h_2h_1)\\
    &= (f \vee g)^\omega\,.
  \end{align*} 

  \vspace*{-3ex}\qed
\end{proofapp}

\begin{proofapp}[of Lemma~\ref{le:onetransition}]
  We define
  \begin{equation*}
    T'=\big\{\big( s, \sup\{ f\mid( s, f, s')\in T\}, s'\big)\mid s,
    s'\in S\big\}\,;
  \end{equation*}
  note that $\sup \emptyset= \bbot$.  The idea of the construction is
  that if two transitions $( s, f, s'),( s, f', s')\in T$ are available,
  then it does not change the automaton's behavior if we replace them by
  a transition labeled with $f\vee f'$.

  To see this, we note that the semantic graph of $( S, T')$ is
  \emph{above} the one of $( S, T)$ in the following sense: If $( s,
  x)\to[ g]( s', x')$, with $x, x'\ne \bot$, is a transition in the
  semantic graph of $( S, T')$, then
  \begin{align*}
    x' &= ( \sup\{ f\mid( s, f, s')\in T\})( x) \\
    &= \sup\{ f( x)\mid( s, f, s')\in T\}\,,
  \end{align*}
  hence there is also a transition $\smash{( s, x)\to[ f]( s', x')}$ in
  the semantic graph of $( S, T)$.  On the other hand, if $( s, x)\to[
  f]( s', x')$ is a transition in the semantic graph of $( S, T)$, then
  we have a transition $s\to[ g] s'$ in $( S, T')$ with $g\ge f$, hence
  a transition $( s, x)\to[ g]( s', x'')$ in the semantic graph of $( S,
  T')$ with $x''\ge x'$.

  It is now clear that any accepting run (in the reachability or in the
  B{\"u}chi sense) in $( S, T')$ is also available in $( S, T)$;
  likewise, any accepting run in $( S, T)$ has one which is above it in
  $( S, T')$.  \qed
\end{proofapp}

\begin{proofapp}[of Theorem~\ref{th:buchi}]
  The statement in the theorem is equivalent to asserting that there is
  an accepting state $j$ which is reachable from $(s_0, x_0)$ with
  output energy $x= \transpose I^j T^* I^{ s_0}( x_0)$, and at which
  there is a non-trivial loop which does not lose energy,
  \ie~$\transpose I^j T T^* I^j( x)\ge x$.

  We shall need a few technical lemmas in the proof.  For a global state
  $q=( s, x)$, we write $\state( q)= s$ and $\val( q)= x$ below.

  \begin{lemma}
    \label{lemma_remove_cycle}
    Let $q_0\to[ f_1] q_1 \to[ f_2]\cdots\to[ f_n] q_n$ be a run of
    $(S,T)$ such that there are $i<j$ for which $q_i\leadsto q_j$ is a
    loop with $\val( q_i)\ge \val( q_j)$.  Then there are global states
    $q_{ j+ 1}',\dots, q_n'$ with $\state( q_k')= \state( q_k)$ for all
    $k$, and such that
    \begin{equation*}
      q_0\to[ f_1] q_1 \to[ f_2]\cdots\to[ f_i] q_i\to[
      f_{j+1}] q'_{j+1}\to[ f_{j+2}]\cdots\to[ f_n] q'_n
    \end{equation*}
    is a run of $(S,T)$ with $\val(q_n')\ge \val(q_n)$.
  \end{lemma}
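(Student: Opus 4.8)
The plan is to build the new suffix greedily starting from $q_i$ and to show, by a single induction along $k$, that it stays pointwise above the original suffix. First I would record the only structural facts about energy functions that the argument needs: by~\eqref{eq:deriv1} every energy function is strictly increasing, hence monotone; and the domain of definition of an energy function is an interval unbounded above, hence upward closed, so that whenever $f$ is defined at a point $y$ it is also defined at every $y'\ge y$. Since $q_i\leadsto q_j$ is a loop we have $\state(q_i)=\state(q_j)$, so every transition available out of $q_j$ --- in particular the one labeled $f_{j+1}$ --- is also available out of $q_i$. This is what makes the splice legal.

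Introducing the abbreviation $q'_j:=q_i$, I would define $q'_k:=(\state(q_k),f_k(\val(q'_{k-1})))$ for $k=j+1,\dots,n$, deferring to the induction the verification that each $f_k$ is in fact defined at $\val(q'_{k-1})$, so that these are genuine global states. The state components of the $q'_k$ agree with those of the original run by construction, so the only two things to establish are well-definedness of the suffix and the final energy inequality.

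The heart of the proof is the claim that $\val(q'_k)\ge\val(q_k)$ for all $j\le k\le n$, which I would prove by induction on $k$. The base case $k=j$ is exactly the hypothesis $\val(q_i)\ge\val(q_j)$. For the step, assume $\val(q'_k)\ge\val(q_k)$; in the original run $f_{k+1}$ is defined at $\val(q_k)$, so by upward-closedness of its domain it is defined at the larger value $\val(q'_k)$ as well, whence $q'_{k+1}$ is a legitimate global state, and by monotonicity $\val(q'_{k+1})=f_{k+1}(\val(q'_k))\ge f_{k+1}(\val(q_k))=\val(q_{k+1})$. Taking $k=n$ yields $\val(q'_n)\ge\val(q_n)$, which is the assertion.

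The argument is essentially routine, and the one place that needs care --- also the single place where the defining properties of energy functions are invoked --- is the well-definedness along the new suffix: because each $f_k$ has an upward-closed domain, raising its input from $\val(q_k)$ to $\val(q'_k)$ can never push us out of the domain, so the spliced suffix never gets stuck. I expect no genuine obstacle beyond keeping this bookkeeping, and the identification $q'_j=q_i$, straight.
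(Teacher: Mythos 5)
Your proof is correct and follows essentially the same route as the paper's: an induction along the spliced suffix showing $\val(q'_k)\ge\val(q_k)$, driven by property~\eqref{eq:deriv1}. You are in fact slightly more careful than the paper, which leaves the well-definedness of the new suffix (via the upward-closed domains of energy functions) implicit.
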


  \begin{proof}%[of Lemma~\ref{lemma_remove_cycle}]
    By $\val( q_i)\ge \val( q_j)$ and~\eqref{eq:deriv1}, also $\val( q_{
      j+ 1}')\ge \val( q_{ j+ 1})$.  Inductive application
    of~\eqref{eq:deriv1} finishes the proof. \qed
  \end{proof}

  \begin{lemma}
    \label{lemma_iterate_cycle}
    Let $q_0\to[ f_1] q_1 \to[ f_2]\cdots\to[ f_n] q_n$ be a run of
    $(S,T)$ such that there are $i<j$ for which $q_i\leadsto q_j$ is a
    loop with $\val( q_j)> \val( q_i)$.  Then there are $\Delta> 0$ and,
    for each $k\geq 1$, global states $q_{ i+ 1}^l,\dots, q_j^l$ for all
    $1\le l\le k$, and $q^k_{j+1},\dots,q^k_n$, with $\state( q_m^l)=
    \state( q_m)$ for each $k$, $l$, such that
    \begin{equation*}
      q_0\to[ f_1]\cdots\to[ f_i] q_i \Big( \to[ f_{i+1}] q_{i+1}^l\to[
      f_{i+2}] \cdots\to[ f_j] q^l_j \Big)_{1\leq l\leq k}
      \to[ f_{j+1}] q^k_{j+1}\to[ f_{j+2}]\cdots\to[ f_n] q^k_n
    \end{equation*}
    is a run of $(S,T)$, and $\val(q^{ k+ 1}_n) \ge \val(q^k_n)+
    \Delta$.
  \end{lemma}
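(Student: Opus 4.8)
The plan is to take $\Delta = \val(q_j) - \val(q_i)$, which is positive by hypothesis, and to exploit the fact that one traversal of the loop gains at least $\Delta$ units of energy whenever it is entered with energy at least $\val(q_i)$. The quantitative heart of the argument is a single application of~\eqref{eq:deriv1}; the rest is bookkeeping to make sure nothing gets disabled.

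First I would package the loop and the tail into combined energy functions. Let $g = f_j\cdots f_{i+1}$ be the energy function of the loop and $h = f_n\cdots f_{j+1}$ that of the tail; both lie in $\E$, since $\E$ is closed under composition. Because the original run is valid, $g$ is defined at $\val(q_i)$ with $g(\val(q_i)) = \val(q_j)$, and $h$ is defined at $\val(q_j)$ with $h(\val(q_j)) = \val(q_n)$. The key observation is that for every $x \ge \val(q_i)$ at which $g$ is defined, \eqref{eq:deriv1} gives
\begin{equation*}
  g(x) \ge g(\val(q_i)) + x - \val(q_i) = \val(q_j) + x - \val(q_i) = x + \Delta.
\end{equation*}

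Next I would iterate. Set $y_0 = \val(q_i)$ and $y_l = g(y_{l-1})$ for $l \ge 1$; the global states $q^l_{i+1},\dots,q^l_j$ of the $l$-th traversal are obtained by applying $f_{i+1},\dots,f_j$ in turn to $y_{l-1}$ (reusing the state labels $\state(q_{i+1}),\dots,\state(q_j)$), so that $\val(q^l_j) = y_l$. An induction using the displayed inequality shows $y_l \ge y_0 + l\Delta$; in particular the sequence is increasing, so every $y_{l-1} \ge y_0$ lies in the (upward closed) domain of $g$, which makes each traversal well defined. The tail states $q^k_{j+1},\dots,q^k_n$ are then produced by applying $f_{j+1},\dots,f_n$ to $y_k$; since $y_k \ge \val(q_j)$ for $k \ge 1$ and $h$ is defined at $\val(q_j)$, upward closedness of the definition intervals again guarantees that $h$ is defined at $y_k$, so $\val(q^k_n) = h(y_k)$ is well defined.

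Finally I would read off the energy gain. Applying \eqref{eq:deriv1} to $h$ at $y_k \le y_{k+1}$, and using $y_{k+1} - y_k \ge \Delta$ (which follows from the displayed inequality, as $y_k \ge y_0$), I obtain
\begin{equation*}
  \val(q^{k+1}_n) = h(y_{k+1}) \ge h(y_k) + (y_{k+1} - y_k) \ge \val(q^k_n) + \Delta,
\end{equation*}
as required. The only delicate point — and the step I expect to need the most care — is verifying that raising the entry energy never disables a transition along the loop or the tail; this is precisely where monotonicity and the upward closedness of the definition intervals of energy functions are used, and it is what justifies reusing the same state sequence for every iteration.
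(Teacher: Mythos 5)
Your proof is correct and follows essentially the same route as the paper's (much terser) argument: both take $\Delta=\val(q_j)-\val(q_i)$ and apply~\eqref{eq:deriv1} inductively to show each traversal of the loop gains at least $\Delta$ and that this gain propagates through the tail. You merely spell out the details the paper leaves implicit, in particular the upward closedness of definition intervals that keeps every iteration enabled.
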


  \begin{proof}%[of Lemma~\ref{lemma_iterate_cycle}]
    Let $\Delta= \val( q_j)- \val( q_i)> 0$.  By~\eqref{eq:deriv1},
    $\val( q_j^{ l+ 1})- \val( q_j^l)\ge \Delta$ for all $l$, hence also
    $\val(q^{ k+ 1}_n)- \val(q^k_n)\ge \Delta$. \qed
  \end{proof}

  We say that a loop $q_i\leadsto q_j$ with $\val( q_j)> \val( q_i)$ as
  in Lemma~\ref{lemma_iterate_cycle} is \emph{energy producing};
  similarly, a loop with $\val( q_j)< \val( q_i)$ will be called
  \emph{energy consuming}.  As a corollary of the last lemma, the energy
  value at $\state( q_n)$ can be pushed arbitrarily high: for any $M\in
  \Realnn$, there exists $k$ such that the $k$-iteration $q_0\leadsto
  q_i(\to q_{ i+ 1}^l\to\cdots\to q_j^l)_{ 1\le 1\le k}\to q_{ j+
    1}^k\leadsto q_n^k$ has $\val( q_n^k)\ge M$.  We also remark that
  similar results are available for infinite runs; also with these we
  can remove loops which are not energy producing and iterate loops
  which are.

  \begin{lemma}
    \label{le:inf-decreasing}
    Let $f\in \E$ and $x\in \Realnn$. If $f( x)< x$, then $\lim_{ n\to
      \infty} f^n( x)= \bot$.
  \end{lemma}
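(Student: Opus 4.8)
The plan is to extract from the slope condition~\eqref{eq:deriv1} a \emph{uniform} per-step energy loss, and then observe that a non-negative quantity cannot be decreased by a fixed positive amount infinitely often without dropping below $0$, \ie~becoming $\bot$. The key reformulation is that~\eqref{eq:deriv1} says exactly that the \emph{defect} $z\mapsto f(z)-z$ is non-decreasing: for $z_1\le z_2$ it rearranges to $f(z_2)-z_2\ge f(z_1)-z_1$. Hence, if $f(x)<x$ for a real $x$, setting $c=x-f(x)>0$, every real $z\le x$ with $f(z)\neq\bot$ satisfies $f(z)\le z-c$, so one application of $f$ to any argument at or below $x$ costs at least $c$ units of energy.

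First I would record the base case $f(x)=x-c$ and then prove by induction on $n$ that $f^n(x)\le x-nc$. In the inductive step I apply~\eqref{eq:deriv1} with $x_1=f^n(x)\le x=x_2$, which gives $f(f^n(x))\le f(x)-x+f^n(x)=f^n(x)-c$, and the hypothesis $f^n(x)\le x-nc$ then yields $f^{n+1}(x)\le x-(n+1)c$. Since $c>0$, we have $x-nc\to-\infty$; but every iterate lives in $[0,\infty]_\bot$, whose only element strictly below every non-negative real is $\bot$. Therefore there is some $N$ with $f^n(x)=\bot$ for all $n\ge N$, and as $f(\bot)=\bot$ the sequence is thereafter constantly $\bot$, so $\lim_{n\to\infty}f^n(x)=\bot$.

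The main obstacle is not the arithmetic but the careful bookkeeping at the extended values $\bot$ and $\infty$, which must be controlled to keep the rearrangement of~\eqref{eq:deriv1} legitimate. If $f^n(x)=\bot$ at some stage, the claimed bound holds trivially and we are immediately finished, so I may assume each $f^n(x)$ is a genuine non-negative real up to that point. I also need to exclude $f(f^n(x))=\infty$ before rearranging: were it $\infty$, then~\eqref{eq:deriv1} would force $f(x)=\infty$, contradicting $f(x)=x-c<\infty$. With these two observations all the subtractions are between finite reals and the estimate goes through. This is precisely why the \emph{uniform} decrement $c$---rather than mere strict monotone decrease, which on its own could converge to a non-negative limit---is what forces the iterates down to $\bot$.
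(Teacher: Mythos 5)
Your proof is correct and follows essentially the same route as the paper's: extract the uniform decrement $c=x-f(x)>0$ from~\eqref{eq:deriv1}, show $f^{n+1}(x)\le f^n(x)-c$ by induction, and conclude the iterates must hit $\bot$. Your extra bookkeeping around $\bot$ and $\infty$ is a harmless elaboration of what the paper leaves implicit.
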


  \begin{proof}%[of Lemma~\ref{le:inf-decreasing}]
    We have $x- f( x)= M> 0$.  Using~\eqref{eq:deriv1}, we see that $f^{
      n+ 1}( x)\le f^n( x)- M$ for all $n\in \Nat$.  Hence $( f^n( x))_{
      n\in \Nat}$ decreases without bound, so that there must be $N\in
    \Nat$ such that $f^n( x)= \bot$ for all $n\ge N$.  \qed
  \end{proof}

  \bigskip\noindent Now for the proof of the theorem, the backwards
  direction is clear, as the cycle at $j$ contains an accepting state
  and can be iterated indefinitely because of $\transpose I^j T T^* I^j(
  x)\ge x$.  For the forward direction, let $\rho$ be an infinite run
  from $( s_0, x_0)$ which visits $F$ infinitely often, then $\rho$ must
  contain an accepting state $s\le k$ infinitely often, so we can write
  \begin{equation*}
    \xymatrix@R=1ex{%
      \rho_{ \vphantom0} : s_0 \ar@{~>}[r]_(.6){ \rho_0} & s
      \ar@{~>}[r]_{ \rho_1} & 
      s \ar@{~>}[r]_{ \rho_2} & s \cdots
    }\,.
  \end{equation*}
  Now inductively for each $i\ge 1$, we modify $\rho$ as follows:
  \begin{itemize}
  \item If $\rho_i$ is not energy consuming, we can use
    Lemma~\ref{lemma_iterate_cycle} to replace $\rho$ by the run
    $\rho_0\cdots \rho_i \rho_i^1 \rho_i^2\cdots$, where each $\rho_i^j$
    is a loop over the same cycle as $\rho_i$.  We have constructed a
    run which consists of a finite prefix and a loop.
  \item In case $\rho_i= \mu_1 \mu_2 \mu_3$ strictly contains a loop
    $\mu_2$ which is energy producing, we can use
    Lemma~\ref{lemma_iterate_cycle} to iterate $\mu_2$ sufficiently
    often so that the combined run $\tilde \rho_i= \mu_1 \mu_2
    \mu_2^1\cdots \mu_2^N \mu_3'$, where each $\mu_2^j$ is a loop over
    the same cycle as $\mu_2$ and $\mu_3'$ over the same path as
    $\mu_3$, is not energy consuming.  Using
    Lemma~\ref{lemma_iterate_cycle} again, we can now replace $\rho$ by
    the run $\rho_0\cdots \rho_{ i- 1} \tilde \rho_i \tilde \rho_i^1
    \tilde \rho_i^2\cdots$, where each $\tilde \rho_i^j$ is a loop over
    the same cycle as $\tilde \rho_i$.  This run consists of a finite
    prefix and a loop.
  \item Any loops properly contained in $\rho_i$ which are not energy
    producing can be removed by Lemma~\ref{lemma_remove_cycle}.
  \end{itemize}

  Assume, for the sake of contradiction, that the above induction never
  finishes.  Then we have constructed a run $\rho'= \rho_0 \rho_1'
  \rho_2'\cdots$ in which each $\rho_i'$ is an energy consuming simple
  loop from $s$ to $s$.  There are only finitely many simple cycles from
  $s$ to $s$, thus one of them appears infinitely often as a cycle
  underlying a simple loop in $\rho'$.  Call this cycle $\pi$ and let
  $\sigma_1, \sigma_2,\dots$ be the loops in $\rho'$ over it.

  We can hence write $\rho'= \rho_0 \mu_1 \sigma_1 \mu_2
  \sigma_2\cdots$.  All loops $\mu_i$ are energy consuming, hence can be
  removed from $\rho'$ by Lemma~\ref{lemma_remove_cycle}.  We have
  constructed an infinite run $\rho_0 \sigma_1' \sigma_2'\cdots$ in $(
  S, T)$ in which each $\sigma_i'$ is energy consuming.  The sequence of
  energy values after each iteration $\sigma_i'$ is given by $( f_\pi^n(
  f_{ \rho_0}( x_0)))_{ n\in \Nat}$, but by
  Lemma~\ref{le:inf-decreasing} and as $f_\pi( f_{ \rho_0}( x_0))< f_{
    \rho_0}( x_0)$ ($f_\pi( f_{ \rho_0}( x_0))$ is the energy value
  after the first loop $\sigma_1'$), $\lim_{ n\to \infty} f_\pi^n( f_{
    \rho_0}( x_0))= f_\pi^\omega( f_{ \rho_0}( x_0))= \bot$, a
  contradiction. \qed
\end{proofapp}

\begin{figure}[tbp]
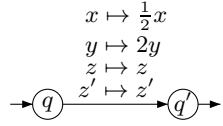

  \centering
  \begin{gpicture}(28,16)(0,-16)
    \node[Nmarks=i,ilength=3,Nw=4,Nh=4](q)(5,-14){$q$}
    \node[Nmarks=f,flength=3,Nw=4,Nh=4](q2)(23,-14){$q'$}
    
    \drawedge(q,q2){}
    \put(10,-3){\footnotesize{$x\mapsto \frac{1}{2} x$}}
    \put(10,-7){\footnotesize{$y\mapsto 2 y$}}
    \put(10,-10){\footnotesize{$z\mapsto z$}}
    \put(9,-13){\footnotesize{$z'\mapsto z'$}}
  \end{gpicture}
  \caption{Module for $(q,c_1\texttt{++},q')$}
  \label{figure_module_increment_c1}
\end{figure}

\begin{figure}[tbp]
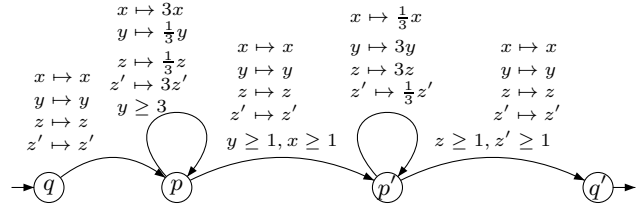

  \centering
  \begin{gpicture}(96,27)(0,-27)
    \node[Nmarks=i,ilength=3,Nw=4,Nh=4](q)(5,-25){$q$}
    \node[Nw=4,Nh=4](p)(22,-25){$p$}
    \node[Nw=4,Nh=4](p2)(50,-25){$p'$}
    \node[Nmarks=f,flength=3,Nw=4,Nh=4](q2)(78,-25){$q'$}
    
    \drawedge[curvedepth=4](q,p){}
    \drawloop(p){}
    \drawedge[curvedepth=4](p,p2){\scriptsize{$y\geq 1, x\geq 1$}}
    \drawloop(p2){}
    \drawedge[curvedepth=4](p2,q2){\scriptsize{$z\geq 1,z'\geq 1$}}
    
    \put(3,-11){\scriptsize{$x\mapsto x$}}
    \put(3,-14){\scriptsize{$y\mapsto y$}}
    \put(3,-17){\scriptsize{$z\mapsto z$}}
    \put(2,-20){\scriptsize{$z'\mapsto z'$}}
    
    \put(14,-2){\scriptsize{$x\mapsto 3 x$}}
    \put(14,-5){\scriptsize{$y\mapsto \frac{1}{3}y$}}
    \put(14,-9){\scriptsize{$z\mapsto \frac{1}{3} z$}}
    \put(13,-12){\scriptsize{$z'\mapsto 3z'$}}
    \put(14,-15){\scriptsize{$y\geq 3$}}
    
    \put(30,-7){\scriptsize{$x\mapsto x$}}
    \put(30,-10){\scriptsize{$y\mapsto y$}}
    \put(30,-13){\scriptsize{$z\mapsto z$}}
    \put(29,-16){\scriptsize{$z'\mapsto z'$}}
    
    \put(45,-3){\scriptsize{$x\mapsto \frac{1}{3} x$}}
    \put(45,-7){\scriptsize{$y\mapsto 3y$}}
    \put(45,-10){\scriptsize{$z\mapsto 3 z$}}
    \put(45,-13){\scriptsize{$z'\mapsto \frac{1}{3} z'$}}
    
    \put(65,-7){\scriptsize{$x\mapsto x$}}
    \put(65,-10){\scriptsize{$y\mapsto y$}}
    \put(65,-13){\scriptsize{$z\mapsto z$}}
    \put(64,-16){\scriptsize{$z'\mapsto z'$}}
  \end{gpicture}
  \caption{Module for $(q,c_1\texttt{=0?},q')$}
  \label{figure_module_zero_test_c1}
\end{figure}

\begin{proofapp}[of Theorem~\ref{th:undec4dim}]
  We reduce from the halting problem for 2-counter machines.  We use two
  energy variables $x,y$ to encode the values of the counters $c_1$ and
  $c_2$, and we use two additional energy variables $z,z'$ for storing
  temporary information needed for encoding zero tests and
  decrementation operations of the 2-counter machine.  The initial value
  of all energy variables is $1$.  Let $(q, c_1\texttt{++}, q')$ be a
  transition of the 2-counter machine that increments the value of the
  first counter.  The module simulating this transition is shown in
  Fig.~\ref{figure_module_increment_c1}.  The intuition is that $x$ and
  $y$ encode the counter values as $x=1/(2^{c_1}3^{c_2})$ and
  $y=2^{c_1}3^{c_2}$, respectively.  Hence $x$ is divided and $y$ is
  multiplied by $2$ to encode the incrementation of $c_1$.  Likewise, an
  incrementation of $c_2$ is encoded by dividing $x$ and multiplying $y$
  by $3$.  Let $(q,c_1\texttt{=0}?, q')$ be a zero test transition for
  $c_1$.  The corresponding module is shown in
  Fig.~\ref{figure_module_zero_test_c1}.  Note that in order to take the
  transition from $p$ to $p'$, the values of both $x$ and $y$ have to be
  greater than or equal to $1$.  This is the case if and only if we loop
  in $p$ exactly $c_2$ times and $c_1=0$.  In $p'$ we have to loop for
  the same number of times as in $p$ to restore the original values of
  $x$ and $y$. For this we use the information stored in $z$ and $z'$
  together with the lower bound restrictions at the transition from $p'$
  to $q'$.  For the simulation of zero test transitions for $c_2$, we
  replace $\frac{1}{3}$ and $3$ by $\frac{1}{2}$ and $2$, respectively.
  The idea for encoding a decrementation operation of $c_1$ and $c_2$ is
  similar.  \qed
\end{proofapp}

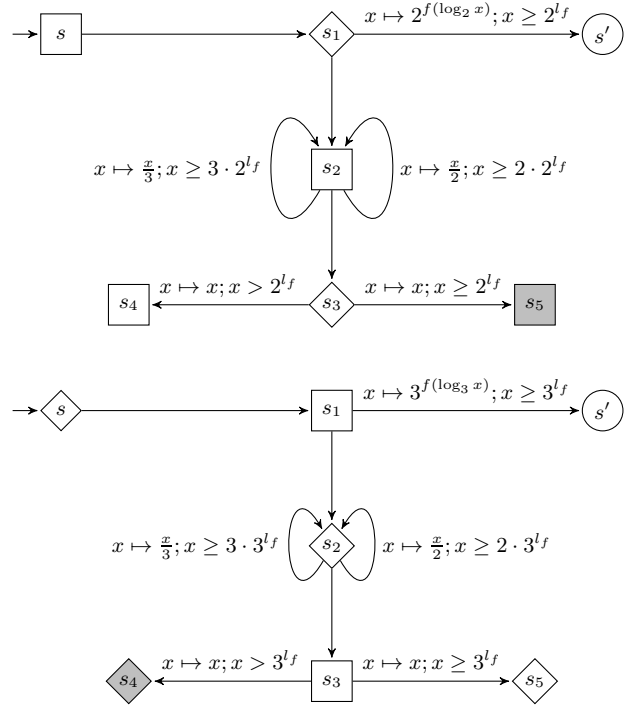
\begin{figure}[t]
  \centering
  \begin{tikzpicture}[->,>=stealth',shorten >=1pt,auto,node
    distance=2.0cm,initial text=,scale=0.9,transform shape]
    \tikzstyle{every node}=[font=\small]
    \tikzstyle{every state}=[fill=white,shape=circle,inner
    sep=.5mm,minimum size=6mm,pB/.style=diamond,pA/.style=rectangle]
    \node[state,initial,pA] (s) at (-2,0) {$s$};
    \node[state,pB] (s1) at (2,0) {$s_1$};
    \node[state] (s') at (6,0) {$s'$};
    \node[state,pA] (s2) at (2,-2) {$s_2$};
    \node[state,pB] (s3) at (2,-4) {$s_3$};
    \node[state,pA] (s4) at (-1,-4) {$s_4$};
    \node[state,pA,fill=lightgray] (s5) at (5,-4) {$s_5$};
    \path (s) edge (s1);
    \path (s1) edge node{$x\mapsto 2^{ f( \log_2 x)}; x\ge 2^{
        l_f}$} (s');
    \path (s2) edge[out=240,in=120,loop] node{$x\mapsto \frac x 3;
      x\ge 3\cdot 2^{ l_f}$} (s2);
    \path (s2) edge[out=300,in=60,loop] node[right]{$x\mapsto \frac x
      2; x\ge 2\cdot 2^{ l_f}$} (s2);
    \path (s2) edge (s3);
    \path (s3) edge node{$x\mapsto x; x\ge 2^{ l_f}$} (s5);
    \path (s1) edge (s2);
    \path (s3) edge node[above]{$x\mapsto x; x> 2^{
        l_f}$} (s4);
  \end{tikzpicture}
  
  \vspace*{4ex}
  
  \begin{tikzpicture}[->,>=stealth',shorten >=1pt,auto,node
    distance=2.0cm,initial text=,scale=.9,transform shape]
    \tikzstyle{every node}=[font=\small]
    \tikzstyle{every state}=[fill=white,shape=circle,inner
    sep=.5mm,minimum size=6mm,pB/.style=diamond,pA/.style=rectangle]
    \node[state,initial,pB] (s) at (-2,0) {$s$};
    \node[state,pA] (s1) at (2,0) {$s_1$};
    \node[state] (s') at (6,0) {$s'$};
    \node[state,pB] (s2) at (2,-2) {$s_2$};
    \node[state,pA] (s3) at (2,-4) {$s_3$};
    \node[state,pB,fill=lightgray] (s4) at (-1,-4) {$s_4$};
    \node[state,pB] (s5) at (5,-4) {$s_5$};
    \path (s) edge (s1);
    \path (s1) edge node{$x\mapsto 3^{ f( \log_3 x)}; x\ge 3^{
        l_f}$} (s');
    \path (s2) edge[out=240,in=120,loop] node{$x\mapsto \frac x 3;
      x\ge 3\cdot 3^{ l_f}$} (s2);
    \path (s2) edge[out=300,in=60,loop] node[right]{$x\mapsto \frac x
      2; x\ge 2\cdot 3^{ l_f}$} (s2);
    \path (s2) edge (s3);
    \path (s3) edge node{$x\mapsto x; x\ge 3^{ l_f}$} (s5);
    \path (s1) edge (s2);
    \path (s3) edge node[above]{$x\mapsto x; x> 3^{
        l_f}$} (s4);
  \end{tikzpicture}
  \caption{Conversion of two types of edges in $( S, T)$.  Top: an edge
    $( s,( f, \id), s')$ from a player-$A$ state $s$; bottom: an edge $(
    s,( \id, f), s')$ from a player-$B$ state $s$.  Player-$A$ states
    are depicted using squares, player-$B$ states are diamonds.
    Accepting states have a gray background color.  The ownership of
    state $s'$ is unchanged.}
  \label{fi:convert2to1-A1}
\end{figure}

\begin{proofapp}[of Theorem~\ref{th:flatgame}]
  We show a reduction from reachability games on $2$-dimensional
  $\Eint$-automata to reachability games on $1$-dimensional
  $\Eflatpw$-automata.  Let $( S, T)$ be a $2$-dimensional
  $\Eint$-automaton.  By inserting extra states (and transitions) if
  necessary, we can assume that for any $( s,( f, g), s')\in T$, either
  $f= \id$ with $l_f= 0$, or $g= \id$ with $l_g= 0$.  We build an energy
  automaton $( S', T')$.

  Let $( s,( f, \id), s')\in T$ be a player-$A$ transition (\ie~$s\in
  S_A$) in $( S, T)$ (with lower bound $l_f$ as usual), then we model
  this in $( S', T')$ using $s$, $s'$ and the following new states and
  transitions; see Figure~\ref{fi:convert2to1-A1} for a pictorial
  description.
  \begin{itemize}
  \item player-$A$ states: $s_2$, $s_4$, $s_5$ (accepting); player-$B$
    states: $s_1$, $s_3$
  \item transitions:
    \begin{itemize}
    \item $( s,[ x\mapsto x; x\ge 0], s_1)$; $( s_1,[ x\mapsto 2^{ f(
        \log_2 x)}; x\ge 2^{ l_f}], s')$
    \item $( s_1,[ x\mapsto x; x\ge 0], s_2)$
    \item $( s_2,[ x\mapsto \frac x 3; x\ge 3\cdot 2^{ l_f}], s_2)$; $(
      s_2,[ x\mapsto \frac x 2; x\ge 2\cdot 2^{ l_f}], s_2)$
    \item $( s_2,[ x\mapsto x; x\ge 0], s_3)$
    \item $( s_3,[ x\mapsto x; x> 2^{ l_f}], s_4)$; $( s_3,[ x\mapsto x;
      x\ge 2^{ l_f}], s_5)$
    \end{itemize}
  \end{itemize}
  Note that $s_4$ is a deadlock state, hence player~$A$ loses the
  reachability game if $s_4$ is reached.  Similarly, she wins if $s_5$
  is reached.

  The intuition is that the new energy variable $x$ encodes the two old
  ones as $x= 2^{ x_1} 3^{ x_2}$.  If player~$A$ wants to bring $( S',
  T')$ from $s$ to $s'$, and commits to this by taking the transition
  $s\to s_1$, she may be interrupted by player~$B$ taking the $s_1\to
  s_2$ transition.  Here player~$A$ has to prove that $x_1$ was really
  $\ge l_f$, by using the loops at $s_2$ to bring $x$ to the precise
  value $2^{ l_f}$.  If she manages this, then player~$B$ has only the
  $s_3\to s_5$ transition available in $s_3$, hence player~$A$ wins.
  Otherwise, player~$B$ wins.

  The conversions of other types of transitions are similar.  One can
  easily see that player~$A$ can reach a state in $F$ in the original
  energy automaton $( S, T)$ if, and only if, she can reach a state in
  $F$, or one of the new accepting states, in the new automaton $( S',
  T')$.

  We miss to argue that all energy functions in $( S', T')$ are
  piecewise affine.  Looking at the defined modules, we see that this is
  the case except perhaps for the functions defined as $g_2( x)= 2^{ f(
    \log_2 x)}$ and $g_3( x)= 3^{ f( \log_3 x)}$.  However, $f$ is an
  integer update function, so that $f( x)= x+ k$ for some $k\in \Int$;
  hence $g_2( x)= 2^k x$ and $g_3( x)= 3^k x$, which are indeed
  piecewise affine.  \qed
\end{proofapp}

\end{document}